\newtheorem{definition}{Definition}
\newtheorem{proposition}[definition]{Proposition}
\newtheorem{lemma}[definition]{Lemma}
\newtheorem{theorem}[definition]{Theorem}
\newtheorem{corollary}[definition]{Corollary}
\newtheorem{conjecture}[definition]{Conjecture}
\newtheorem{remark}[definition]{Remark}
\newtheorem{example}[definition]{Example}
\newtheorem{question}[definition]{Question}
\def\squareforqed{\hbox{\rlap{$\sqcap$}$\sqcup$}}
\def\qed{\ifmmode\squareforqed\else{\unskip\nobreak\hfil
\penalty50\hskip1em\null\nobreak\hfil\squareforqed
\parfillskip=0pt\finalhyphendemerits=0\endgraf}\fi}
\def\endenv{\ifmmode\;\else{\unskip\nobreak\hfil
\penalty50\hskip1em\null\nobreak\hfil\;
\parfillskip=0pt\finalhyphendemerits=0\endgraf}\fi}
\newenvironment{proof}{\noindent \textbf{{Proof.~} }}{\qed}
\def\Dbar{\leavevmode\lower.6ex\hbox to 0pt
{\hskip-.23ex\accent"16\hss}D}
\def\url@leostyle{%
  \@ifundefined{selectfont}{\def\UrlFont{\sf}}{\def\UrlFont{\small\ttfamily}}}
\def\bcj{\begin{conjecture}}
\def\ecj{\end{conjecture}}
\def\bcr{\begin{corollary}}
\def\ecr{\end{corollary}}
\def\bd{\begin{definition}}
\def\ed{\end{definition}}
\def\bea{\begin{eqnarray}}
\def\eea{\end{eqnarray}}
\def\bem{\begin{enumerate}}
\def\eem{\end{enumerate}}
\def\bex{\begin{example}}
\def\eex{\end{example}}
\def\bim{\begin{itemize}}
\def\eim{\end{itemize}}
\def\bl{\begin{lemma}}
\def\el{\end{lemma}}
\def\bpf{\begin{proof}}
\def\epf{\end{proof}}
\def\bpp{\begin{proposition}}
\def\epp{\end{proposition}}
\def\bqu{\begin{question}}
\def\equ{\end{question}}
\def\br{\begin{remark}}
\def\er{\end{remark}}
\def\bt{\begin{theorem}}
\def\et{\end{theorem}}
\def\btb{\begin{tabular}}
\def\etb{\end{tabular}}
\newcommand{\nc}{\newcommand}
\def\a{\alpha}
\def\b{\beta}
\def\g{\gamma}
\def\d{\delta}
\def\x{\xi}
\def\r{\rho}
\def\s{\sigma}
\def\ph{\varphi}
\def\ps{\psi}
\def\D{\Delta}
\def\T{\Theta}
\def\L{\Lambda}
\def\S{\Sigma}
 \nc{\bA}{{\bf A}} \nc{\bB}{{\bf B}} \nc{\bC}{{\bf C}}
 \nc{\bD}{{\bf D}} \nc{\bE}{{\bf E}} \nc{\bF}{{\bf F}}
 \nc{\bG}{{\bf G}} \nc{\bH}{{\bf H}} \nc{\bI}{{\bf I}}
 \nc{\bJ}{{\bf J}} \nc{\bK}{{\bf K}} \nc{\bL}{{\bf L}}
 \nc{\bM}{{\bf M}} \nc{\bN}{{\bf N}} \nc{\bO}{{\bf O}}
 \nc{\bP}{{\bf P}} \nc{\bQ}{{\bf Q}} \nc{\bR}{{\bf R}}
 \nc{\bS}{{\bf S}} \nc{\bT}{{\bf T}} \nc{\bU}{{\bf U}}
 \nc{\bV}{{\bf V}} \nc{\bW}{{\bf W}} \nc{\bX}{{\bf X}}
 \nc{\bZ}{{\bf Z}}
\def\bC{{\mathbb{C}}}
\def\bR{{\mathbb{R}}}
\nc{\cA}{{\cal A}} \nc{\cB}{{\cal B}} \nc{\cC}{{\cal C}}
\nc{\cD}{{\cal D}} \nc{\cE}{{\cal E}} \nc{\cF}{{\cal F}}
\nc{\cG}{{\cal G}} \nc{\cH}{{\cal H}} \nc{\cI}{{\cal I}}
\nc{\cJ}{{\cal J}} \nc{\cK}{{\cal K}} \nc{\cL}{{\cal L}}
\nc{\cM}{{\cal M}} \nc{\cN}{{\cal N}} \nc{\cO}{{\cal O}}
\nc{\cP}{{\cal P}} \nc{\cQ}{{\cal Q}} \nc{\cR}{{\cal R}}
\nc{\cS}{{\cal S}} \nc{\cT}{{\cal T}} \nc{\cU}{{\cal U}}
\nc{\cV}{{\cal V}} \nc{\cW}{{\cal W}} \nc{\cX}{{\cal X}}
\nc{\cZ}{{\cal Z}}
\nc{\hA}{{\hat{A}}} \nc{\hB}{{\hat{B}}} \nc{\hC}{{\hat{C}}}
\nc{\hD}{{\hat{D}}} \nc{\hE}{{\hat{E}}} \nc{\hF}{{\hat{F}}}
\nc{\hG}{{\hat{G}}} \nc{\hH}{{\hat{H}}} \nc{\hI}{{\hat{I}}}
\nc{\hJ}{{\hat{J}}} \nc{\hK}{{\hat{K}}} \nc{\hL}{{\hat{L}}}
\nc{\hM}{{\hat{M}}} \nc{\hN}{{\hat{N}}} \nc{\hO}{{\hat{O}}}
\nc{\hP}{{\hat{P}}} \nc{\hR}{{\hat{R}}} \nc{\hS}{{\hat{S}}}
\nc{\hT}{{\hat{T}}} \nc{\hU}{{\hat{U}}} \nc{\hV}{{\hat{V}}}
\nc{\hW}{{\hat{W}}} \nc{\hX}{{\hat{X}}} \nc{\hZ}{{\hat{Z}}}
\def\I{\mathop{\rm I}}
\def\lin{\mathop{\rm span}}
\def\max{\mathop{\rm max}}
\def\min{\mathop{\rm min}}
\def\tr{\mathop{\rm Tr}}
\def\GL{{\mbox{\rm GL}}}
\def\SL{{\mbox{\rm SL}}}
\def\SU{{\mbox{\rm SU}}}
\def\Un{{\mbox{\rm U}}}
\def\es{\emptyset}
\def\ox{\otimes}
\def\sue{\subseteq}
\def\we{\wedge}
\def\bwe{\mathop{\wedge}\nolimits}
\newcommand{\bra}[1]{\langle#1|}
\newcommand{\ket}[1]{|#1\rangle}
\newcommand{\proj}[1]{| #1\rangle\!\langle #1 |}
\newcommand{\braket}[2]{\langle#1|#2\rangle}
\newcommand{\norm}[1]{\lVert#1\rVert}
\newcommand{\abs}[1]{|#1|}
\def\Dbar{\leavevmode\lower.6ex\hbox to 0pt
{\hskip-.23ex\accent"16\hss}D}
\begin{document}
\title{Canonical form of three-fermion pure-states with six single particle states}

\author{Lin Chen${}^{1,2,3}$,
  Dragomir {\v{Z} \Dbar}okovi{\'c}${}^{1,2}$,
  Markus Grassl$^{3}$,
  Bei Zeng${}^{2,4}$\\
\small ${}^{1}$ Department of Pure Mathematics, University of Waterloo, Waterloo, Ontario, Canada\\
\small ${}^{2}$ Institute for Quantum Computing, University of Waterloo, Waterloo, Ontario, Canada\\
\small ${}^{3}$ Centre for Quantum Technologies, National University of Singapore, Singapore\\
\small ${}^{4}$ Department of Mathematics \&  Statistics, University of Guelph, Guelph, Ontario, Canada}

\date{June 6, 2013}

\maketitle

\begin{abstract}
We construct a canonical form for pure states in $\bwe^3(\bC^6)$, the
three-fermion system with six single particle states, under local
unitary (LU) transformations, i.e., the unitary group $\Un(6)$. We
also construct a minimal set of generators of the algebra of
polynomial $\Un(6)$-invariants on $\bwe^3(\bC^6)$. It turns out that
this algebra is isomorphic to the algebra of polynomial LU-invariants
of three-qubits which are additionally invariant under qubit
permutations.  As a consequence of this surprising fact, we deduce that
there is a one-to-one correspondence between the $\Un(6)$-orbits of
pure three-fermion states in $\bwe^3(\bC^6)$ and the LU orbits of pure
three-qubit states when qubit permutations are allowed.
As an important byproduct, we obtain a new canonical form for pure
three-qubit states under LU transformations
$\Un(2)\times\Un(2)\times\Un(2)$ (no qubit permutations allowed).
\end{abstract}

\tableofcontents

\section{Introduction}
\label{sec:intro}

The underlying resource, which makes quantum information processing
more powerful than its classical counterpart, is quantum
entanglement.  The quantum correlations among a number of players can
be much stronger than any classical correlation.  Quantum
entanglement, as a correlation property, should be preserved in
certain ways when only `local' operations are performed.  One of the
central tasks of entanglement theory is to classify entanglement types
with respect to local operations.

More explicitly, for an $N$-particle state $\ket{\psi}$ in the
Hilbert space $\otimes^N(\bC^M)$, a local operation of the form
$\bigotimes_{i=1}^N A_i$ acting on $\ket{\psi}$ preserves certain
entanglement properties of $\ket{\psi}$.  The two most studied cases
are (i) $A_i\in\Un(M)$, i.e. all $A_i$ are unitary, where all the
entanglement properties of $\ket{\psi}$ are preserved; (ii)
$A_i\in\GL(M,\bC)$, where entanglement properties of $\ket{\psi}$
under local operation and classical communication (SLOCC) are
preserved \cite{dvc00,cc06,ccd10}. As throughout the paper we use
the general linear group only over the field of complex numbers, for
simplicity we will write $\GL(M,\bC)$ as $\GL(M)$. Note that
$\ket{\psi}$ is non-entangled if and only if it has the form
$\ket{v_1}\ox\ket{v_2}\ox\cdots\ox\ket{v_N}$ for some
$\ket{v_i}\in\bC^M$.  One of the central tasks in the study of these
$N$-party systems is to classify different states, i.e., the orbits
under these local Lie groups.

Entanglement of identical particle systems (e.g., fermions and bosons)
has also been extensively studied
\cite{AFOV08,SLM01,SCK+01,LZL+01,PY01,ESB+02}, where the concept of
entanglement is explored in a mathematical structure different from
the tensor product structure of the Hilbert space
$\otimes^N(\bC^M)$.  The Hilbert space is now the symmetric power
$\mathop{\vee}\nolimits^N (\bC^M)$ for bosonic systems, and the
exterior power $\bwe^N(\bC^M)$ for fermionic systems, which can be
identified with the subspace of the symmetric and the antisymmetric
tensors in $\otimes^N(\bC^M)$, respectively.

The non-entangled states have the form $\ket{v}^{\vee N}$ in the
bosonic case \cite{PY01,AFOV08,ESB+02}, and the form of a Slater
determinant $\ket{v_1}\wedge\ket{v_2}\wedge\ldots\wedge\ket{v_N}$ in
the fermionic case \cite{LZL+01}. Then an LU operation is an element
$U$ of the unitary group $\Un(M)$ acting on the states as the operator
$U^{\otimes N}$. Similarly, an SLOCC operation $A$ corresponds to an
element of the group $\GL(M)$.

In this work we consider the fermionic system with $3$ fermions with $6$ single particle states, i.e. $N=3$ and $M=6$. We denote
by $\bwe^3(\bC^6)/\Un(6)$ the set of $\Un(6)$-orbits in
$\we^3(V)$, and define similarly the quotient set $\ox^3(\bC^2)/ \left( (\Un(2)\times\Un(2)\times\Un(2))\rtimes S_3 \right)$.
Our first main result is to establish the following natural bijective correspondence:
\begin{equation}
\label{eq:main} \bwe^3(\bC^6)/\Un(6)\leftrightarrow
\ox^3(\bC^2)/ \left( (\Un(2)\times\Un(2)\times\Un(2))\rtimes S_3 \right).
\end{equation}

To be more precise, choose the orthonormal basis vectors of $V=\bC^6$
to be $\ket{i}$, $i=1,\ldots,6$ and set
$e_{ijk}=\ket{i}\we\ket{j}\we\ket{k}$. From these basis vectors we
form three pairs $\{\ket{i},\ket{i+1}\}$, $i=1,3,5$, and introduce
three subspaces
\begin{eqnarray}
\label{eq:Vi} V_1=\lin\{\ket{1},\ket{2}\},\quad
V_2=\lin\{\ket{3},\ket{4}\},\quad V_3=\lin\{\ket{5},\ket{6}\},
\end{eqnarray}
We define the so-called single occupancy vector (SOV) subspace $W$ by
\begin{equation}
 \label{eq:sov}
W:=V_1\we V_2\we V_3\subset\bwe^3(V).
\end{equation}
It was shown in \cite{ccdz13} that any $\ket{\psi}\in \bwe^3(V)$ is LU-equivalent to a state $\ket{\phi}\in W$. The subspace $W$ can be identified with the Hilbert space of three qubits by the isometry $W\to\bC^2\ox\bC^2\ox\bC^2$ which is defined by
\begin{eqnarray}\label{eq:isometry}
e_{i+1,j+3,k+5}\mapsto\ket{ijk},\quad i,j,k\in\{0,1\}.
\end{eqnarray}
This identifies $\ket{\psi}\in W$ with a three-qubit state
$\ket{\phi}$ in a one-to-one manner. The subgroup $G$ of $\Un(6)$
which leaves the subspace $W$ globally invariant is the semidirect product
\begin{eqnarray}
G:=(\Un(2)\times\Un(2)\times\Un(2))\rtimes S_3,
\end{eqnarray}
where the symmetric group $S_3$ permutes the three copies of
$\Un(2)$. Given a $\Un(6)$-orbit $\cO\sue\bwe^3(\bC^6)$, the
intersection $\cO':=\cO\cap W$ is a single $G$-orbit. Conversely, any $G$-orbit in $W$ is contained in a unique $\Un(6)$-orbit
$\cO$. Hence, the correspondence in Eq. \eqref{eq:main} is indeed natural. In terms of quantum entanglement theory, this means that there is a one-to-one correspondence between the LU orbits of
$N=3$, $M=6$ fermionic states and the orbits of three-qubit states under LU transformations and qubit permutations. Moreover
we show that, when the two quotient sets in Eq. \eqref{eq:main}
are equipped with quotient topologies, then our one-to-one correspondence is a homeomorphism. There is no hope to generalize
these results to the case $N\ge4$, $M=2N$ because there exist pure fermionic states in $\bwe^4(\bC^M)$ which are not single occupancy states \cite{ccdz13}.

Our main tool is the powerful invariant theory. In Sec.
\ref{sec:Inv3Ferm} we construct a minimal set of homogeneous
generators of the algebra $\cA$ of polynomial $\Un(6)$-invariants,
which consists of six primary generators, which are algebraically
independent, and an additional secondary generator.  In
Sec. \ref{sec:Inv3Qub} we recall the known results about the algebra
of polynomial LU-invariants of three qubits, and construct a minimal
set of homogeneous generators for its subalgebra $\cB$ consisting of
the invariants that are additionally fixed by qubit permutations.  In
Theorem~\ref{th:equiv} we show that these two algebras are isomorphic
as graded algebras, which leads to the result of Eq.~(\ref{eq:main}).

Based on the full knowledge of the algebra $\cA$, we move further to
construct a canonical form for the fermionic states with $N=3$ and
$M=6$.  This is the second main result of the paper. That is, any state
in $\bwe^3(\bC^6)$ is LU-equivalent to a state
 \begin{equation}
\label{eq:canonical}
\ket{\psi}=a e_{235}+b e_{145}+c e_{136}+d e_{246}+z e_{135},
 \end{equation}
where $a\geq b\geq c\geq 0$, $z=x+iy$ with $x,y$ real, $x\geq 0$, and
$d$ is the maximum of $|{\braket{\a\we\b\we\g}{\ps}}|$ taken over all
unit vectors $\ket{\a\we\b\we\g}$. We denote by $\D$ the spherical
region consisting of all states $\ket{\psi}$ of that form. We show
that $\D$ is a closed connected region and that its relative interior
$\D^0$ is dense in $\D$.  Moreover, we show that if two different
states in $\D$ are LU-equivalent then they must both lie on the
boundary of $\D$, see Theorem \ref{thm:KanForma} for more details.

By using the one-to-one correspondence in Eq.~(\ref{eq:main}) and the
canonical form for the fermionic states, we construct a new canonical
form for three-qubit pure states (without any qubit permutations).
This is the third main result of the paper. The isometry
Eq.~\eqref{eq:isometry} maps the fermionic state
Eq.~\eqref{eq:canonical} to the three-qubit state
\begin{equation}\label{eq:3qubitcanonical}
\ket{\phi}=a\ket{100}+b\ket{010}+c\ket{001}+d\ket{111}+z\ket{000}.
\end{equation}
When $a,b,c,x\geq 0$ and $d= \max_{\alpha,\beta,\gamma}
|{\braket{\alpha,\beta,\gamma}{\ps}}|$, where
$\ket{\alpha,\beta,\gamma}$ is any three-qubit product state,
Eq.~(\ref{eq:3qubitcanonical}) gives a canonical form for three-qubit
pure states. This completes the missing case for three-qubit canonical
forms discussed in \cite{ajt01}, see Sec.~\ref{sec:threequbit} for
more details.

We believe that our results, on both the relationships between the
invariants of Lie groups $\Un(6)$ and $G$, and the canonical forms,
will not only attract interest from quantum information science
community studying entanglement properties, for both
distinguishable particle systems and identical particle systems,
but also will be of interest solely mathematically. Like other
connections between small Lie groups, we certainly believe the
simple format of our results will find applications in yet some other area of
science.

We organize our paper as follows.  In Sec.~\ref{sec:related}, we
review some related work in quantum information theory that studies
local orbits, for both distinguishable and identical particle systems.
We compare these previous works to our results, which further
motivates our work. In Sec.~\ref{sec:RDM}, we discuss reduced density
matrices (RDMs) for pure states of the $N=3$, $M=6$ fermionic system,
and compare them with RDMs of three-qubit systems. The spectra of
these RDMs, which are obviously invariant under $\Un(6)$, will be
later used to build some of the invariants. In
Sec.~\ref{sec:Inv3Ferm}, we discuss the algebra of polynomial $\Un(6)$
invariants of the $N=3$, $M=6$ fermionic system, and in
Sec.~\ref{sec:Inv3Qub} we consider the algebra of the symmetric
polynomial invariants of three qubits.  These two algebras are shown
to be isomorphic as graded algebras. In
Sec.~\ref{sec:KanReg}--\ref{sec:quasireal} we study the canonical form
for pure states of the $N=3$, $M=6$ fermionic system. Finally, in
Sec.~\ref{sec:threequbit} we present a new canonical form for
three-qubit pure states.

\section{Related Work}
\label{sec:related}

\subsection{Systems of distinguishable particles}

It is well-known that in the simplest case of $N=2$ particles, any
state $\ket{\psi}$ in $\bC^M\otimes\bC^M$ is LU-equivalent to a state
in the canonical form given by the Schmidt decomposition
\begin{equation}
\label{eq:Schmidt}
\ket{\psi}=\sum_{i=1}^M\sqrt{\lambda_i}\ket{i}\otimes\ket{i},
\end{equation}
where the states $\ket{i}$, $i=1,\dots,M$ form an orthonormal basis of
$\bC^M$, and $\lambda_1\ge\cdots\ge\lambda_M\ge0$.  Thus states with
different Schmidt coefficients $\lambda_i$ will generically be
LU-inequivalent.  The case $N=3$ turns out to be much more complicated,
as no direct generalization of the Schmidt decomposition is
available.  A canonical form for LU orbits of three qubits was obtained
in \cite{ajt01} as
\begin{equation}
\label{eq:acinform}
\ket{\psi}=\eta_0e^{i\phi}\ket{000}+\eta_1\ket{001}+\eta_2\ket{100}+\eta_3\ket{110}+\eta_4\ket{111},
\end{equation}
where the coefficients $\eta_i$ are real and nonnegative. However, no
canonical form has ever been found for any other $N\geq 3$ system.

The problem of classifying SLOCC orbits was solved, in some other
cases of small systems, thanks to the larger group $\GL(M)^{\times N}$
compared to $\Un(M)^{\times N}$.  It is well-known that three-qubit
pure states can be entangled in two-inequivalent ways \cite{dvc00},
i.e.,
\begin{eqnarray}
&&\ket{GHZ}=\frac{1}{\sqrt{2}}(\ket{000}+\ket{111})\nonumber\\
&\text{and}&\ket{W}=\frac{1}{\sqrt{3}}(\ket{001}+\ket{010}+\ket{100}).
\end{eqnarray}
Four-qubit SLOCC equivalence classes have also been identified,
however, now infinitely many orbits exist \cite{VDMV12,CD06}. SLOCC
equivalence classes for other cases have also been discussed, for
instance the $2\times M\times N$ states investigated by the range
criterion \cite{cc06}, multipartite symmetric states by the tensor
rank \cite{ccd10}, and multiqubit symmetric states by locally identical
operators \cite{mkg10}. In all these investigations invariant theory
plays an important role.

Recently, a connection of SLOCC classification with the theory of the
quantum marginal problem has been studied
\cite{WDGC12,SOK12a,SOK12b}. It was shown that the set of vectors with
entries given by the eigenvalues of the one particle reduced density
matrices is a convex polytope (namely the entanglement polytope), for
the set of states in any SLOCC equivalence class. Further, it was
shown that there is only a finite number of polytopes for any $N$ and
$M$, compared to an in general infinite number of SLOCC orbits.
Hence, the result in \cite{WDGC12,SOK12a,SOK12b} provides a
coarse-grained version of the SLOCC classification.  In identifying
these polytopes, invariant theory also plays a crucial role.

\subsection{Fermionic systems}

In the case $N=2$ and even local dimension $M=2K$, canonical forms for
LU orbits similar to the Schmidt decomposition have been obtained for
both bosonic and fermionic systems \cite{SLM01,LZL+01,PY01}. The
bosonic state is LU-equivalent to a state which has exactly the same
form as Eq.~(\ref{eq:Schmidt}). It means that we have a one-to-one
correspondence
$\vee^2(\bC^M)/\Un(M)\leftrightarrow(\bC^M\otimes\bC^M)/(\Un(M)\times
\Un(M))$. The fermionic case is a bit different as any fermionic state
is LU-equivalent to a special form of Eq.~(\ref{eq:Schmidt})
\begin{equation}
\ket{\psi}=\sum_{i=1}^K \sqrt{\lambda_i}\ket{\alpha_i}\wedge\ket{\beta_i},
\end{equation}
where $\bra{\alpha_i}\alpha_j\rangle=\bra{\beta_i}\beta_j\rangle=\delta_{ij}$, and
$\bra{\alpha_i}\beta_j\rangle=0$. This means that we have a
one-to-one correspondence
\begin{equation}
\wedge^2 (\bC^M)/\Un(M)\leftrightarrow(\bC^K\otimes\bC^K)/(\Un(K)\times\Un(K)).
\end{equation}
This fact is indeed known, see e.g., \cite{ccdz13}.

For the case $N\geq 3$, there is no generalization of the Schmidt
decomposition. So far no nontrivial canonical forms of LU orbits for
$N\geq 3$ bosonic/fermionic systems have been identified.  By
``non-trivial" we mean that there are some trivial cases which can be
treated easily.  For instance, for bosonic systems with $M=2$, the LU
group is just $\Un(2)$.  For fermionic systems, due to the
particle-hole duality, only the cases $M\geq 2N$ are of interest.

The simplest nontrivial fermionic system with $N=3$ and $M=6$, i.e.,
three fermions with six single particle states, has attracted much
attention recently.  Its SLOCC orbits have been completely
classified, and it turns out that there is a surprising link to the
SLOCC orbits of three-qubits \cite{LV08}. That is, there is a
one-to-one correspondence between the SLOCC orbits of these two very
different systems
\begin{equation}
\label{eq:SLOCCorbit}
\wedge^3 (\bC^6)/\GL(6)\leftrightarrow (\bC^2\ox\bC^2\ox\bC^2)/
\left( (\GL(2)\times\GL(2)\times\GL(2))\rtimes S_3 \right),
\end{equation}
which is yet another accidental coincidence involving small Lie
groups.  For several $N$ and $M$ fermionic systems with $N\ge3$, the
SLOCC orbits in $\we^N(\bC^M)$ were classified as early as 1931 in the
context of multilinear algebra \cite{Sch31}.  The SLOCC orbits and
polynomial invariants of three qubits were studied in detail in 1999
\cite{Ehr99}, in the context of $2\times2\times2$ complex
matrices. The fact that there is a natural correspondence between
these orbits and the SLOCC orbits of the fermionic system
$\we^3(\bC^6)$ is also pointed out in the same reference.

Studies of the $N$-representability problem \cite{Col63} suggest some
further connection of the orbits of the $N=3$, $M=6$ fermionic system
with the orbits of three-qubit system, but in a more complicated
situation where the LU orbits are considered.  Because of the smaller
group $\Un(6)$ compared to $\GL(6)$, one needs to deal with many
more (in fact infinitely many) orbits.  It was shown that, if one
arranges the eigenvalues $\lambda_i$ of the one-particle RDM of any
pure $N=3$, $M=6$ fermionic state $\ket{\psi}$ in a non-increasing order as
$\lambda_1\ge\lambda_2\ge\cdots\ge\lambda_5\ge\lambda_6$, then
$\lambda_i+\lambda_{7-i}=1$ for $i=1,2,3$
\cite{ccdz13,Kly05,BD72,Rus07,Kly09}. This indicates that there is
always a representative of each LU orbit which adopts a special form,
namely the single-occupancy form.

However, one key question remained unanswered: for a given $N=3$,
$M=6$ fermionic state $\ket{\psi}$, can any two of its LU-equivalent
states in $W$ correspond to two LU-inequivalent three-qubit states under
the LU group $\Un(2)\times\Un(2)\times\Un(2)$ and the permutation of
qubits? In other words, is
there a one-to-one correspondence between the LU orbits of $N=3$,
$M=6$ fermionic states and the three-qubit states, just like in the
SLOCC case in \eqref{eq:SLOCCorbit}? This question is answered
affirmatively by \eqref{eq:main} and Theorem \ref{th:equiv} in
Sec. \ref{sec:Inv3Qub}. Although \eqref{eq:main} and
\eqref{eq:SLOCCorbit} are similar, we emphasize that the former
relation is much more important because many physical properties of
three-qubit pure states are invariant under the group
$\Un(2)\times\Un(2)\times\Un(2)$, but not under
$\GL(2)\times\GL(2)\times\GL(2)$. Such properties include entanglement
measures (e.g., geometric measure of entanglement in
Sec. \ref{subsec:GME}), eigenvalues of RDMs and so on. Moreover, the
group $\Un(2)\times\Un(2)\times\Un(2)$ is realizable with probability
one in experiment, while $\GL(2)\times\GL(2)\times\GL(2)$
corresponding to SLOCC can be realized only with a non-vanishing
probability.

Note that when talking about three-qubit orbits, one has to take into account
the qubit permutations by the symmetric group $S_3$. This is
because, in the fermionic case, qubit permutations correspond to
permuting the subspaces $V_1$ ,$V_2$, $V_3$, which preserve $W$. On
the other hand, in considering the three-qubit LU orbits such as
Eq.~(\ref{eq:acinform}), the qubit permutations were not taken into
account \cite{ajt01}. In fact, two states given by
Eq.~(\ref{eq:acinform}) with different parameters $a$, $b$, $c$ may
correspond to the same orbit under
$(\Un(2)\times\Un(2)\times\Un(2))\rtimes S_3$. So
Eq.~(\ref{eq:acinform}) is no longer a canonical form in this case. As
permutation of qubits does not change the main entanglement properties of
quantum states, it should be taken into account for any task involving
entanglement classification.

\subsection{Geometric measure of entanglement} \label{subsec:GME}

In quantum information science the geometric measure of entanglement
(GME) is a known multipartite entanglement measure \cite{shimony95}.
For $N$-partite pure states $\ket{\ps}\in\otimes^N_{i=1}\cH_i$, the
GME is defined as \cite{wg03}
\begin{eqnarray}
 \label{eq:GME}
 G(\ps) = 1 - \L^2 (\ps),
\end{eqnarray}
where $\L(\ps)=\max_{\ket{a_1,\dots,a_N}}
\abs{\braket{a_1,\dots,a_N}{\ps}}$, and $\ket{a_i}\in\cH_i$ are
unit vectors. For fermionic system, we can similarly define the GME
for $N$-vectors as
\begin{eqnarray}
 \label{eq:GMEfermi}
 G_f(\ps) = 1 -\L_f^2 (\ps),
\end{eqnarray}
where $\L_f(\ps)=\max_{\ket{b_1\we\dots\we b_N}}
\abs{\braket{b_1\we\dots\we b_N}{\ps}}$, and $\ket{b_1\we\cdots\we
b_N}$ are unit vectors. In the case $N=3$, $M=6$, for any pure
fermionic state $\ket{\ps}$ we will define the quantity $\mu(\psi)$
to be the maximum overlap $|\braket{\a\we\b\we\g}{\psi}|$ over all
decomposable three-vectors $\ket{\a\we\b\we\g}$ of unit norm in
Definition \ref{def:mu}. In particular, $\L_f(\ps)=\mu(\ps)$ for any
$3$-vector $\ket{\ps}\in W$, where $W$ is the SOV space defined in
\eqref{eq:sov}.  By using the embedding relation \eqref{ea:embedding}
and Lemma \ref{le:maximum}, we can obtain $\L(\ps)=6^{-1/2}\L_f(\ps)$
for $\ket{\ps}\in W$.

Since $\mu$ does not change under local unitary operations, it follows
that if $\mu(\phi)\ne\mu(\psi)$ then $\ket{\phi}$ and $\ket{\psi}$ are
not LU-equivalent.  Recall from Eq. \eqref{eq:main} that when $N=3$,
$M=6$ there is a one-to-one correspondence between the LU orbits of
fermionic states and the LU plus permutations orbits of three-qubit
states. By \eqref{eq:GME}, \eqref{eq:GMEfermi}, and Lemma
\ref{le:maximum}, computing the GME of $\ket{\ps}\in\bwe^3(C^6)$ is
equivalent to the same task for three-qubit pure states.

In recent years, the GME has been used in the context of many
different aspects in quantum information.  First, most multipartite
states have been shown to be too entangled to implement
measurement-based quantum computing {\em via} GME \cite{gfe09}. GME
is also important for studying multiple-prover quantum Merlin-Arthur
games \cite{hm10} and weak multiplicativity for random quantum
channels \cite{montanaro13}.  Second, there have been efforts to
search for the maximally entangled multiqubit states with respect to
the GME \cite{cxz10,amm10}. Research has also been carried out over
the additivity, and computation of GME for Dicke, Smolin,
stabilizer, and antisymmetric states \cite{hmm08,zch10}.  Third,
computing the GME is helpful for the understanding of other
important entanglement measures including the entanglement of
formation \cite{wg03}, relative entropy of entanglement, and
entanglement of robustness \cite{weg04,zch10}. Fourth, the GME is
also related to the long-standing open problems on
symmetrical-informationally-complete (SIC)-POVM and mutually
unbiased bases (MUB) \cite{czw10}.

Thus studying the GME is important, however it has been proved to be
hard even for three-qubit pure states \cite{cxz10}. Our Lemma
\ref{le:connection} gives a general method for computing the
quantity $\mu$ (and the GME) of the linear superposition of two
states with known values of $\mu$.  Next, Theorem \ref{thm:NejDelta}
gives a collection of inequalities which cut out a spherical region
$\D$, by which one can decide whether a given positive number is the
value of $\mu$ of some three-qubit pure state.  Third, we show that
the maximally entangled three-qubit pure state with respect to the
GME is the $W$ state, see Proposition \ref{pp:Minimum}.  This gives
an independent proof of the main result of \cite{cxz10}. Fourth,
Proposition \ref{pp:Delta} shows that the region $\D$ provides the
canonical form for three-fermionic states $\ket{\ps}$, in which
$d=\mu(\ps)$ is one of the coordinates.  Fifth, we give an algorithm
to compute the GME for three-qubit states in Sec.
\ref{subsec:algorithm}.

\section{Reduced Density Matrices}
\label{sec:RDM}

We consider $N$-vectors of an $M$-dimensional Hilbert space $V$ as
antisymmetric tensors.  We shall use the isometric embedding
\begin{eqnarray}
 \label{ea:embedding}
\ket{v_1\we v_2\we\cdots\we v_N}\to\frac{1}{\sqrt{N!}}
\sum_{\s\in S_N} {\rm sgn}(\s)
\ket{v_{\s(1)},v_{\s(2)},\ldots,v_{\s(N)}}.
\end{eqnarray}
Thus we may consider $\we^N(V)$ as a subspace of $\otimes^N(V)$.
Consequently, we may apply the partial trace operators to
the density matrix $\r=\proj{\ps}$.  Although $\r$ acts on
$\otimes^N(V)$, its support (i.e., range) is contained in
$\we^N(V)$ and we shall identify $\r$ with its restriction to
$\we^N(V)$.

The general linear group $\GL:=\GL(V)$ acts on $\cH$ by the
so called {\em diagonal action}:
\begin{align*}
A\cdot(\ket{v_1}\otimes\ket{v_2}\otimes
\cdots\otimes\ket{v_N})=
A\ket{v_1}\otimes A\ket{v_2}\otimes
\cdots\otimes A\ket{v_N}, \quad A\in\GL,
\quad \ket{v_i}\in V.
\end{align*}
In other words, $A\in\GL$ acts on $\cH$ as $\otimes^N
A$. Similarly, we have the action of $\GL$ on $\wedge^N V$ where
$A\in\GL$ acts as $\wedge^N A$, i.e., we have

\begin{eqnarray}
A\cdot\ket{v_1\wedge\cdots\wedge v_N} &=&
\ket{Av_1\wedge\cdots\wedge Av_N}.
\end{eqnarray}

These actions can be restricted to the unitary group $\Un(V)$ of
$V$. We shall say that two $N$-vectors $\ket{\phi}$ and $\ket{\psi}$
are {\em equivalent} if they belong to the same $\GL$-orbit, i.e.,
$\ket{\psi}=A\cdot\ket{\phi}$ for some $A\in\GL$. We shall also say
that they are {\em unitarily equivalent} or {\em LU-equivalent} if $A$
can be chosen to be unitary.

In this paper we consider only the case where $M=6$ and $N=3$, that
is, three fermions with six single particle states. We have the
following result from \cite{ccdz13}.
\begin{lemma}
\label{lm:5basis} Any three-fermion pure state with six single
particle states is LU-equivalent to
\begin{equation} \label{eq:5dimVektor}
\ket{\psi}=a e_{235}+b e_{145}+c e_{136}+d e_{246}+z e_{135},
\end{equation}
where $e_{ijk}=\ket{i}\we\ket{j}\we\ket{k}$, the coefficients
$a,b,c,d\ge0$, $z\in\bC$, and
$\|\psi\|^2=a^2+b^2+c^2+d^2+|z|^2=1$.
\end{lemma}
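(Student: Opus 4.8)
The plan is to produce the canonical form of Lemma~\ref{lm:5basis} by a sequence of explicit $\Un(6)$ reductions, exploiting the prior result (cited as~\cite{ccdz13}) that every pure state in $\bwe^3(\bC^6)$ is LU-equivalent to a state in the single-occupancy subspace $W=V_1\we V_2\we V_3$. The first step is therefore to assume $\ket{\psi}\in W$, so under the isometry \eqref{eq:isometry} it corresponds to a three-qubit state; here the available group is $G=(\Un(2)\times\Un(2)\times\Un(2))\rtimes S_3$. Hence the task reduces to showing that every three-qubit state is $G$-equivalent to one of the form $z\ket{000}+a\ket{100}+b\ket{010}+c\ket{001}+d\ket{111}$ with $a,b,c,d\ge0$, $z\in\bC$, and translating back. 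I would then note the pull-back to fermionic notation, using $e_{i+1,j+3,k+5}\mapsto\ket{ijk}$: $\ket{000}\mapsto e_{135}$, $\ket{100}\mapsto e_{235}$, $\ket{010}\mapsto e_{145}$, $\ket{001}\mapsto e_{136}$, $\ket{111}\mapsto e_{246}$, which is exactly \eqref{eq:5dimVektor}.

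The core three-qubit reduction I would carry out as follows. Start from an arbitrary normalized $\ket{\phi}=\sum_{ijk}c_{ijk}\ket{ijk}$. First use one local unitary, say on the third qubit, to kill the $\ket{001}$-type component in a chosen ``partial state'': more systematically, write $\ket{\phi}=\ket{0}_1\otimes\ket{\phi_0}+\ket{1}_1\otimes\ket{\phi_1}$ with $\ket{\phi_0},\ket{\phi_1}\in\bC^2\otimes\bC^2$, and recall that a two-qubit state has a Schmidt/Takagi-type normal form under $\Un(2)\times\Un(2)$. Apply a first-qubit unitary to arrange that $\ket{\phi_0}$ has minimal ``entanglement content'' in a suitable sense — this is the standard trick (used in~\cite{ajt01} and in the Acin--Andrianov--Costa--Jan\'e--Latorre--Tarrach construction) of choosing the first-qubit basis so that the reduced state on qubits $2,3$ restricted to the $\ket{0}_1$ branch is diagonal, which forces three of the amplitudes to vanish and leaves five. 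Then spend the remaining $\Un(2)\times\Un(2)$ freedom on qubits $2$ and $3$ (and the residual $\Un(1)$'s and the $S_3$) to make $a,b,c,d$ real and nonnegative, to impose the ordering $a\ge b\ge c\ge0$ via the $S_3$-permutation of the three ``weight-one'' basis vectors $\ket{100},\ket{010},\ket{001}$, and to put $z=x+iy$ with $x\ge0$ by an overall phase. The characterization of $d$ as $d=\max|\braket{\a\we\b\we\g}{\psi}|$ (equivalently the three-qubit $\max|\braket{\a,\b,\g}{\phi}|$) I would establish last, as a separate lemma: once the state is in the five-term form, compute the product-state overlap, maximize, and check the maximizer is attained at $\ket{111}$ with value $d$ — this uses the nonnegativity and the ordering constraints and is essentially a calculus exercise on the sphere, which is exactly the content flagged for Lemma~\ref{le:maximum}.

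The main obstacle I anticipate is the step that reduces nine complex amplitudes down to five while keeping everything real and nonnegative: one must verify that the local-unitary used to diagonalize the $\ket{0}_1$-branch reduced matrix can be coordinated with the choices on the other qubits so that no previously-zeroed coefficient reappears and so that the surviving coefficients can simultaneously be made nonnegative. Concretely, after the first reduction the state looks like $\lambda_0\ket{0}\ket{00}+\lambda_1\ket{0}\ket{11}+\ket{1}\ket{\chi}$ for some two-qubit $\ket{\chi}$; one then needs a further $\Un(2)\times\Un(2)$ move that puts $\ket{\chi}$ into a $\{\ket{00},\ket{11}\}$-plus-correction form compatible with the already-fixed basis, and it is not a priori obvious one has enough parameters. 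I would resolve this by a careful parameter count (the generic orbit of $\Un(2)^{\times3}$ on $\bC^8$ has the right codimension to leave $5$ real parameters plus the phase, matching \eqref{eq:canonical}) and by an explicit construction rather than a dimension heuristic. Uniqueness-type statements (that two inequivalent canonical-form states agree only on the boundary) are deferred to Theorem~\ref{thm:KanForma} and are not needed here; for Lemma~\ref{lm:5basis} only existence of the form, with the stated sign and normalization conditions, is required.
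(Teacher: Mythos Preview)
Your outline has a genuine gap at the core reduction step. The ``standard Acin trick'' you invoke --- choosing the first-qubit basis so that one slice becomes a product state, then rotating qubits $2,3$ to send that product to $\ket{00}$ --- produces the \emph{asymmetric} five-term form
\[
\eta_0 e^{i\phi}\ket{000}+\eta_1\ket{100}+\eta_2\ket{101}+\eta_3\ket{110}+\eta_4\ket{111},
\]
i.e.\ type~(III) in the terminology of Section~\ref{sec:threequbit}, not the \emph{symmetric} type~(I) form $z\ket{000}+a\ket{100}+b\ket{010}+c\ket{001}+d\ket{111}$ that Lemma~\ref{lm:5basis} requires. The paper explicitly notes that Acin et al.\ ``were not able to handle the symmetric decomposition~(I).'' Your fallback --- Schmidt-diagonalizing $\ket{\phi_0}$ --- kills only two amplitudes, leaving six; and your proposed remedy of a parameter count plus an unspecified ``explicit construction'' is not a proof. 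Dimension counting tells you the generic orbit has the right codimension, but does not identify \emph{which} five-dimensional slice works or how to land on it.

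The paper itself cites Lemma~\ref{lm:5basis} from~\cite{ccdz13}, but the argument is essentially reproduced (in strengthened form) in the proof of Lemma~\ref{le:KanForm}, and the logic there is the reverse of yours. One does \emph{not} first obtain the five-term form and then check that $d=\mu(\psi)$; rather, one first locates a unit product vector $\ket{\a\we\b\we\g}$ achieving the maximum overlap $\mu(\psi)$, rotates it to $e_{246}$, and then the \emph{maximality} of $d=\braket{e_{246}}{\psi}$ is what forces the three weight-two coefficients $t_{011},t_{101},t_{110}$ to vanish (e.g.\ if $t_{011}\ne0$, tilting the first factor toward $\ket{1}$ would increase the overlap). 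The diagonal phase rotations then make $a,b,c$ nonnegative. So the characterization $d=\mu(\psi)$ is the engine of the reduction, not an afterthought; your plan to ``establish it last, as a separate lemma'' is precisely backwards.
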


The symmetric group $S_3$ is generated by the 3-cycle $\s=(123)$ and
the transposition $\tau=(12)$. We embed $S_3$ into the unitary group
$\Un(6)$ as follows:
\begin{eqnarray} \label{eq:embed}
\s=\left( \begin{array}{ccc}0&0&I_2\\I_2&0&0\\0&I_2&0
\end{array} \right),\quad
\tau=-\left( \begin{array}{ccc}0&I_2&0\\I_2&0&0\\0&0&I_2
\end{array} \right),
\end{eqnarray}
where $I_2$ is the $2\times 2$ identity matrix.

Note that $S_3$ permutes the subspaces $V_1$, $V_2$,
$V_3$. Consequently, under the action of $\Un(6)$ on $\we^3(V)$, $S_3$
leaves invariant the subspace $W$. Moreover, $S_3$ permutes the basis
vectors $e_{135}$, $e_{145}$, $e_{136}$, $e_{146}$, $e_{235}$,
$e_{245}$, $e_{236}$ ,$e_{246}$ of $W$. The $3$-cycle $\s$ fixes
$e_{135}$ and $e_{246}$, and sends $e_{145}\to e_{136}\to e_{235}\to
e_{145}$ and $e_{146}\to e_{236}\to e_{245}\to e_{146}$. The
transposition $\tau$ fixes $e_{135}$, $e_{136}$, $e_{245},e_{246}$,
and sends $e_{145}\to e_{235}\to e_{145}$ and $e_{146}\to e_{236}\to
e_{146}$. It follows that under this action, $S_3$ preserves the real
$6$-dimensional subspace $W_6$ with basis
\begin{eqnarray} \label{eq:BazW6}
\{e_{235},e_{145},e_{136},e_{246},e_{135},ie_{135}\}.
\end{eqnarray}
In terms of the coordinates $a,b,c,d,z$, the action of $\s$ and $\tau$
on $W_6$ is given by
 \begin{eqnarray}
 \label{eq:sigmatau}
\s\cdot(a,b,c,d,z)=(c,a,b,d,z)\quad\text{and}\quad
\tau\cdot(a,b,c,d,z)=(b,a,c,d,z).
\end{eqnarray}
(This is the reason why we introduced the factor $-1$ in the
definition of the matrix of $\tau$.) Thus we can assume that
$a\ge b\ge c\ge0$ and $d\ge0$.

To compute the $1$-body RDM for $\ket{\psi}$, we use a different
normalization i.e. it has trace $3$, as used by chemists
\cite{Kly05}. Thus we set $\r^{(1)}=3\r_1$, where $\r_1$ is the
normalized $1$-RDM.

A computation shows that $\r^{(1)}=R_a\oplus R_b\oplus R_c$,
where
\begin{eqnarray*}
R_a &=&
\left(\begin{array}{cc}
b^2+c^2+|z|^2 & az \\
az^* & a^2+d^2
\end{array}\right), \\
R_b &=&
\left(\begin{array}{cc}
c^2+a^2+|z|^2 & bz \\
bz^* & b^2+d^2
\end{array}\right), \\
R_c &=&
\left(\begin{array}{cc}
a^2+b^2+|z|^2 & cz \\
cz^* & c^2+d^2
\end{array}\right).
\end{eqnarray*}
For $x=a,b,c$ let $D_x=\det R_x$, i.e.,
\begin{eqnarray}
D_a&=&(b^2+c^2)(a^2+d^2)+d^2|z|^2, \label{eq:Da} \\
D_b&=&(c^2+a^2)(b^2+d^2)+d^2|z|^2, \label{eq:Db} \\
D_c&=&(a^2+b^2)(c^2+d^2)+d^2|z|^2. \label{eq:Dc}
\end{eqnarray}

Since $\tr R_x=1$ and $R_x\ge0$, we have $D_x\in[0,1/4]$, and the
eigenvalues of $R_x$ can be written as $\lambda_x$ and
$1-\lambda_x$ with $\lambda_x=(1+\sqrt{1-4D_x})/2\in[1/2,1]$.
Let us denote the eigenvalues of $\r^{(1)}$ arranged in
decreasing order as
$\lambda_1\ge\lambda_2\ge\cdots\ge\lambda_5\ge\lambda_6$.
Then $\lambda_i$ and $\lambda_{7-i}$ are the eigenvalues of the
same block $R_x$ of $\r^{(1)}$. Thus we
obtain that if the $\lambda_i$ are arranged in decreasing order,
then $\lambda_i+\lambda_{7-i}=1$ for $i=1,2,3$ (this fact is proved
in \cite{Col63,Kly05,Rus07,Kly09} using other methods).

Next we consider the $2$-particle RDM $\rho_{1,2}$. This is a
$\binom{6}{2}\times\binom{6}{2}$ matrix, which is also block
diagonal with three $4\times 4$ blocks,
\begin{description}
\item[Block 1:] coordinates $(1,3)$, $(1,4)$, $(2,3)$ ,$(2,4)$, the corresponding block in
  $\rho_{1,2}$ is denoted by $\rho_{1,2}^{[1]}$,
\item[Block 2:] coordinates $(1,5)$, $(1,6)$, $(2,5)$, $(2,6)$, the corresponding block in
  $\rho_{1,2}$ is denoted by $\rho_{1,2}^{[2]}$,
\item[Block 3:] coordinates $(3,5)$, $(3,6)$, $(4,5)$, $(4,6)$, the corresponding block in
  $\rho_{1,2}$ is denoted by $\rho_{1,2}^{[3]}$,
\end{description}
and a zero $3\times 3$ block at the coordinates $(1,2)$, $(3,4)$, $(5,6)$.

Using Eq. (\ref{eq:5dimVektor}), the three non-zero blocks are given by
\[
\rho_{1,2}^{[1]}=
\left(\begin{array}{cccc}
c^2+|z|^2 & bz & az & cd \\
bz^* & b^2 & ab & 0  \\
az^* & ab & a^2 & 0 \\
cd & 0 & 0 & d^2
\end{array}\right),
\]
\[
\rho_{1,2}^{[2]}=
\left(\begin{array}{cccc}
b^2+|z|^2 & cz & az & bd \\
cz^* & c^2 & ac & 0  \\
az^* & ac & a^2 & 0 \\
bd & 0 & 0 & d^2
\end{array}\right),
\]
\[
\rho_{1,2}^{[3]}=
\left(\begin{array}{cccc}
a^2+|z|^2 & cz & bz & ad \\
cz^* & c^2 & bc & 0  \\
bz^* & bc & b^2 & 0 \\
ad & 0 & 0 & d^2
\end{array}\right).
\]

Note that for $\ket{\psi}$ given in Eq. (\ref{eq:5dimVektor}),
$\ket{\psi}\in W$ is a SOV state. So $\ket{\psi}$ can be identified as
a qubit state for qubits $A$, $B$, $C$, via the map
$e_{i+1,j+3,k+5}\to\ket{ijk},\ i,j,k\in\{0,1\}$, i.e.
\begin{equation} \label{eq:5dimVektorqubits}
\ket{\psi}=a\ket{100}+b\ket{010}+c\ket{001}+d\ket{111}+z\ket{000}.
\end{equation}
Then we have
\begin{eqnarray}
\rho_{1,2}^{[1]}&=&\rho_{AB}={\tr}_C\ket{\psi}\bra{\psi},\nonumber\\
\rho_{1,2}^{[2]}&=&\rho_{AC}={\tr}_B\ket{\psi}\bra{\psi},\nonumber\\
\rho_{1,2}^{[3]}&=&\rho_{BC}={\tr}_A\ket{\psi}\bra{\psi}.
\end{eqnarray}

\section{Polynomial $\Un(6)$-invariants of three fermions}
\label{sec:Inv3Ferm}
We continue to use the real $6$-dimensional subspace $W_6\subseteq
\we^3(V)$ introduced in Section \ref{sec:RDM}.  We shall write an
arbitrary $\ket{\psi}\in W_6$ as in Eq. \eqref{eq:5dimVektor}, and use
$a,b,c,d$ and $z=x+iy$ as its coordinates (with $x$ and $y$ real).

We denote by $\Un(6)$ the unitary group of $V$. We set
$\r=\proj{\psi}$ where
\begin{eqnarray} \label{eq:Psi} \ket{\psi}=\sum_{1\le
i<j<k\le 6} \xi_{ijk}e_{ijk}, \quad  \xi_{ijk}\in\bC,
\end{eqnarray}
is an arbitrary $3$-vector. Since we work with non-normalized pure
states we should specify how we normalize the density matrix $\r$ and
its one- and two-body RDMs $\r_1:={{\tr}}_{2,3}(\r)$ and
$\r_{1,2}:={{\tr}}_3(\r)$. We shall require that
\begin{eqnarray} \label{eq:normalization}
\tr\r=\tr\r_1=\tr\r_{1,2}=\|\ps\|^2=\sum_{i<j<k} |\xi_{ijk}|^2.
\end{eqnarray}
Thus, if $\ket{\ps}$ is a unit vector then the density matrices
$\r$, $\r_1$, and $\r_{1,2}$ all have trace $1$.
Note that this normalization is different from the one used in
Section \ref{sec:RDM}.

Let us introduce the $\Un(6)$ or LU-invariants of $\we^3(V)$.  To
begin, we have to consider $\we^3(V)$ as a real vector space, and our
invariants will belong to the algebra $\cP_\bR$ of complex-valued
polynomials defined on this real vector space.  We shall denote this
algebra of invariants by $\cP_\bR^\Un$.  After complexification, we
obtain the representation of $\GL(6)$ on
$\bC\ox_\bR\we^3(V)$. This tensor product decomposes into a direct sum
of two complex holomorphic representations of $\GL(6)$, namely
$\we^3(V)$ and its complex conjugate representation. However, since
$\Un(6)$ is a compact group, the complex conjugate representation is
isomorphic to the dual representation.  These two irreducible
representations of $\GL(6)$ are not isomorphic, but they become
isomorphic when restricted to $\SL(6)$ (again, by $\SL(6)$
we mean $\SL(6,\bC)$). This means that
$\bC\ox_\bR\we^3(V)$ is the direct sum of two copies of the
fundamental representation of $\SL(6)$ on $\we^3(V)$.  Let us
denote by $\cP$ the algebra of complex holomorphic polynomials on
$\bC\ox_\bR\we^3(V)$. The complexification of $\cP_\bR^\SU$ coincides
with the algebra $\cP^\SL$ of holomorphic polynomial
$\SL(6)$-invariants on $\bC\ox_\bR\we^3(V)$.  It is known
\cite{GS78} that this algebra is regular, i.e., it is a polynomial
algebra (in $7$ variables).

Due to the direct decomposition of $\bC\ox_\bR\we^3(V)$, the algebra
$\cP$ is bigraded, and the same is true for its subalgebra
$\cP^\SL$. From \cite[Table 2a]{GS78} we know that the $7$
bihomogeneous generators of $\cP^\SL$ have bidegrees $(1,1)$, $(4,0)$,
$(3,1)$, $(2,2)$, $(1,3)$, $(0,4)$, $(3,3)$.  Hence, the bigraded
Poincar\'{e} series of $\cP^\SL$ is
\begin{eqnarray}
f(s,t)=\frac{1}{(1-st)(1-s^4)(1-s^3t)(1-s^2t^2)(1-st^3)(1-t^4)
(1-s^3t^3)}.
\end{eqnarray}

It is not hard to compute the bigraded Poincar\'{e} series of
$\cP^\GL$.  We can distinguish the two irreducible submodules of
$\bC\ox_\bR\we^3(V)$ by taking into account the action of the
central torus of $\GL(6)$. Thus, we replace $s$ ,$t$ with
$sz$, $tz^{-1}$ respectively, and expand $f(sz,tz^{-1})$ into the
Laurent series with respect to the variable $z$. Then the bigraded
Poincar\'{e} series of $\cP^\GL$, say $g(s,t)$, is the constant
term of that Laurent series. It is given by the contour integral
over the unit circle:
\begin{eqnarray}
g(s,t)=\frac{1}{2\pi i}\int f(sz,tz^{-1})\frac{{\rm d}z}{z}.
\end{eqnarray}

By computing this contour integral by standard methods
(e.g., the Residue Theorem) and then setting $s=t$, we obtain the
ordinary Poincar\'{e} series of $\cP^\GL$:
\begin{eqnarray} \label{eq:serUn}
g(t,t) &=&
\frac{1+t^{12}}{(1-t^2)(1-t^4)(1-t^6)(1-t^8)^2(1-t^{12})} \\
&=& 1+t^2+2t^4+3t^6+6t^8+7t^{10}+13t^{12}+16t^{14}+25t^{16}
+31t^{18}+46t^{20}+\cdots. \notag
\end{eqnarray}
This is also the Poincar\'{e} series of $\cP^\Un_\bR$.

Any polynomial LU-invariant $f$ is uniquely determined by its
restriction $f':=f|_{W_6}$. These restrictions are polynomials in only
$6$ real variables, and so calculations with them are much faster than
with the full expressions for the invariants which depend on $40$ real
variables.  When convenient, we may specify $f$ by giving explicit
expression for its restriction $f'$.

Let us now describe the generators of $\cP_\bR^\Un$ that we shall
use. The norm square, $M_1:=\|\psi\|^2$, is the unique invariant of
degree $2$. By using our normalizations \eqref{eq:normalization}, we
also have $M_1=\tr(\r)=\tr(\r_1)=\tr(\r_{1,2})$.

The representation space $\we^3(\bC^6)$ of $\GL(6)$ is a well
known regular prehomogeneous vector space, \cite[Example 2.5]{TK03}.
We shall denote its relative invariant by $F$. In
particular, $F$ is also an $\SU(6)$-invariant, in fact the unique
invariant of bidegree $(4,0)$, and its complex conjugate is the unique
invariant of bidegree $(0,4)$.  The invariant $F$ was recently
rediscovered by Levay and Vrana who also found a nice explicit formula
for it \cite[Eq. (14)]{LV08}.  When restricted to $W_6$, it has a very
simple expression
\begin{eqnarray} \label{eq:InvF}
F'=d(4abc+dz^2).
\end{eqnarray}

The elementary symmetric functions of the determinants $D_x$ defined
in Eqs. \eqref{eq:Da}--\eqref{eq:Dc}, namely
\begin{eqnarray}
M'_2 &=& D_a+D_b+D_c, \label{eq:D-2} \\
M'_4&=& D_aD_b+D_bD_c+D_cD_a, \label{eq:D-4} \\
M'_6&=& D_aD_bD_c, \label{eq:D-6}
\end{eqnarray}
turn out to be restrictions to $W_6$ of polynomial
$\Un(6)$-invariants which we will denote by $M_2$, $M_4$,
and $M_6$, respectively. Explicitly, these three invariants
can be defined as follows:
\begin{eqnarray}
\label{eq:Rel-1}
M_2 &=& \frac{3}{2} \left( M_1^2-3\tr(\r_1^2) \right), \\
\label{eq:Rel-2}
M_4 &=& \frac{1}{4} \left( 3M_1^4 +2M_2^2 -4M_1^2 M_2
-81\tr(\r_1^4) \right), \\
\label{eq:Rel-3}
M_6  &=& \frac{1}{6} \left( 3M_1^6 -6M_1^4 M_2 +9M_1^2 M_2^2
-18M_1^2 M_4 -2M_2^3 +6M_2 M_4 -729\tr(\r_1^6) \right).
\end{eqnarray}

Next we introduce the invariants $M_3$ and $M_5$ by giving two
different expressions for each:
\begin{eqnarray} \label{eq:M3p}
M_3 &=& \frac{3}{2}M_1M_2 -\frac{1}{8}\|\nabla M_2\|^2 \\
    &=& 3M_1(M_1^2-M_2)-27\tr \left( \r_1{{\tr}}_2(\r_{1,2}^2)
\right), \label{eq:M3d} \\
M_5 &=& \frac{1}{18}( 10M_2^2 +8M_1M_3 -24M_4
-\nabla M_2\cdot\nabla M_3 ), \label{eq:M5nab} \\
    &=& |F|^2, \label{eq:M5F}
\end{eqnarray}
where $\nabla M_i$ denotes the gradient of $M_i$ considered as a
function of $40$ real variables, namely the real and imaginary parts
of the $\xi_{ijk}$ in Eq. \eqref{eq:Psi}.

We shall see below that the invariants $M_1,\ldots,M_6$ defined above
are the primary invariants of $\cP_\bR^\Un$.  There is only one
secondary invariant $M_7$ which will be temporarily specified by its
restriction $M'_7$.

For convenience, we introduce the following abbreviations
\begin{eqnarray} \label{eq:si}
s_1:=a^2+b^2+c^2,\quad
s_2:=a^2b^2+a^2c^2+b^2c^2,\quad
s_3:=a^2b^2c^2.
\end{eqnarray}
Let us now write explicitly the restriction $M'_i$ of the above
seven invariants:
\begin{eqnarray}
M'_1 &=& s_1+d^2+|z|^2, \label{eq:M'1} \\
M'_2 &=& 2(s_2+s_1d^2) +3d^2|z|^2, \label{eq:M'2} \\
M'_3 &=& M'_1(s_2+s_1d^2) -6(s_3+abcd(x^2-y^2)+s_2d^2),
\label{eq:M'3}  \\
M'_4 &=& s_1s_3+s_2^2 +3(s_1s_2-s_3)d^2 +4(s_2+s_1d^2)d^2|z|^2
        +(s_1^2+s_2+3|z|^4)d^4, \label{eq:M'4} \\
M'_5 &=& d^2|4abc+dz^2|^2=d^2(16s_3+8abcd(x^2-y^2)+d^2|z|^4),
\label{eq:M'5} \\
M'_6 &=& (s_1s_2-s_3)(d^6+s_1d^4+s_2d^2+s_3+3d^4|z|^2)
         +(s_1s_3+s_2^2)d^2|z|^2 \notag \\
&& +2s_2d^4|z|^4 +(s_1^2+s_2+2s_1|z|^2+|z|^4)d^6|z|^2,
\label{eq:M'6}  \\
M'_7 &=& abcdxy\Phi(a,b,c,d,z),\label{eq:M'7}
\end{eqnarray}
where
\begin{eqnarray} \label{eq:PolFi}
\Phi(a,b,c,d,z)=d^2(d^2-s_1)(d^2-s_1-|z|^2)-2abcd(x^2-y^2)-4s_3.
\end{eqnarray}

There is an $\SU(6)$-invariant, $J$, of bidegree $(1,3)$ whose
restriction to $W_6$ is given by
\begin{eqnarray} \label{eq:InvJ}
J'=2abcz+dz^*(M'_1-2d^2).
\end{eqnarray}
However, in order to uniquely identify $J$, we have to replace $W_6$
with the $7$-dimensional real subspace $W_7\supset W_6$, for which we
allow the coordinate $d$ in Eq. \eqref{eq:5dimVektor} to become
complex.  Thus, we replace $d$ with the complex coordinate $w$. By
denoting the restriction of $J$ to $W_7$ by $J''$, and similarly for
$M_1$, we have
\begin{eqnarray} \label{eq:InvJ''}
J''=2abcz+w^*z^*(M''_1-2|w|^2).
\end{eqnarray}

One can verify that
\begin{eqnarray} \label{eq:InvM7}
M_7= -\frac{1}{8}\Im(FJ^2).
\end{eqnarray}
We shall indicate later in this section how one can construct the
invariant $M_7$ without using $J$.  As $F$ is an $\SU(6)$-invariant of
bidegree $(4,0)$, it follows that $FJ^2$ is a unitary
invariant. Consequently, $M_7$ is also a unitary invariant. Moreover,
direct computation shows that
\begin{eqnarray} \label{eq:InvJ2}
|J|^2=\frac{1}{3}M_1(M_1M_2-2M_3)+M_2^2-4M_4-M_5.
\end{eqnarray}

\begin{theorem}\label{thm:Ferm-Generators}
The algebra $\cA:=\cP_\bR^\Un$ is generated over $\bC$ by the
invariants $M_1,\ldots,M_7$. The first six of these generators are
algebraically independent. If $\cA_p$ is the subalgebra of $\cA$
generated by these six polynomials, then $\cA$ is a free
$\cA_p$-module with basis $\{1,M_7\}$.
\end{theorem}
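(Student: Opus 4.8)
The plan is to exploit the Poincaré series \eqref{eq:serUn} as a bookkeeping device and match it against the structure claimed. The series is
\[
g(t,t)=\frac{1+t^{12}}{(1-t^2)(1-t^4)(1-t^6)(1-t^8)^2(1-t^{12})},
\]
which already has the shape of a Poincaré series of a free module over a polynomial ring: the denominator suggests primary invariants of degrees $2,4,6,8,8,12$, and the numerator $1+t^{12}$ suggests a free module with one generator in degree $0$ and one in degree $12$. Since each $M_i$ is homogeneous of degree $2i$ in the original $\xi_{ijk}$ (one checks this directly from the explicit restrictions $M'_i$), the degrees of $M_1,\ldots,M_6$ are $2,4,6,8,10,12$, which do not obviously match $2,4,6,8,8,12$; but this is fine, because a Cohen--Macaulay graded algebra can be presented as a free module over many different homogeneous systems of parameters, and the product of $(1-t^{2d_i})$ over any homogeneous system of parameters times the module-generator polynomial must equal $g(t,t)$ — so in our case we expect primary degrees $2,4,6,8,10,12$ and module generators $1,M_7$ with $\deg M_7$ to be read off from $g(t,t)\cdot\prod_{i=1}^6(1-t^{2i})$. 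First I would carry out this formal computation: multiply $g(t,t)$ by $(1-t^2)(1-t^4)(1-t^6)(1-t^8)(1-t^{10})(1-t^{12})$ and verify the result is a polynomial of the form $1+t^{14}$ (degree $14$ corresponding to $M_7$, which is degree $7$ in the $\xi$'s — consistent with $M'_7=abcdxy\,\Phi$ being degree $14$). This is the numerical skeleton of the theorem.

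Next I would establish algebraic independence of $M_1,\ldots,M_6$. The cleanest route is to show that the Jacobian of $(M'_1,\ldots,M'_6)$ with respect to the six real coordinates $a,b,c,d,x,y$ is not identically zero; equivalently, pick a convenient point (e.g. generic values of $a,b,c,d$ with $z$ having both real and imaginary parts nonzero) and check the $6\times 6$ Jacobian determinant is nonzero there. Given the explicit formulas \eqref{eq:M'1}--\eqref{eq:M'6}, this is a finite symbolic computation; alternatively one can argue degree by degree, noting that $M'_2$ introduces $d^2|z|^2$, $M'_3$ introduces $x^2-y^2$, $M'_5$ introduces $|z|^4$, etc., so the six functions are "triangular" enough in the monomials they involve to be independent. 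Since $\cA$ has Krull dimension $6$ (the transcendence degree equals the dimension of the generic orbit's complement, which is $40 - \dim \Un(6) + \dim(\text{stabilizer})$, or more simply: it equals the number of variables $6$ appearing in the restrictions to $W_6$, as $W_6$ is a section meeting generic orbits in finite fibers), algebraic independence of six homogeneous invariants forces $M_1,\ldots,M_6$ to be a homogeneous system of parameters, so $\cA$ is a finite module over $\cA_p:=\bC[M_1,\ldots,M_6]$.

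Then I would show $\cA$ is Cohen--Macaulay and that $\{1,M_7\}$ is a module basis. Cohen--Macaulayness is automatic: $\cA=\cP^\Un_\bR$ is the ring of invariants of the compact (hence reductive) group $\Un(6)$ acting on a vector space, and by the Hochster--Roberts theorem rings of invariants of reductive groups are Cohen--Macaulay. Therefore $\cA$ is a free $\cA_p$-module, and the rank and generator degrees are read off from $g(t,t)\cdot\prod(1-t^{2i}) = 1 + t^{14}$: there are exactly two free generators, one in degree $0$ and one in degree $14$. The degree-$0$ generator is $1$. It remains to prove that $M_7$ can serve as the degree-$14$ generator, i.e. that $M_7 \notin \cA_p$. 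This is the one genuinely hands-on step: I would argue that $M'_7 = abcdxy\,\Phi(a,b,c,d,z)$ contains the monomial $xy$ to the first power (through the factor $abcdxy$), whereas every element of $\cA_p=\bC[M'_1,\ldots,M'_6]$ is a polynomial in the $M'_i$, each of which is invariant under $y\mapsto -y$ (inspect \eqref{eq:M'1}--\eqref{eq:M'6}: they depend on $z$ only through $|z|^2$, $|z|^4$, and $x^2-y^2$). Hence every element of $\cA_p$ is even in $y$, while $M'_7$ is odd in $y$ and nonzero, so $M_7\notin\cA_p$. Since a nonzero homogeneous element of degree $14$ outside $\cA_p$ together with $1$ must (by the rank-$2$ freeness and the degree bookkeeping) span $\cA$ as an $\cA_p$-module, we conclude $\cA = \cA_p \oplus \cA_p M_7$, which is the assertion. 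The main obstacle is the last point — pinning down that $M_7$ genuinely lies outside $\cA_p$ and that no lower-degree secondary is needed — and the parity-in-$y$ argument is precisely what resolves it, together with the observation that the Poincaré series leaves no room for any secondary invariant in a degree below $14$.
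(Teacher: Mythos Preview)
Your overall strategy---use Hochster--Roberts to get Cohen--Macaulayness, verify that $M_1,\ldots,M_6$ form a homogeneous system of parameters by a Jacobian check, and then read the secondary-generator structure off the Poincar\'e series---is sound and in fact more conceptual than the paper's argument. But there is a concrete error that breaks the computation as written: you have the degrees of $M_5$ and $M_7$ wrong. From \eqref{eq:M5F} we have $M_5=|F|^2$ with $F$ of degree $4$, so $\deg M_5=8$, not $10$; and from \eqref{eq:M'7}, $M'_7=abcdxy\,\Phi$ with $\Phi$ of degree $6$, so $\deg M_7=12$, not $14$. Thus the degree sequence of $M_1,\ldots,M_6$ is $2,4,6,8,8,12$, which matches the denominator of \eqref{eq:serUn} \emph{exactly}, and the numerator $1+t^{12}$ reflects $\deg M_7=12$. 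If you actually multiplied $g(t,t)$ by $(1-t^2)(1-t^4)(1-t^6)(1-t^8)(1-t^{10})(1-t^{12})$ as you propose, you would get $(1+t^{12})(1-t^{10})/(1-t^8)$, which is not a polynomial; so the ``verification'' step would fail, and with it your bookkeeping. Once you correct the degrees, everything goes through: your parity-in-$y$ argument that $M'_7$ is odd in $y$ while each $M'_1,\ldots,M'_6$ is even in $y$ is a perfectly clean way to see $M_7\notin\cA_p$, and then freeness of rank two over $\cA_p$ (from Hochster--Roberts plus the Poincar\'e identity) forces $\{1,M_7\}$ to be a basis.

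By contrast, the paper avoids Hochster--Roberts entirely. It checks algebraic independence of $M_1,\ldots,M_6$ by the same Jacobian method you suggest, then shows $M_7\notin\cA_p$ by a dimension count in degree $12$. The substantive extra step in the paper is to prove directly that $\cA_p\cap\cA_p M_7=0$: using $h:=FJ^2$ it verifies $h+h^*\in\cA_p$ and $|h|^2\in\cA_p$ explicitly (Eqs.~\eqref{eq:InvJ2} and \eqref{eq:Reh}), hence $M_7^2\in\cA_p$; since $\cA_p$ is a polynomial ring and therefore integrally closed, $M_7$ integral over $\cA_p$ together with $M_7\in\mathrm{Frac}(\cA_p)$ would force $M_7\in\cA_p$, a contradiction. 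With $\cA_p\oplus\cA_p M_7\subseteq\cA$ established and the Poincar\'e series matching, equality follows. So the paper trades the general Cohen--Macaulay theorem for an explicit syzygy computation (which it needs anyway, cf.\ Eq.~\eqref{eq:syz}); your route is shorter but relies on deeper machinery, and your parity observation is a nicer replacement for the dimension count.
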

\begin{proof}
We have seen that the $M_i$ are unitary invariants. The algebraic
independence of the polynomials $M_1,\ldots,M_6$ can be verified by
exhibiting a point in $\we^3(V)$ at which the Jacobian of these
polynomials has rank $6$. In fact it suffices to verify this for the
restrictions $M'_1,\ldots,M'_6$. In that case one can use the point in
$W_6$ with coordinates $a=1$, $b=2$, $c=3$, $d=4$, and $x=y=1$. Hence,
$\cA_p$ is isomorphic to the polynomial algebra over $\bC$ in six
variables.  The homogeneous component, say $\cA_p^{(12)}$, of degree
$12$ of $\cA_p$ has dimension $12$, while the one of $\cA$ has
dimension $13$, see Eq. \eqref{eq:serUn}.  One can easily verify that
$M_7\notin\cA_p^{(12)}$.  If $f\in\cA_p\cap\cA_p M_7$, then $f=gM_7$
for some $g\in\cA_p$.  As $\cA_p$ is a unique factorization domain, it
is integrally closed in its field of fractions, say $K$. If $f\ne0$
then also $g\ne0$, and we obtain that $M_7=f/g\in K$. For convenience,
let $h=FJ^2$ and we point out that $|h|^2=hh^*=|F|^2|J|^4=M_5|J|^4$.
Eq. \eqref{eq:InvJ2} shows that $|J|^2\in\cA_p$, and so
$|h|^2\in\cA_p$. By a computer calculation, one can easily check that
$h+h^*$ belongs to $\cA_p$. Explicitly, we have
\begin{eqnarray} \label{eq:Reh}
9(h+h^*) &=& (M_1M_2 -2M_3)^2 +18M_2(M_2^2-4M_4-M_5)
+9M_1^2M_5 +144M_6.
\end{eqnarray}
This, and the identity $(h-h^*)^2=(h+h^*)^2-4|h|^2$, imply that
$M_7^2\in\cA_p$, and so $M_7$ is integral over $\cA_p$.  As $\cA_p$ is
integrally closed, we must have $M_7\in\cA_p$, which gives a
contradiction.

We conclude that $f=0$, and so $\cA_p\cap\cA_p M_7=0$.  Since the
Poincar\'{e} series of $\cA$ is the product of
$1+t^{12}$ and the Poincar\'{e} series of $\cA_p$, we must
have $\cA=\cA_p\oplus\cA_p M_7$.  In particular, the
invariants $M_1,\ldots,M_7$ generate $\cA$.
\end{proof}

Since the unitary invariants separate the LU-orbits \cite[Theorem 3,
  p. 133]{OV1990}, we have the following simple test for
LU-equivalence of pure fermionic states.
\begin{corollary}\label{cr:Equiv}
Two pure states $\ket{\ph},\ket{\ps}\in\we^3(V)$ are LU-equivalent if
and only if $M_i(\ph)=M_i(\ps)$ for $i=1,\ldots,7$.
\end{corollary}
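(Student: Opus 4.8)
The plan is to deduce the statement from Theorem \ref{thm:Ferm-Generators} together with the general fact that the algebra of unitary invariants separates the orbits of a compact group acting on a finite-dimensional (real) vector space, as cited from \cite[Theorem 3, p.~133]{OV1990}. The ``only if'' direction is immediate: each $M_i$ is a polynomial $\Un(6)$-invariant (we verified this while proving Theorem \ref{thm:Ferm-Generators}), so if $\ket{\ph}$ and $\ket{\ps}$ lie on the same $\Un(6)$-orbit then $M_i(\ph)=M_i(\ps)$ for all $i$. The substance is in the ``if'' direction.

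For the ``if'' direction, suppose $M_i(\ph)=M_i(\ps)$ for $i=1,\dots,7$. By Theorem \ref{thm:Ferm-Generators} the $M_i$ generate the whole algebra $\cA=\cP_\bR^\Un$ of complex-valued polynomial LU-invariants on $\we^3(V)$ (viewed as a real vector space). Hence $f(\ph)=f(\ps)$ for \emph{every} $f\in\cP_\bR^\Un$, since any such $f$ is a polynomial in $M_1,\dots,M_7$. Now I invoke the separation theorem: for the compact group $\Un(6)$ acting on the real vector space underlying $\we^3(V)$, two points have the same image under all polynomial invariants if and only if they lie in the same orbit. Applying this with the two points $\ket{\ph}$ and $\ket{\ps}$ gives that they are $\Un(6)$-equivalent, i.e., LU-equivalent, which is exactly the claim.

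The only step requiring a word of care is checking the hypotheses of the invariant-theoretic separation result: it needs a compact group and a representation on a real (or complex, regarded as real) finite-dimensional vector space, and it concludes that the invariant \emph{polynomials} — not just continuous invariants — suffice to separate orbits. Here $\Un(6)$ is compact and $\we^3(\bC^6)$ is finite-dimensional, so both conditions hold; the algebra $\cP_\bR^\Un$ is precisely the object to which the cited theorem applies. I expect no real obstacle: the entire content has been front-loaded into Theorem \ref{thm:Ferm-Generators} (that the seven $M_i$ generate $\cA$), and the corollary is a one-line application of that theorem plus the standard separation principle. If one wished to avoid quoting \cite{OV1990}, an alternative is to argue directly that on each orbit the invariants are constant and that distinct orbits are closed disjoint $\Un(6)$-invariant sets which can therefore be separated by a continuous invariant, then approximate/average to get a polynomial one — but citing the standard result is cleaner and is what I would do.
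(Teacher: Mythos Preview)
Your proof is correct and matches the paper's approach exactly: the paper simply states the corollary after noting that unitary invariants separate LU-orbits (citing \cite[Theorem 3, p.~133]{OV1990}), relying on Theorem~\ref{thm:Ferm-Generators} for the fact that the $M_i$ generate $\cA$. Your write-up just makes the two ingredients and their combination explicit.
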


From Eqs. \eqref{eq:M'5}, \eqref{eq:InvJ2}, and \eqref{eq:Reh}
we obtain that
\begin{eqnarray}  \notag
12^4M_7^2 &=& 36M_5 \left( M_1(M_1M_2-2M_3) +3M_2^2 -12M_4 -3M_5
 \right)^2 \\  \label{eq:syz}
&-& \left( (M_1M_2 -2M_3)^2 +18M_2(M_2^2-4M_4-M_5)
+9M_1^2M_5 +144M_6 \right)^2.
\end{eqnarray}
This is the unique algebraic relation (syzygy) among the seven
generators $M_i$, all other such relations are consequences of this
one. Thus we can construct the invariant $M_7$ (up to $\pm$ sign) from
this formula, and so avoid the use of the invariant $J$. The sign of
$M_7$ should be chosen to agree with Eq. \eqref{eq:M'7}.

We remark that the invariants $M_1,\ldots,M_7$ take real values
only. This follows from the fact that the restrictions $M'_i$ of the
$M_i$ are real-valued, see Eqs. \eqref{eq:M'1}--\eqref{eq:M'7}. While
$M_7$ obviously may take both positive and negative values, we claim
that $M_i\ge0$ for $i\ne7$. The claim is evidently valid for $i=1$ and
$i=5$, and for $i=2,4,6$ it follows from
Eqs. \eqref{eq:D-2}--\eqref{eq:D-6}. It remains to prove the claim for
$i=3$. It suffices to verify that the formula Eq. \eqref{eq:M'3} can
be written as
\begin{eqnarray}
M'_3 &=& x^2\left( (ab-cd)^2+(ac-bd)^2+(ad-bc)^2 \right)+
\notag \\
     & & y^2\left( (ab+cd)^2+(ac+bd)^2+(ad+bc)^2 \right)+S,
\end{eqnarray}
where $2S=\sum (\a^2-\b^2)^2\g^2$. The summation is over the $12$
ordered pairs $(\{\a,\b\},\g)$ where $\a,\b,\g$ are three distinct
elements of the set $\{a,b,c,d\}$. We omit the details of this
verification.

Let us also remark that $M_1M_2-2M_3\ge0$. Indeed, we have
\begin{eqnarray} \label{eq:M1M2-2M3}
\frac{1}{3}(M'_1M'_2-2M'_3)=(2abc-dy^2)^2+d^2|z|^2(M'_1-y^2)
+4s_2d^2+dx^2(4abc+dy^2),
\end{eqnarray}
which follows immediately from Eqs. \eqref{eq:M'1}--\eqref{eq:M'3}.

We recall that there are exactly four nonzero $\GL(6)$-orbits in
$\we^3(V)$. By analogy with the pure three-qubit states (see
\cite{LV08}) we label these orbits as follows and give their
representatives:

\noindent
\begin{tabular}{l@{\ }l}
(i)   & (fully) separable: $e_{246}$;\\
(ii)  & biseparable: $e_{235}+e_{246}$;\\
(iii) & W type: $e_{235}+e_{145}+e_{136}$;\\
(iv)  & GHZ type: $e_{135}+e_{246}$.
\end{tabular}

\noindent We warn the reader that, according to this definition, a
separable state is not biseparable.  We can determine the type of any
state by using the invariants.

\begin{proposition}\label{pp:type}
A (non-normalized) state $\ket{\psi}\in\we^3(V)$ is

\noindent
\begin{tabular}{l@{\ }l}
(i) & separable if and only if $M_2(\psi)=0$;\\
(ii) & biseparable if and only if $M_5$ and $M_1M_2-2M_3$ vanish at
  $\ket{\psi}$, and $M_2(\psi)>0$;\\
(iii) & of W type if and only if $M_5(\psi)=0$ and
$M_1(\psi)M_2(\psi)-2M_3(\psi)>0$;\\
(iv) & of GHZ type if and only if $M_5(\psi)>0$.
\end{tabular}
\end{proposition}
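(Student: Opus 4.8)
The plan is to reduce everything to the canonical form of Lemma~\ref{lm:5basis}. The invariants $M_1,\dots,M_7$ and the $\GL(6)$-orbit of a state are all unchanged by LU transformations, and the subgroup $S_3\subset\Un(6)$ of Eq.~\eqref{eq:embed} acts on the canonical coordinates as in Eq.~\eqref{eq:sigmatau}, so it suffices to prove the four equivalences for $\ket{\psi}$ of the form \eqref{eq:5dimVektor} with $a\ge b\ge c\ge 0$ and $d\ge 0$. I will use freely that $\we^3(V)$ has exactly the four nonzero $\GL(6)$-orbits listed before the proposition (so that each condition only has to be checked to single out the correct one among these four; the zero vector falls under the case $M_2=0$), and that $M_1M_2-2M_3\ge 0$ everywhere.

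For part~(iv): $F$ is a relative $\GL(6)$-invariant, so $\{F=0\}$ is a union of $\GL(6)$-orbits; equivalently, $\we^3(\bC^6)$ being a regular prehomogeneous vector space, $\{F\ne 0\}$ is the dense orbit. Evaluating $F'=d(4abc+dz^2)$ on $e_{246}$, on $e_{235}+e_{246}$ and on $e_{235}+e_{145}+e_{136}$ gives $0$, whereas $F'=1$ at $e_{135}+e_{246}$; so $F$ vanishes exactly on the separable, biseparable and W orbits and is nonzero on the GHZ orbit. Since $M_5=|F|^2$ (Eq.~\eqref{eq:M5F}) this proves~(iv) and also shows $M_5=0$ on the three lower orbits. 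For part~(i): by Eq.~\eqref{eq:M'2}, $M'_2=2(s_2+s_1d^2)+3d^2|z|^2$ is a sum of nonnegative terms, so $M_2(\psi)=0$ forces $s_2=0$, $s_1d^2=0$ and $d^2|z|^2=0$, i.e.\ (using $a\ge b\ge c\ge 0$) $b=c=0$ and $ad=dz=0$; then $\ket{\psi}$ is either $(a\ket{2}+z\ket{1})\wedge\ket{3}\wedge\ket{5}$ (if $d=0$) or $d\,e_{246}$ (if $d\ne 0$), a single Slater determinant, hence separable. Conversely a separable state is LU-equivalent to such a canonical form, so $M_2=0$; and substituting $b=c=0$, $s_1d^2=0$ into Eq.~\eqref{eq:M'3} gives $M'_3=0$, so these states also satisfy $M_1M_2-2M_3=0$.

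It remains to separate the biseparable and W orbits, which both satisfy $M_5=0$ and $M_2>0$. The blocks $R_a,R_b,R_c$ of the $1$-RDM are, under the identification of $W$ with three qubits, the single-qubit RDMs $\rho_A,\rho_B,\rho_C$, so a biseparable state has exactly one of $D_a,D_b,D_c$ equal to $0$ and the other two positive. Inspecting Eqs.~\eqref{eq:Da}--\eqref{eq:Dc} for a canonical form with $a\ge b\ge c\ge 0$, $d\ge 0$, one finds that a single such vanishing (with the other two positive) forces $c=0$ together with either $d=0$, or else $b=z=0$ and $a,d>0$; in both situations every summand on the right of Eq.~\eqref{eq:M1M2-2M3} vanishes, so $M_1M_2-2M_3=0$ on the biseparable orbit. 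Conversely suppose $\ket{\psi}$ is of W type; reading $\tfrac13(M'_1M'_2-2M'_3)$ in Eq.~\eqref{eq:M1M2-2M3} as a sum of the nonnegative terms $(2abc-dy^2)^2$, $d^2|z|^2(M'_1-y^2)$, $4s_2d^2$ and $dx^2(4abc+dy^2)$, assume all four vanish. From $4s_2d^2=0$ we get $s_2=0$ or $d=0$. If $d=0$, the first term gives $abc=0$, hence $c=0$ and $D_c=0$, so $\ket{\psi}$ is not of W type. If $d\ne 0$ and $s_2=0$, then $b=c=0$, the first and second terms force $y=0$ and then (since $M'_1>0$) $x=0$, so $\ket{\psi}=a\,e_{235}+d\,e_{246}$, which has $D_a=0$ and is again not of W type. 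Hence $M_1M_2-2M_3>0$ on the W orbit, and combining this with (i), (iv), the inequality $M_1M_2-2M_3\ge 0$ and the exhaustiveness of the four orbits yields (ii), (iii) and all converses.

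The step I expect to cost the most effort is the last paragraph: one must make sure the reduction to canonical form is permitted to use the $S_3$ permutation (so that a biseparable state whose separated qubit is $B$ or $C$ can be brought to a form in which it is the first RDM block that becomes singular), and the W-type case analysis has to be organized so that one keeps precise track of which vanishings among $a,b,c,d,x,y$ already drop the state into a lower orbit.
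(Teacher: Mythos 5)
Your proof is correct and follows essentially the same route as the paper's: reduce to the canonical form of Lemma~\ref{lm:5basis}, read the types off the restricted formulas for $M'_2$, $M'_3$, $M'_5$ together with the identity \eqref{eq:M1M2-2M3}, and invoke the exhaustiveness of the four nonzero $\GL(6)$-orbits. The only differences are organizational: the paper chooses explicit representatives for the separable and biseparable cases (e.g.\ $b=c=z=0$) and proves (ii) in both directions directly, deducing (iii) by elimination, whereas you characterize the biseparable and W canonical forms through the determinants $D_a$, $D_b$, $D_c$ and obtain the converses of (ii) and (iii) by elimination.
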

\begin{proof}
Without any loss of generality we may assume that $\ket{\psi}$ is
normalized and given by Eq. \eqref{eq:5dimVektor} where $a\ge b\ge
c\ge 0$, $d>0$ and $z=x+iy$. Note that $M_5(\psi)=0$ if and only if
$x=0$ and $4abc=dy^2$.
\begin{description}
\item[(i)] If $\ket{\psi}$ is separable, we may assume that $a=b=c=z=0$ and
$d=1$. Thus $M_2(\psi)=0$. Conversely, if $M_2(\psi)=0$ then
Eq. \eqref{eq:M'2} implies that $a=b=c=z=0$ and so $\ket{\psi}$ is
separable.
\item[(ii)] If $\ket{\psi}$ is biseparable, we may assume that $b=c=z=0$, and
so $M_5$ and $M_1M_2-2M_3$ vanish at $\ket{\psi}$.  Conversely, assume
that $M_5$ and $M_1M_2-2M_3$ vanish at $\ket{\psi}$ and that
$M_2(\psi)>0$. Then we have $x=0$ and Eq. \eqref{eq:M1M2-2M3} implies
that $b=c=z=0$. Now the condition $M_2(\psi)>0$ implies that $ad>0$,
and so $\ket{\psi}$ is biseparable.
\item[(iv)] It is well known that $\ket{\psi}$ is of GHZ type if and only if $F(\psi)\ne0$, i.e., $M_5(\psi)>0$.
\item[(iii)] This follows from (iv) and (ii).
\end{description}
\end{proof}
Note that for the decomposability of $\ket{\psi}$ it suffices to check
that $M_2(\psi)=0$ instead of using the Grassmann-Pl\"{u}cker
relations.

\section{Symmetric polynomial invariants of three qubits}
\label{sec:Inv3Qub}

The action of the local unitary group $\Un(2)\times\Un(2)\times\Un(2)$
on the Hilbert space of three qubits,
$\cH=\bC^2\mathrel{\otimes}\bC^2\mathrel{\otimes}\bC^2$, can be
extended to the action of the semidirect product
$G:=(\Un(2)\times\Un(2)\times\Un(2))\rtimes S_3$, where the symmetric
group $S_3$ permutes the three copies of $\bC^2$.  We refer to the
complex-valued $G$-invariant polynomial functions on $\cH$ (viewed as
a real vector space) as the {\em symmetric polynomial invariants} of
three qubits.  We denote by $\cB$ the $\bC$-algebra of the symmetric
polynomial invariants.

Our first objective in this section is to compute a minimal set of
generators of $\cB$. The complex linear map $\cH\to\we^3(V)$ which
maps
\begin{eqnarray} \label{eq:embedding}
\ket{ijk}\mapsto e_{i+1,j+3,k+5},\quad i,j,k\in\{0,1\}
\end{eqnarray}
is an isometry, i.e., it preserves the norms of vectors, as well as
the inner products. By using this map we shall identify $\cH$ with the
SOV subspace $W$.  We shall also identify $G$ with the subgroup of
$\Un(6)$ which preserves $W$. Our second objective is to prove that
the above embedding $\cH\to\we^3(V)$ establishes a one-to-one
correspondence between the $G$-orbits in $W$ and the $\Un(6)$-orbits
in $\we^3(V)$.

Let us recall some well-known facts about the algebra $\cC$ of
polynomial LU-invariants of three qubits. Let $\ket{\psi}\in\cH$ and
set $\r=\proj{\psi}$. There are six primary invariants:
\begin{eqnarray}
Q_1 &=& \tr\r,  \label{eq:Q1} \\
Q_2 &=& \tr(\r_A^2), \label{eq:Q2} \\
Q_3 &=& \tr(\r_B^2), \label{eq:Q3} \\
Q_4 &=& \tr(\r_C^2), \label{eq:Q4} \\
f_5 &=& \tr(\r_A\ox\r_B\r_{AB}), \label{eq:f5}  \\
Q_6 &=& |{\rm Hdet}|^2, \label{eq:Q6}
\end{eqnarray}
where Hdet is the Cayley hyperdeterminant, see e.g.,
\cite{ajt01}. There is only one secondary invariant
$f_7:=s_2^2{\rm Hdet}^*$. This invariant has been constructed
by M. Grassl, see \cite{ajt01} for the definition of $s_2$. For
convenience, we shall replace the generators $f_5$ and $f_7$ by the
invariants
\begin{eqnarray}
Q_5 &=& \tr(\r_A\ox\r_B\r_{AB})+\tr(\r_B\ox\r_C\r_{BC})
 +\tr(\r_A\ox\r_C\r_{AC}), \label{eq:Q5} \\
Q_7 &=& \Im(f_7). \label{eq:Q7}
\end{eqnarray}

The values of the $Q_i$ at $\ket{\psi}=a\ket{100}+b\ket{010}+
c\ket{001}+d\ket{111}+z\ket{000}$, where $a,b,c,d$ are real and
$z=x+iy$, are given by the formulae:
\begin{eqnarray}
Q_1 &=& s_1+d^2+|z|^2,  \label{eq:Q1'} \\
Q_2 &=& (a^2+d^2)^2+(b^2+c^2)^2+2s_1|z|^2+|z|^4,
\label{eq:Q2'} \\
Q_3 &=& (b^2+d^2)^2+(a^2+c^2)^2+2s_1|z|^2+|z|^4,
\label{eq:Q3'} \\
Q_4 &=& (c^2+d^2)^2+(a^2+b^2)^2+2s_1|z|^2+|z|^4,
\label{eq:Q4'} \\
Q_5 &=& 3|z|^6+9s_1|z|^4+ \left( 9(a^4+b^4+c^4) +11s_2 +2s_1d^2
\right) |z|^2 +6abcd(x^2-y^2)+ \notag \\
&& 3(a^6+b^6+c^6+d^6) +2s_1d^4 +2(a^4+b^4+c^4)d^2 +3s_2d^2
+2s_1s_2 -3s_3, \label{eq:Q5'}  \\
Q_6 &=& d^2 \left( (4abc-d|z|^2)^2 +16abcdx^2 \right),
\label{eq:Q6'} \\
Q_7 &=& 8abcdxy\Phi(a,b,c,d,z), \label{eq:Q7'}
\end{eqnarray}
where $\Phi$ is given in Eq. \eqref{eq:PolFi}.  It is still true that
$Q_1,\ldots,Q_6$ can be taken as the primary invariants and $Q_7$ as
the secondary invariant.  Since the transposition $(1,2)\in S_3$ fixes
$f_5$, it follows that $Q_5$, which is the symmetrization of $f_5$
with respect to the cycle $(1,2,3)\in S_3$, is fixed by $S_3$.  While
$f_7$ is not fixed by $S_3$, one can check that $Q_7$ is. Moreover, a
computation shows that $Q_7^2\in\bC[Q_1,\ldots,Q_6]$. Note that now
$S_3$ fixes the LU-invariants $Q_1,Q_5,Q_6,Q_7$, and one can check
that it acts faithfully on the set $\{Q_2,Q_3,Q_4\}$. Thus, $S_3$ acts
on the polynomial algebra $\bC[Q_2,Q_3,Q_4]$. Since any symmetric
polynomial can be expressed as a polynomial in the elementary
symmetric functions of the variables, we conclude that the
$S_3$-invariants in $\bC[Q_2,Q_3,Q_4]$ are generated by
 \begin{eqnarray}
Q_2+Q_3+Q_4,\quad  Q_2Q_3+Q_2Q_4+Q_3Q_4,\quad Q_2Q_3Q_4.
\end{eqnarray}

We can now construct a minimal set of generators for the algebra
of polynomial $G$-invariants of three qubits.
\begin{theorem}\label{thm:algebraB}
The algebra $\cB$ of polynomial $G$-invariants of three qubits is
generated by the polynomials $Q_1$, $Q_2+Q_3+Q_4$, $Q_5$,
$Q_2Q_3+Q_2Q_4+Q_3Q_4$, $Q_6$, $Q_2Q_3Q_4$, and $Q_7$ of degree $2$,
$4$, $6$, $8$, $8$, $12$, and $12$, respectively. The first six are
the primary invariants and they generate the subalgebra $\cB_p$
isomorphic to the polynomial algebra in six variables. The last
generator, $Q_7$, is a secondary invariant. Moreover, $Q_7^2\in\cB_p$
and $\cB$ is a free module over $\cB_p$ with basis $\{1,Q_7\}$.
\end{theorem}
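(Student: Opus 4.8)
The plan is to deduce Theorem~\ref{thm:algebraB} from the already-known structure of the full algebra $\cC$ of three-qubit LU-invariants by simply passing to $S_3$-fixed points, i.e.\ using $\cB=\cC^{S_3}$ (since $\Un(2)^3$ is normal in $G$ with quotient $S_3$). Recall from \cite{ajt01} and the discussion above that $Q_1,\ldots,Q_6$ are algebraically independent, that $\cC$ is a free module over the polynomial subalgebra $\cC_p:=\bC[Q_1,\ldots,Q_6]$ with basis $\{1,Q_7\}$, i.e.\ $\cC=\cC_p\oplus\cC_p Q_7$, and that $Q_7^2\in\cC_p$. Since $S_3$ fixes $Q_1,Q_5,Q_6$ and permutes $Q_2,Q_3,Q_4$, it maps $\cC_p$ into itself, and since it also fixes $Q_7$ it maps $\cC_p Q_7$ into itself. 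Hence $\cC=\cC_p\oplus\cC_p Q_7$ is a decomposition of $\cC$ as an $S_3$-module, and because taking $S_3$-invariants is additive over direct sums,
\[
\cB=\cC^{S_3}=(\cC_p)^{S_3}\oplus(\cC_p)^{S_3}Q_7 .
\]

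Next I would identify $(\cC_p)^{S_3}$. Because $Q_1,\ldots,Q_6$ are algebraically independent, $\cC_p=\bC[Q_1,Q_5,Q_6]\otimes_\bC\bC[Q_2,Q_3,Q_4]$ with $S_3$ acting trivially on the first tensor factor; and, as indicated above (using Eqs.~\eqref{eq:Q2'}--\eqref{eq:Q4'} together with the $S_3$-action recorded in Eq.~\eqref{eq:sigmatau}), $S_3$ acts on the second factor $\bC[Q_2,Q_3,Q_4]$ faithfully on the three generators, hence as the full symmetric group permuting $Q_2,Q_3,Q_4$. Therefore $(\cC_p)^{S_3}=\bC[Q_1,Q_5,Q_6]\otimes_\bC\bC[Q_2,Q_3,Q_4]^{S_3}$, and by the fundamental theorem on symmetric functions $\bC[Q_2,Q_3,Q_4]^{S_3}=\bC[e_1,e_2,e_3]$ with $e_1=Q_2+Q_3+Q_4$, $e_2=Q_2Q_3+Q_2Q_4+Q_3Q_4$, $e_3=Q_2Q_3Q_4$. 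Thus $(\cC_p)^{S_3}=\bC[Q_1,e_1,Q_5,e_2,Q_6,e_3]=:\cB_p$.

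It then remains to check the three assertions. Algebraic independence of $Q_1,e_1,Q_5,e_2,Q_6,e_3$ follows because $\bC[Q_2,Q_3,Q_4]$ is module-finite (hence integral) over $\bC[e_1,e_2,e_3]$, so $\bC[Q_1,\ldots,Q_6]$ is module-finite over $\bC[Q_1,Q_5,Q_6,e_1,e_2,e_3]$ and the two have the same transcendence degree $6$; hence $\cB_p$ is a polynomial algebra in six variables. For the module structure, $Q_7^2\in\cC_p$ and $Q_7^2$ is $S_3$-invariant (since $Q_7$ is), so $Q_7^2\in(\cC_p)^{S_3}=\cB_p$; combined with $\cB=\cB_p\oplus\cB_p Q_7$ this shows $\cB$ is a free $\cB_p$-module with basis $\{1,Q_7\}$, in particular a finite $\cB_p$-module, so $Q_1,e_1,Q_5,e_2,Q_6,e_3$ form a homogeneous system of parameters (the primary invariants) and $Q_7$ is a secondary invariant, and the seven listed polynomials generate $\cB$ as a $\bC$-algebra. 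The degrees are read off directly: $\deg Q_1=2$; each $Q_i$ with $i\in\{2,3,4\}$ has degree $4$, so $\deg e_1=4$, $\deg e_2=8$, $\deg e_3=12$; $\deg Q_5=6$; $\deg Q_6=8$; $\deg Q_7=12$.

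The only input that is not pure bookkeeping about polynomial rings and invariants of (semidirect products of) reductive groups is the claim that $S_3$ acts on $\{Q_2,Q_3,Q_4\}$ as the full symmetric group; this is the crux and has been verified above by direct computation with the explicit formulae for $Q_2,Q_3,Q_4$ and the action of $\sigma,\tau$ on $W_6$. A second, more delicate point is to make sure that the free-module decomposition $\cC=\cC_p\oplus\cC_p Q_7$ is genuinely $S_3$-stable \emph{as written}: this is where one crucially uses that $Q_7$ is fixed by $S_3$ exactly (not merely up to a scalar), so that $S_3$ preserves each summand and the passage to invariants distributes over the sum. Once these two points are in hand the theorem follows immediately.
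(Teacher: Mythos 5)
Your proposal is correct and follows essentially the same route as the paper's proof: both identify $\cB=\cC^{S_3}$, compute $\cC_p^{S_3}=\bC[Q_1,Q_5,Q_6]\otimes\bC[Q_2,Q_3,Q_4]^{S_3}$ via elementary symmetric functions, and pass the $S_3$-stable decomposition $\cC=\cC_p\oplus\cC_p Q_7$ (using that $S_3$ fixes $Q_7$ exactly) to invariants. Your packaging of the final step as ``invariants distribute over the direct sum'' is just a cleaner phrasing of the paper's ``write $f=f_1+f_2Q_7$ and apply $\sigma$'' argument, and your transcendence-degree justification of the algebraic independence of the six primary generators fills in a detail the paper merely asserts.
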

\begin{proof}
Let $\cC_p$ be the subalgebra of $\cC$ generated by its primary
generators $Q_1,\ldots,Q_6$. Note that $\cC_p$ is invariant under the
action of $S_3$. Since $\cC_p=\bC[Q_1,Q_5,Q_6]\ox\bC[Q_2,Q_3,Q_4]$ and
$S_3$ fixes $Q_1$, $Q_5$, and $Q_6$, we deduce that the subalgebra,
$\cC_p^{S_3}$, of $S_3$-invariants in $\cC_p$ is generated by
$Q_1,Q_5,Q_6$, $Q_2+Q_3+Q_4$, $Q_2Q_3+Q_2Q_4+Q_3Q_4$, and
$Q_2Q_3Q_4$. Thus we have $\cC_p^{S_3}=\cB_p$. Since these six
polynomials are algebraically independent, $\cB_p$ is isomorphic to
the polynomial algebra in six variables.

Since $Q_7$ is fixed by $S_3$ and $Q_7^2\in\cC_p$, we deduce that
$Q_7^2\in\cC_p^{S_3}=\cB_p$. Since $\cC$ is a free $\cC_p$ module with
basis $\{1,Q_7\}$, we can write any $f\in\cB$ uniquely as
$f=f_1+f_2Q_7$ with $f_1,f_2\in\cC_p$. For any $\s\in S_3$ we have
$f=f_1^\s+f_2^\s Q_7$, which implies that $f_1^\s=f_1$ and
$f_2^\s=f_2$. Consequently, $f_1,f_2\in\cB_p$. We conclude that
$f\in\cB_p[Q_7]$, and so $\cB=\cB_p[Q_7]$. As in the proof of Theorem
\ref{thm:Ferm-Generators}, we can show that $\cB$ is a free $\cB_p$
module with basis $\{1,Q_7\}$.
\end{proof}
We can now show that the embedding \eqref{eq:embedding} gives a
one-to-one correspondence between $G$-equivalence classes of pure
three-qubit states in $\cH$ and the LU-equivalence classes of pure
fermionic states in $\we^3(V)$.
\begin{theorem}\label{th:equiv}
The restriction map $f\to f|_W$ from the algebra $\cA:=\cP_\bR^\Un$ to
the $\bC$-algebra of polynomial functions on $W$, viewed as a real
vector space, is injective and its image is the algebra $\cB$ of
polynomial $G$-invariants of three qubits. In particular these two
algebras are isomorphic as graded algebras. If $\cO$ is an
$\Un(6)$-orbit in $\we^3(V)$ then $\cO\cap W$ is a single $G$-orbit in
$W$, and the map which sends $\cO$ to $\cO\cap W$ is a one-to-one
correspondence between the LU-equivalence classes of pure fermionic
states in $\we^3(V)$ and the $G$-equivalence classes of pure $3$-qubit
states in $W$.
\end{theorem}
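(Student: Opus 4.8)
The plan is to establish the three assertions of the theorem in sequence, using the explicit generators and the Poincaré-series bookkeeping already in hand. First I would prove that the restriction map $f\mapsto f|_W$ is injective on $\cA$. Since we have already seen (Lemma \ref{lm:5basis} and the $S_3$-reduction around Eq. \eqref{eq:sigmatau}) that every $\Un(6)$-orbit meets $W_6\subset W$, and the $M_i$ are $\Un(6)$-invariant, two invariants agreeing on $W$ agree on all of $\we^3(V)$; more carefully, an element of $\cA$ restricting to zero on $W$ restricts to zero on $W_6$, but the $M_i'$ are precisely the coordinates we used to describe $\cA$, so by Theorem \ref{thm:Ferm-Generators} and Corollary \ref{cr:Equiv} the invariant is identically zero. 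Thus $f\mapsto f|_W$ is injective and carries $\cA$ isomorphically onto its image, preserving degree.

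Next I would identify that image with $\cB$. This is the computational heart of the argument, and it is the step I expect to be the main obstacle — not conceptually, but because it requires matching two explicitly-given generating sets. Comparing the restrictions $M_1',\dots,M_7'$ in Eqs. \eqref{eq:M'1}--\eqref{eq:M'7} with the three-qubit invariants $Q_1,\dots,Q_7$ in Eqs. \eqref{eq:Q1'}--\eqref{eq:Q7'}, one checks directly that $M_1'=Q_1$ and that each $M_i'$ for $i=2,\dots,7$ is a polynomial in $Q_1$, $Q_2+Q_3+Q_4$, $Q_5$, $Q_2Q_3+Q_2Q_4+Q_3Q_4$, $Q_6$, $Q_2Q_3Q_4$, $Q_7$; I would exhibit these explicit polynomial expressions. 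This shows the image of $\cA$ lies in $\cB$. For the reverse inclusion I would observe that both $\cA$ and $\cB$ have the same Poincaré series: $\cA$ is free of rank two over a polynomial algebra in six variables of degrees $2,4,6,8,8,12$ by Theorem \ref{thm:Ferm-Generators}, while Theorem \ref{thm:algebraB} gives $\cB$ exactly the same structure with the same primary degrees. Since the injective restriction map is degree-preserving and lands in $\cB$, equality of Poincaré series forces it to be onto $\cB$. Hence $\cA\cong\cB$ as graded algebras.

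Finally I would deduce the orbit correspondence. Given an $\Un(6)$-orbit $\cO$, the set $\cO\cap W$ is nonempty because every state is LU-equivalent to one in $W$. I claim it is a single $G$-orbit: if $\ket{\phi},\ket{\psi}\in\cO\cap W$ then $M_i(\phi)=M_i(\psi)$ for all $i$ since they lie in the same $\Un(6)$-orbit, hence $Q_j(\phi)=Q_j(\psi)$ for $j=1,\dots,7$ via the identifications just established, and since the $Q_j$ (together with the symmetrizations) generate $\cB$ and the $G$-invariants separate the $G$-orbits of the compact group $G$ acting on $W$ (the standard separation theorem, as used in Corollary \ref{cr:Equiv}), $\ket{\phi}$ and $\ket{\psi}$ are $G$-equivalent. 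Conversely a $G$-orbit in $W$ is clearly contained in exactly one $\Un(6)$-orbit, since $G\subset\Un(6)$ and distinct $\Un(6)$-orbits are disjoint. Therefore $\cO\mapsto\cO\cap W$ is a well-defined bijection between $\Un(6)$-orbits in $\we^3(V)$ and $G$-orbits in $W$, which is the asserted one-to-one correspondence.
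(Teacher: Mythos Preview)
Your proposal is correct and follows the same overall architecture as the paper's proof: injectivity from the fact that every orbit meets $W$, the inclusion of the image in $\cB$, equality via matching Poincar\'e series, and then orbit separation by compact-group invariants. The one place where you diverge is the step you flag as ``the computational heart'': you propose to verify explicitly that each $M_i'$ is a polynomial in the symmetric generators $Q_1$, $Q_2+Q_3+Q_4$, $Q_5$, etc. The paper bypasses this entirely with a one-line abstract argument: since $G$ is a subgroup of $\Un(6)$ that preserves $W$, the restriction to $W$ of any $\Un(6)$-invariant polynomial is automatically $G$-invariant, so $\cA|_W\subseteq\cB$ for free. Your explicit matching would of course also work and has the side benefit of producing concrete change-of-generators formulas, but it is not needed for the theorem, and calling it the main obstacle overstates the difficulty of this step.
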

\begin{proof}
For $f\in\cA$ we denote by $f_W$ the restriction of $f$ to the
subspace $W\subseteq\we^3(V)$. Let us also denote by $\cA_W$ the image
of $\cA$ by this restriction map. Recall that each pure fermionic
state is LU-equivalent to an SOV. This implies that if $f_W=0$ and
$f\in\cA$, then $f=0$. Thus, the restriction map $\cA\to\cA_W$ is an
isomorphism of graded algebras. As $G$ is a subgroup of $\Un(6)$, it
is easy to verify that for $f\in\cA$ we have $f_W\in\cB$. Hence,
$\cA_W$ is a subalgebra of $\cB$.  It follows from Theorem
\ref{thm:algebraB} that the algebras $\cA_W$ and $\cB$ have the same
Poincar\'{e} series, namely the one given by Eq. \eqref{eq:serUn}. As
$\cA_W\subseteq\cB$, we must have the equality $\cA_W=\cB$.

Now let $\cO$ be an $\Un(6)$-orbit in $\we^3(V)$. As mentioned above,
we must have $\cO\cap W\ne\es$. Since $\cA_W=\cB$, it follows that
$\cO\cap W$ is a single $G$-orbit. Hence, the map which sends
$\cO\to\cO\cap W$ is indeed the one-to-one correspondence as asserted
in the theorem.
\end{proof}
\begin{corollary}\label{cr:Equivalence}
A polynomial $f\in\cC$ is invariant under qubit permutations (i.e., it
belongs to the subalgebra $\cB$) if and only if its restriction
$f':=f|_{W_6}$ is a symmetric polynomial with respect to the variables
$a$, $b$, $c$.
\end{corollary}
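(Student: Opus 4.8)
The plan is to reduce everything to the description of the induced $S_3$-action on the coordinates of $W_6$, using the identification $\cB=\cC^{S_3}$ (the two actions together generate $G$) together with the explicit restricted generators $Q_1',\dots,Q_7'$ written down in Eqs.~\eqref{eq:Q1'}--\eqref{eq:Q7'}.

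\emph{The forward implication} costs one line. By Section~\ref{sec:RDM} (see Eq.~\eqref{eq:sigmatau}), under the identification of $W$ with $\cH$ the qubit permutations act on $W_6$ simply by permuting the coordinates $a,b,c$ and fixing $d$ and $z$. So if $f\in\cC$ is additionally fixed by $S_3$, then $f'(\pi\cdot v)=f(\pi\cdot v)=f(v)=f'(v)$ for all $\pi\in S_3$ and $v\in W_6$, i.e., $f'$ is symmetric in $a,b,c$.

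\emph{For the converse}, let $f\in\cC$ with $f'$ symmetric in $a,b,c$, and set $\tilde f:=\frac16\sum_{\pi\in S_3}f^{\pi}$. Since $S_3$ preserves $\cC$ and fixes $\tilde f$, we have $\tilde f\in\cC^{S_3}=\cB$; and since $f'$ is symmetric, $(f^{\pi})|_{W_6}=f|_{W_6}$ for each $\pi$, hence $\tilde f|_{W_6}=f|_{W_6}$. It therefore suffices to show that the restriction map $\rho\colon\cC\to\bC[a,b,c,d,x,y]$, $g\mapsto g|_{W_6}$, is injective, for then $f=\tilde f\in\cB$. To prove injectivity I would first show that $Q_1',\dots,Q_6'$ are algebraically independent: by Theorem~\ref{th:equiv} the restriction $\cA\to\cB$ (from $\we^3(V)$ to $W$) is a graded isomorphism, and by Lemma~\ref{lm:5basis} any element of $\cA$ is determined by its restriction to $W_6$, so the further restriction $\cB\to\bC[a,b,c,d,x,y]$ is injective; hence the six primary generators of $\cB$ from Theorem~\ref{thm:algebraB} restrict to six algebraically independent polynomials, namely $Q_1'$, $Q_5'$, $Q_6'$ and the three elementary symmetric functions of $Q_2',Q_3',Q_4'$. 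Since $Q_2',Q_3',Q_4'$ are roots of a cubic over the subfield generated by those symmetric functions, $\bC(Q_1',\dots,Q_6')$ has transcendence degree $6$, which forces $Q_1',\dots,Q_6'$ to be algebraically independent. Now write any $g\in\cC$ as $g=P(Q_1,\dots,Q_6)+R(Q_1,\dots,Q_6)\,Q_7$ with $P,R$ polynomials, using that $\cC$ is free of rank $2$ over $\cC_p=\bC[Q_1,\dots,Q_6]$ with basis $\{1,Q_7\}$. From Eqs.~\eqref{eq:Q1'}--\eqref{eq:Q7'}, each $Q_i'$ with $i\le 6$ is even in $y$, whereas $Q_7'=8abcdxy\,\Phi(a,b,c,d,z)$ is odd in $y$ and is a nonzero polynomial; so if $g|_{W_6}=0$, splitting into its $y$-even and $y$-odd parts gives $P(Q_1',\dots,Q_6')=0$ and $R(Q_1',\dots,Q_6')\,Q_7'=0$. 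Algebraic independence forces $P=0$, and then $R=0$ since $\bC[a,b,c,d,x,y]$ is a domain and $Q_7'\ne0$. Hence $g=0$, $\rho$ is injective, and the proof is complete.

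\emph{Main obstacle.} The one substantial step is the injectivity of $\rho$: in contrast with the fermionic algebra $\cA$, a three-qubit LU-invariant that is \emph{not} assumed $S_3$-symmetric is not obviously determined by its restriction to the $6$-dimensional subspace $W_6$, so this cannot be quoted directly. The parity-in-$y$ observation is what lets the single secondary generator $Q_7$ be handled cleanly without invoking the (later) canonical-form fact that every $\Un(2)^3$-orbit already meets $W_6$; alternatively one could establish the algebraic independence of $Q_1',\dots,Q_6'$ by exhibiting an explicit point of $W_6$ at which their differentials have rank $6$, as was done for the $M_i'$ in the proof of Theorem~\ref{thm:Ferm-Generators}.
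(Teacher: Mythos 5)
Your proof is correct, and its skeleton is the same as the paper's: both directions come down to the observation that the qubit permutations act on $W_6$ by permuting $a,b,c$ and fixing $d,x,y$, combined with the injectivity of the restriction map $f\mapsto f|_{W_6}$ on $\cC$ (your averaging over $S_3$ is just a packaged form of the paper's direct appeal to that injectivity). The genuine difference is in how that injectivity is treated. The paper simply quotes it as a known fact; it ultimately rests on every $\Un(2)\times\Un(2)\times\Un(2)$-orbit meeting $W_6$, i.e., on the existence of the symmetric five-term decomposition of Acin et al., which in this paper is only re-derived later (Lemma \ref{le:KanForm} and Proposition \ref{pp:ThetaKAN}). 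You correctly identify this as the one nontrivial step and supply a purely algebraic proof: the restrictions $Q'_1,\ldots,Q'_6$ are algebraically independent (deduced from Theorem \ref{thm:algebraB} together with the injectivity of $\cA\to\cB\to\bC[a,b,c,d,x,y]$ and a transcendence-degree argument), and the free-module decomposition $g=P(Q)+R(Q)Q_7$ is separated on $W_6$ by parity in $y$, since $Q'_1,\ldots,Q'_6$ are even in $y$ while $Q'_7$ is odd and nonzero. This is a sound and self-contained route that avoids both the external citation and the forward reference; the paper's version is shorter but leaves that step to the reader.
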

\begin{proof}
Recall that $W_6$ is an $S_3$-invariant real subspace of $W$ of
dimension six, with coordinates $a$, $b$, $c$, $d$, $x$, $y$. It is
easy to verify that, when restricted to $W_6$, $S_3$ fixes the
coordinates $d$, $x$, $y$ and permutes the coordinates $a$, $b$, $c$.
Now the assertion follows from the fact that the map which sends
$f\in\cC$ to its restriction $f'$ is injective.
\end{proof}

The one-to-one correspondence mentioned in Theorem \ref{th:equiv}
is in fact a homeomorphism between the orbit spaces
$\bwe^3(V)/\Un(6)$ and $W/G$. As sets, these orbit spaces are
just the set of $\Un(6)$-orbits in $\bwe^3(V)$ and the set of
$G$-orbits in $W$, respectively. However, they are also toplogical spaces with respect to the quotient topology arising
from the projection maps $\pi_V:\bwe^3(V)\to\bwe^3(V)/\Un(6)$ and
$\pi_W:W\to W/G$. For additional properties of orbit spaces of
compact Lie groups we refer the reader to
\cite[Chapter I]{gb:1972}.

\begin{corollary}\label{cr:OrbSpaces}
The one-to-one correspondence $\bwe^3(V)/\Un(6)\to W/G$,
constructed in Theorem \ref{th:equiv} is a homeomorphism.
\end{corollary}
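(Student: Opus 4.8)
The plan is to recognize the correspondence of Theorem~\ref{th:equiv} as (the inverse of) the map on orbit spaces induced by the inclusion $\iota\colon W\hookrightarrow\we^3(V)$, and then to check that this induced map is a continuous closed bijection, hence a homeomorphism. Write $\pi_V\colon\we^3(V)\to\we^3(V)/\Un(6)$ and $\pi_W\colon W\to W/G$ for the orbit projections, each carrying its quotient topology. Since $G\le\Un(6)$ and $G$ preserves $W$, the assignment $G\cdot w\mapsto\Un(6)\cdot w$ defines a set map $\bar\iota\colon W/G\to\we^3(V)/\Un(6)$ with $\pi_V\circ\iota=\bar\iota\circ\pi_W$; by Theorem~\ref{th:equiv} it is a bijection whose inverse is exactly the map $\cO\mapsto\cO\cap W$ of the Corollary. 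So it is enough to show that $\bar\iota$ is a homeomorphism.

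Continuity of $\bar\iota$ is immediate: $\bar\iota\circ\pi_W=\pi_V\circ\iota$ is continuous and $\pi_W$ is a quotient map, so the universal property of the quotient topology on $W/G$ forces $\bar\iota$ to be continuous. The substantive step is that $\bar\iota$ is a \emph{closed} map; since a continuous closed bijection is automatically a homeomorphism, this will finish the proof. Let $C\subseteq W/G$ be closed and put $A:=\pi_W^{-1}(C)$, a closed subset of $W$. As $W=V_1\we V_2\we V_3$ is a linear, hence closed, subspace of $\we^3(V)$, the set $A$ is also closed in $\we^3(V)$. A direct check gives $\pi_V^{-1}\big(\bar\iota(C)\big)=\Un(6)\cdot A$, the $\Un(6)$-saturation of $A$. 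This saturation is closed in $\we^3(V)$: it is the image of the closed set $\Un(6)\times A$ first under the homeomorphism $(g,x)\mapsto(g,gx)$ of $\Un(6)\times\we^3(V)$ and then under the projection to the second factor, and projection along the compact factor $\Un(6)$ is a closed map. Hence $\pi_V^{-1}(\bar\iota(C))$ is closed, i.e.\ $\bar\iota(C)$ is closed in $\we^3(V)/\Un(6)$, as needed.

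The only ingredient that is not purely formal is the closedness of the $\Un(6)$-saturation of a closed set --- equivalently, that the orbit map $\pi_V$ is a closed (indeed proper) map --- which is precisely where compactness of $\Un(6)$ enters; I would regard this as the main, though modest, obstacle. An alternative that isolates the compactness even more cleanly is to restrict to unit spheres: $\bar\iota$ commutes with the norm function $[\psi]\mapsto\|\psi\|$, so it is determined by its restriction $S_W/G\to S/\Un(6)$, where $S\subset\we^3(V)$ and $S_W=S\cap W$ are the unit spheres and $\bar\iota$ is the cone on this restriction. Now $S_W$ is compact, so $S_W/G$ is compact, whereas $S/\Un(6)$ is Hausdorff (the orbit space of a compact group acting on a Hausdorff space, cf.\ \cite[Chapter~I]{gb:1972}); a continuous bijection from a compact space to a Hausdorff space is a homeomorphism, and coning off recovers the homeomorphism $\bar\iota$. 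Either route gives the statement immediately from Theorem~\ref{th:equiv}.
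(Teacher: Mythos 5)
Your argument is correct, and it reaches the conclusion by a genuinely different route than the paper. Both proofs establish continuity of the induced map $W/G\to\bwe^3(V)/\Un(6)$ in the same formal way, via the universal property of the quotient topology applied to $\bar\iota\circ\pi_W=\pi_V\circ\iota$. The difference is in the converse direction: the paper lets $\Un(6)$ act trivially on $W/G$, observes (using Theorem~\ref{th:equiv} and Lemma~\ref{lm:5basis}) that $\pi_W$ satisfies the hypotheses of Bredon's extension theorem \cite[Theorem 3.3]{gb:1972}, and so extends $\pi_W$ to a continuous $\Un(6)$-invariant map $\pi'_W$ on all of $\bwe^3(V)$ with $\pi'_W=T^{-1}\circ\pi_V$, from which continuity of $T^{-1}$ follows; you instead prove directly that $\bar\iota$ is a closed map, by checking that $\pi_V^{-1}(\bar\iota(C))$ is the $\Un(6)$-saturation of the closed set $\pi_W^{-1}(C)$ (closed in $\bwe^3(V)$ because $W$ is a closed subspace) and that saturations of closed sets under a compact group action are closed. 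Both arguments ultimately rest on compactness of $\Un(6)$, but yours isolates that dependence in an elementary point-set fact rather than in a citation to the theory of transformation groups, and is thus more self-contained; the paper's is shorter given the reference. Your alternative via unit spheres (continuous bijection from the compact $S_W/G$ to the Hausdorff $S/\Un(6)$, then coning off) is also sound, though it requires the additional standard facts that orbit spaces of compact group actions on Hausdorff spaces are Hausdorff and that the radial decomposition induces the cone topology on the orbit space; the closed-map route avoids these extra checks.
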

\begin{proof}
Let us denote by $T$ the inverse of this correspondence.
Thus, for any $\ket{\phi}\in W$ we have
$T(G\cdot\ket{\phi})=\Un(6)\cdot\ket{\phi}$. Since the
composite of $\pi_V$ and the inclusion map $\iota:W\to\bwe^3(V)$ is continuous and coincides with $T\circ\pi_W$, we deduce that
$T$ is continuous. It remains to show that $T^{-1}$ is continuous. By Lemma \ref{lm:5basis} we know that
$\Un(6)\cdot W=\bwe^3(V)$. Let $\Un(6)$ act trivially on $W/G$. If $\ket{\a},\ket{\b}\in W$ are LU-equivalent, then by Theorem
\ref{th:equiv} they belong to the same $G$-orbit and so
$\pi_W(\a)=\pi_W(\b)$. Therefore we can apply
\cite[Theorem 3.3]{gb:1972} to the $\Un(6)$-spaces
$\bwe^3(V)/\Un(6)$ and $W/G$. We conclude that
$\pi_W$ extends uniquely to a continuous map
$\pi'_W:\bwe^3(V)\to W/G$ which is constant on $\Un(6)$-orbits.
Since $\pi_V\circ\iota=T\circ\pi_W=T\circ\pi'_W\circ\iota$,
it follows easily that $\pi_V=T\circ\pi'_W$. Hence,
$\pi'_W=T^{-1}\circ\pi_V$ and so $T^{-1}$ must be continuos.
\end{proof}

\section{The canonical region}
\label{sec:KanReg}

In this section we shall work with normalized states only, which means
that we have $M_1=1$. For convenience, we shall identify any point
$(a,b,c,d,z)\in\bR^4\times\bC$ with the corresponding vector
$\ket{\psi}$ given by Eq. \eqref{eq:5dimVektor}. We shall always write
the complex coordinate $z$ as $z=x+iy$, where $x$ and $y$ are
real. Thus we have identified the subspace $W_6$ with the product
$\bR^4\times\bC$.  We denote by $\S$ the unit sphere of $W_6$, i.e.,
the set of all points $(a,b,c,d,z)\in W_6$ such that
$a^2+b^2+c^2+d^2+|z|^2=1$.  We shall denote by $\S_i$, $i=1,2,3$, the
unit sphere of the subspace $V_i$ of $V$, see Eq. \eqref{eq:Vi}.

\subsection{The canonical region $\Delta$}

To begin, we introduce two continuous functions $\mu\colon
\we^3(V)\to\bR$ and $\mu'\colon W\to\bR$.
\begin{definition} \label{def:mu}
For $\ket{\psi}\in\we^3(V)$, we set
\begin{eqnarray} \label{eq:mu}
\mu(\ps) &=& \max_{\a,\b,\g} |{\braket{\a\we\b\we\g}{\ps}}|, \quad \| \a\we\b\we\g \|=1,
\end{eqnarray}
and for $\ket{\psi}\in W$, we set
\begin{eqnarray} \label{eq:mu'}
\mu'(\ps) &=& \max_{\a,\b,\g} |{\braket{\a\we\b\we\g}{\ps}}|, \quad (\a,\b,\g)\in\S_1\times\S_2\times\S_3.
\end{eqnarray}
\end{definition}

We prove that $\mu'$ is the restriction of $\mu$.
\begin{lemma}\label{le:maximum}
For $\ket{\psi}\in W$, we have $\mu'(\ps)=\mu(\ps)$.
\end{lemma}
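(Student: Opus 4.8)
The inequality $\mu'(\ps)\le\mu(\ps)$ is immediate: since $V_1,V_2,V_3$ are mutually orthogonal, any $\ket{\a\we\b\we\g}$ with $(\a,\b,\g)\in\S_1\times\S_2\times\S_3$ is a wedge of unit vectors from orthogonal subspaces, hence a unit vector of $\bwe^3(V)$; so the supremum defining $\mu'(\ps)$ is taken over a subset of the competitors for $\mu(\ps)$.

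For the reverse inequality I would fix $\ket{\ps}\in W$ and show $\abs{\braket{\a\we\b\we\g}{\ps}}\le\mu'(\ps)$ for \emph{every} unit decomposable $3$-vector. By compactness of $\mathrm{Gr}(3,V)$, pick orthonormal $\a,\b,\g$ so that $\ket{\a\we\b\we\g}$ attains $\mu(\ps)$, with a phase fixed so that $\braket{\a\we\b\we\g}{\ps}=\mu(\ps)\ge0$. Differentiating $\braket{\a\we\b\we\g}{\ps}$ along the tangent directions of $\mathrm{Gr}(3,V)$ at $U:=\lin\{\a,\b,\g\}$, and using that the first-order variation of $\ket{\a\we\b\we\g}$ is orthogonal to $\ket{\a\we\b\we\g}$, gives the Euler--Lagrange equations
\[
\iota_{\b\we\g}\ps=\mu(\ps)\,\a,\qquad
\iota_{\g\we\a}\ps=\mu(\ps)\,\b,\qquad
\iota_{\a\we\b}\ps=\mu(\ps)\,\g,
\]
where $\iota_\eta\ps$ denotes the vector dual (via the inner product) to $\ket{\xi}\mapsto\braket{\xi\we\eta}{\ps}$. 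Then I would feed in the hypothesis $\ket{\ps}\in W=V_1\we V_2\we V_3$: writing $\a=\sum_iP_i\a$, etc., with $P_i$ the orthogonal projection onto $V_i$, a contraction $\iota_\eta\ps$ annihilates the part of $\eta$ lying in any single $\bwe^2 V_i$, so projecting the three equations onto $V_1,V_2,V_3$ yields a closed system in the nine vectors $P_i\a,P_i\b,P_i\g$. The aim is to deduce from this system that, after an allowed unitary change of orthonormal frame inside $U$ (which changes $\ket{\a\we\b\we\g}$ only by a phase), $U$ splits as $L_1\oplus L_2\oplus L_3$ with lines $L_i\subseteq V_i$; for such $U$ the vector $\ket{\a\we\b\we\g}$ is, up to a phase, of the form allowed in the definition of $\mu'$, so $\mu(\ps)=\abs{\braket{\a\we\b\we\g}{\ps}}\le\mu'(\ps)$.

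The main obstacle is exactly this last step --- peeling off the block-diagonal structure of the optimal $U$ from the Euler--Lagrange system. The difficulty is that the alternating (determinantal) structure of the wedge product interacts nontrivially with the orthogonal splitting $\bC^6=V_1\oplus V_2\oplus V_3$: applying the triangle inequality to the expansion $\braket{\a\we\b\we\g}{\ps}=\sum_{\pi\in S_3}\pm\braket{(P_{\pi(1)}\a)\we(P_{\pi(2)}\b)\we(P_{\pi(3)}\g)}{\ps}$ only gives the bound $\mu'(\ps)$ times the permanent of the matrix of block norms $\|P_i\a\|,\|P_i\b\|,\|P_i\g\|$, and that permanent can exceed $1$, so one is forced to exploit the sign cancellations inside $\ket{\a\we\b\we\g}$. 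An alternative route, closer to the explicit-coordinate computations used elsewhere in the paper, is to first invoke the $G$-invariance of both $\mu$ and $\mu'$ together with Lemma~\ref{lm:5basis} and Theorem~\ref{th:equiv} to reduce $\ket{\ps}$ to the five-parameter form $a e_{235}+b e_{145}+c e_{136}+d e_{246}+z e_{135}$, and then to verify directly that an optimal decomposable $3$-vector for this state may be chosen with a single factor in each $V_i$; this turns the lemma into a concrete finite-dimensional extremal problem in $a,b,c,d,z$.
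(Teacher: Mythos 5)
Your proof of $\mu'(\ps)\le\mu(\ps)$ is fine, but the reverse inequality --- which is the entire content of the lemma --- is not established. Your Euler--Lagrange setup is a reasonable opening move, but you stop exactly at the decisive step and say so yourself: you never deduce from the stationarity equations that the optimal $3$-space $U$ splits as $L_1\oplus L_2\oplus L_3$ with $L_i\subseteq V_i$. That splitting claim is in fact stronger than what is needed (one only needs to bound the \emph{value} $\abs{\braket{\a\we\b\we\g}{\ps}}$ by $\mu'(\ps)$, not to characterize the maximizer), and it is not clear it even holds at degenerate maximizers. Likewise your fallback route via the five-parameter form is left as an unsolved ``concrete extremal problem,'' and it flirts with circularity: in the paper the reduction of a state of $W$ to that form by the group $G$ (Lemma~\ref{le:KanForm}) is itself derived \emph{from} Lemma~\ref{le:maximum}. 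So as written there is no proof of $\mu(\ps)\le\mu'(\ps)$.

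The paper closes this gap without any optimality conditions, by a change of frame inside $L:=\lin\{\a,\b,\g\}$. Since $\dim L=3$ and each $V_i+V_j$ has dimension $4$ in $\bC^6$, the intersections $L\cap(V_2+V_3)$ and $L\cap(V_1+V_2)$ are nonzero; one picks unit vectors $\ket{\a'}\in L\cap(V_2+V_3)$ and $\ket{\g'}\in L\cap(V_1+V_2)$, completes to a basis of $L$ with a vector $\ket{\b''}$ orthogonal to both (suitably normalized), and notes that $\ket{\g'\we\b''\we\a'}$ equals $\ket{\a\we\b\we\g}$ up to a phase. Because $\ps\in W$ pairs nontrivially only with terms having exactly one factor in each $V_i$, and because $\a'$ has no $V_1$-component while $\g'$ has no $V_3$-component, the expansion of $\braket{\g'\we\b''\we\a'}{\ps}$ contains only \emph{three} of the six permutation terms --- this is precisely the sign/structure cancellation you correctly identified as necessary but could not produce. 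The Cauchy--Schwarz estimate then yields the factor $(a_1b_2c_3+a_1b_3c_2+a_2b_1c_3)\le\bigl((1-a_2^2c_2^2)/(1-ta_2^2c_2^2)\bigr)^{1/2}\le1$, where the orthogonality normalization of $\ket{\b''}$ is exactly what keeps this ``reduced permanent'' from exceeding $1$. If you want to salvage your write-up, replacing the Euler--Lagrange step by this frame argument is the missing idea.
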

\begin{proof}
Note that $\mu'(\ps)\le\mu(\ps)$.  Let $\ket{\a\we\b\we\g}$ be a unit
vector such that $\mu(\ps)=|\braket{\a\we\b\we\g}{\ps}|$.  Let
$L=\lin\{\a,\b,\g\}$ and choose unit vectors
$\ket{\a'},\ket{\b'},\ket{\g'}\in L$ such that $\ket{\a'}\in V_2+V_3$,
$\ket{\b'}\in V_1+V_3$, and $\ket{\g'}\in V_1+V_2$. At least two of
these three vectors must be linearly independent, say $\ket{\a'}$ and
$\ket{\g'}$.  We can write $\ket{\a'}=c_2\ket{\g_2}+c_3\ket{\g_3}$
where $c_2,c_3\ge0$, and $\ket{\g_2}\in V_2$ and $\ket{\g_3}\in V_3$
are unit vectors. Similarly, we write
$\ket{\g'}=a_1\ket{\a_1}+a_2\ket{\a_2}$ where $a_1,a_2\ge0$, and
$\ket{\a_1}\in V_1$ and $\ket{\a_2}\in V_2$ are unit vectors.  As
$\mu(\ps)>0$, either $a_1$ or $c_3$ is non-zero, and so we have
$a_2c_2<1$. We have $|\braket{\a'}{\g'}|^2=ta_2^2c_2^2$ where
$t:=|\braket{\a_2}{\g_2}|^2\le1$. Finally, let
$\ket{\b''}=b_1\ket{\b_1}+b_2\ket{\b_2}+b_3\ket{\b_3}\in L$, with
$\ket{\b_i}\in V_i$ and $b_i\ge0$, be a nonzero vector orthogonal to
$\ket{\a'}$ and $\ket{\g'}$. We fix its norm to be
$\|\b''\|=1/\sqrt{1-ta_2^2c_2^2}$.  Since both $\ket{\a\we\b\we\g}$
and $\ket{\g'\we\b''\we\a'}$ are unit vectors in $\we^3(L)$, they
differ only by a phase factor.  Hence, we can assume that
$\ket{\a}=\ket{\g'}$, $\ket{\b}=\ket{\b''}$ and $\ket{\g}=\ket{\a'}$.
Since
\begin{eqnarray}\label{ea:abg}
|\braket{\a\we\b\we\g}{\psi}|&\le&
a_1b_2c_3|\braket{\a_1\we\b_2\we\g_3}{\psi}|\nonumber\\
&&\quad+a_1b_3c_2|\braket{\a_1\we\b_3\we\g_2}{\psi}|
 +a_2b_1c_3|\braket{\a_2\we\b_1\we\g_3}{\psi}|,
\end{eqnarray}
we infer that $\mu(\psi)\le(a_1b_2c_3+a_1b_3c_2+a_2b_1c_3)
\mu'(\psi)$. We have $\mu(\psi)\le\mu'(\psi)$ because
\begin{eqnarray} \notag
 (a_1b_2c_3+a_1b_3c_2+a_2b_1c_3)^2
 &\le&
 (a_1^2c_3^2+a_1^2c_2^2+a_2^2c_3^2)
 (b_1^2+b_2^2+b_3^2) \\  \label{ea:abc=1}
  &=&
  \frac{1- a_2^2c_2^2}{1-ta_2^2c_2^2}  \le 1.
\end{eqnarray}
Consequently, $\mu'(\psi)=\mu(\psi)$.
\end{proof}

For $\ket{\psi}\in W_6$ given by Eq. \eqref{eq:5dimVektor}, we have
\begin{eqnarray} \label{eq:muW6}
\mu(\psi)^2 &=& \max_{u_1,u_2,u_3} |\braket{\psi}{
(\x_1\ket{1}+\eta_1\ket{2})\we
(\x_2\ket{3}+\eta_2\ket{4})\we
(\x_3\ket{5}+\eta_3\ket{6})}|^2 \notag \\
&=& \max_{u_1,u_2,u_3}
|a\eta_1\x_2\x_3+b\x_1\eta_2\x_3+c\x_1\x_2\eta_3+
d\eta_1\eta_2\eta_3+z^*\x_1\x_2\x_3|^2,
\end{eqnarray}
where the maximum is taken over all unit vectors
$u_i=(\x_i,\eta_i)\in\bC^2$, $i=1,2,3$.

\begin{definition}\label{def:Delta}
We denote by $\D$ the subset of $\S$ consisting of all points
$p=(a,b,c,d,z)$, $z=x+iy$, such that (i) $a\ge b\ge c\ge0$, $x\ge0$
and (ii) $d=\mu(\psi)$, where the state $\ket{\psi}$ corresponds to
$p$.
\end{definition}
Since $\mu$ is a continuous function, it follows that $\D$ is a closed
(and so compact) subset of $\S$. The relative interior of $\Delta$ as
a subset of $\S$, is the set $\Delta^0$ of all points $p\in\Delta$ for
which there exists $\varepsilon>0$ such that $q\in\S$ and
$\|q-p\|<\varepsilon$ imply that $q\in\D$. As $\D$ is closed, the
relative boundary $\partial\D$ of $\D$ is the set-theoretic difference
$\D\setminus\D^0$. At this point we do not know whether
$\D^0\ne\emptyset$, but we shall see later that this is the case.

Note that $\D$ is contained in the northern hemisphere of $\S$ defined
by the inequality $d>0$.  We shall prove now that $\D$ and $\D^0$ are
connected.  For convenience, we shall identify in this lemma the points
$p\in\bR^4\times\bC$ with the corresponding vectors $\vec{0p}$.
\begin{lemma} \label{le:connection}
Let $p,q\in\D$ and let $r=tp+(1-t)q$ where $0<t<1$. Then the point
$s=r/\|r\|$ belongs to $\D$. Consequently, $\D$ and $\D^0$ are
connected.
\end{lemma}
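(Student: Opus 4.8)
The plan is to verify the three conditions defining $\D$ (Definition~\ref{def:Delta}) for the point $s$, one at a time; the first two are essentially formal and the whole content is the third. Following the lemma's convention I identify points of $W_6\cong\bR^4\times\bC$ with the corresponding vectors in $\we^3(V)$, so that $r=tp+(1-t)q$ has each of its coordinates $a,b,c,d,x,y$ equal to the $t$-weighted convex combination of those of $p$ and $q$, and $\mu$ may be applied directly to $p,q,r,s$. The inequalities $a\ge b\ge c\ge0$ and $x\ge0$ of condition~(i) survive convex combinations and are invariant under positive rescaling, hence hold for $s=r/\|r\|$. Moreover the $d$-coordinate of $r$ is $td_p+(1-t)d_q>0$, where $d_p,d_q>0$ denote the $d$-coordinates of $p,q$ (positive because $\D\subseteq\{d>0\}$); so $r\ne0$, $s$ is well defined, and $\|s\|=1$. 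Everything thus reduces to condition~(ii): $d_s=\mu(s)$.

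Since $\mu$ is positively homogeneous of degree~$1$ we have $d_s=(td_p+(1-t)d_q)/\|r\|$ and $\mu(s)=\mu(r)/\|r\|$, so condition~(ii) for $s$ is equivalent to $\mu(r)=td_p+(1-t)d_q$, where we may now use $d_p=\mu(p)$, $d_q=\mu(q)$ because $p,q\in\D$. For ``$\le$'' I would invoke subadditivity of $\mu$: for every unit decomposable $\ket{\a\we\b\we\g}$ one has $|\braket{\a\we\b\we\g}{r}|\le t|\braket{\a\we\b\we\g}{p}|+(1-t)|\braket{\a\we\b\we\g}{q}|\le t\mu(p)+(1-t)\mu(q)$, and taking the maximum over $\ket{\a\we\b\we\g}$ gives $\mu(r)\le td_p+(1-t)d_q$. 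For ``$\ge$'' I would test against the single unit decomposable vector $e_{246}=\ket2\we\ket4\we\ket6$: for every vector of the form~\eqref{eq:5dimVektor} the overlap $\braket{e_{246}}{\,\cdot\,}$ equals the $d$-coordinate, so $\mu(r)\ge|\braket{e_{246}}{r}|=td_p+(1-t)d_q$. Combining the two bounds yields the required equality, hence $s\in\D$. I expect this last ``$\ge$'' step to be the real point of the argument: it shows that $d\le\mu$ holds identically on $W_6$, so the equation $d=\mu$ cutting out $\D$ is in fact the \emph{inequality} $d\ge\mu$ intersected with $\S$ --- and it is this that makes the construction convex. Everything else is bookkeeping.

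Connectedness is then a corollary. For $\D$: the map $t\mapsto(tp+(1-t)q)/\|tp+(1-t)q\|$ is a continuous path (the denominator never vanishes, by the $d$-coordinate bound) lying in $\D$ and joining $q$ to $p$, so $\D$ is path connected. For $\D^0$ I would record the structural fact lurking behind the computation: by subadditivity and homogeneity of $\mu$, the conical hull $C:=\{\lambda v:\lambda\ge0,\ v\in\D\}$ coincides with the set of $v=(a,b,c,d,z)\in W_6$ satisfying $a\ge b\ge c\ge0$, $x\ge0$, and $d\ge\mu(v)$ (the two descriptions agreeing precisely because $d\le\mu(v)$ holds for every such $v$, via $e_{246}$), and $C$ is a convex cone with $\D=C\cap\S$. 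If $C$ is $6$-dimensional, a short point-set argument identifies $\D^0$ with $\mathop{\rm int}(C)\cap\S$, which is the radial image of the open convex cone $\mathop{\rm int}(C)$ and hence connected; if $\dim C<6$ then $\D^0=\es$. In either case $\D^0$ is connected. (One can also avoid the case split: if $p,q\in\D^0$, then since distinct non-antipodal unit vectors satisfy $p^\perp+q^\perp=W_6$, the map $(p',q')\mapsto(tp'+(1-t)q')/\|tp'+(1-t)q'\|$ is a submersion near $(p,q)$, hence open, and it carries a neighborhood of $(p,q)$ contained in $\D\times\D$ into $\D$ by the first part; thus $s$ has an $\S$-neighborhood in $\D$, i.e.\ $s\in\D^0$, and spherical segments then make $\D^0$ path connected.)
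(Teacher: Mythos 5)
Your proof is correct and follows essentially the same route as the paper: the linear inequalities survive convex combination and rescaling, the upper bound $\mu(r)\le td_p+(1-t)d_q$ comes from the triangle inequality applied to each overlap $|\braket{\a\we\b\we\g}{r}|$, and the reverse bound is the trivial one obtained by testing against $e_{246}$. The only difference is that you spell out the connectedness of $\D^0$ (via the convex-cone/radial-projection argument), which the paper states as an immediate consequence without further justification; your added detail is sound.
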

\begin{proof}
Since $p$ and $q$ satisfy the linear inequalities (i) of Definition
\ref{def:Delta}, so do $r$ and $s$. It remains to prove that $s$ also
satisfies the condition (ii) of that definition. Let $\ket{\ps_1}$,
$\ket{\ps_2}$, and $\ket{\ps_3}$ be the states corresponding to the
points $p$, $q$, and $s$, respectively, and note that
$\|r\|\cdot\ket{\ps_3}=t\ket{\ps_1}+(1-t)\ket{\ps_2}$. Let
$p=(a_1,b_1,c_1,d_1,z_1)$, $q=(a_2,b_2,c_2,d_2,z_2)$, and
$s:=(a_3,b_3,c_3,d_3,z_3)$. By definition of $\mu$ we have
\begin{eqnarray}
\mu(s) &=& \max_{\a,\b,\g} |{\braket{\a\we\b\we\g}{\ps_3}}|
 \notag\\
 &\le& \frac{1}{\norm{r} } \bigg( t\cdot\max_{\a,\b,\g}
 |\braket{\a\we\b\we\g}{\ps_1}|
 +
(1-t)\cdot\max_{\a,\b,\g}
 |\braket{\a\we\b\we\g}{\ps_2}| \bigg)
 \notag\\
 &=&\frac{1}{\norm{r} } (t d_1 + (1-t) d_2) = d_3,
\end{eqnarray}
where the maxima are over all decomposable $3$-vectors
$\ket{\a\we\b\we\g}$ of unit norm. As the inequality $\mu(s)\ge d_3$
is trivial, the proof is completed.
\end{proof}

In the following lemma we prove a basic property of $\D$.
\begin{lemma}\label{le:KanForm}
Each LU-orbit of normalized fermionic states
$\ket{\psi}\in\wedge^3(V)$ has a representative in $\Delta$.
\end{lemma}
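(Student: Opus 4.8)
The plan is to start from the representative guaranteed by Lemma \ref{lm:5basis}, namely a state of the form \eqref{eq:5dimVektor} with $a\ge b\ge c\ge 0$ (the reordering of $a,b,c$ is allowed by the $S_3$-action \eqref{eq:sigmatau}, which is realized by unitaries in $\Un(6)$) and $d\ge 0$, and then to massage it further so that the remaining coordinate $d$ becomes exactly $\mu(\psi)$. The key observation is that the value $\mu$ is an LU-invariant, so it suffices to exhibit, inside the LU-orbit of a given state, some representative lying in $W_6$ in which the coefficient of $e_{246}$ equals the maximal overlap. Concretely, suppose $\ket{\psi}$ has the form \eqref{eq:5dimVektor} and let $\ket{\a\we\b\we\g}$ be a unit decomposable $3$-vector achieving $|\braket{\a\we\b\we\g}{\psi}|=\mu(\psi)$. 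By Lemma \ref{le:maximum} (applied after first moving $\ket{\psi}$ into $W$, which it already is) we may take $\ket{\a}\in V_1$, $\ket{\b}\in V_2$, $\ket{\g}\in V_3$, i.e. $\ket{\a\we\b\we\g}$ is a product of unit vectors, one from each $V_i$.

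Next I would use a unitary $U=U_1\oplus U_2\oplus U_3\in \Un(2)^{\times 3}\subset G\subset\Un(6)$, choosing each $U_i$ so that it sends the second basis vector of $V_i$ onto $\ket{\a}$, $\ket{\b}$, $\ket{\g}$ respectively (up to phase); more precisely, pick $U_i$ with $U_i\ket{2}=\ket{\a}$, $U_i\ket{4}=\ket{\b}$, $U_i\ket{6}=\ket{\g}$ in the appropriate slots. Then $U$ maps $e_{246}=\ket{2}\we\ket{4}\we\ket{6}$ to $\ket{\a\we\b\we\g}$, and the coefficient of $e_{246}$ in $U^{-1}\ket{\psi}$ equals, up to a phase, $\braket{\a\we\b\we\g}{\psi}$, which has modulus $\mu(\psi)$. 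Since $U\in G$ preserves $W$, the transformed state $U^{-1}\ket{\psi}$ is still a SOV state; it may not be in the normal form \eqref{eq:5dimVektor} verbatim, but being in $W$ it is LU-equivalent (again by Lemma \ref{lm:5basis} applied to this $3$-qubit state, equivalently by the three-qubit result of \cite{ajt01}) to a state of the form \eqref{eq:5dimVektor}. The point to check is that the normalization step can be carried out without decreasing the coefficient of $e_{246}$: once $|\xi_{246}|=\mu$ is forced, since $\mu$ is the \emph{maximum} overlap, the coefficient of $e_{246}$ in \emph{any} LU-equivalent SOV representative is at most $\mu$, hence any normal form \eqref{eq:5dimVektor} reached from here must have $d\le\mu$; but $d=\mu$ is also attainable (e.g. one can restrict the further normalization to act only on $V_1,V_2$ or arrange the remaining unitary freedom to fix the distinguished direction), so we land on $d=\mu(\psi)$. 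Finally a diagonal phase unitary in $\Un(2)^{\times 3}$ makes $a,b,c\ge 0$ and $x=\Re z\ge 0$, placing the state in $\D$ by Definition \ref{def:Delta}.

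I expect the main obstacle to be exactly the last point: showing that after fixing $|\xi_{246}|=\mu(\psi)$ one can still bring the state to the canonical shape \eqref{eq:5dimVektor} \emph{while keeping} that coefficient equal to $\mu$, rather than shrinking it during the subsequent Schmidt-type reduction. The clean way around this is to argue by a semicontinuity/extremality argument instead of an explicit reduction: among all SOV representatives of the orbit written in the form \eqref{eq:5dimVektor} with $a\ge b\ge c\ge 0$ and $d\ge 0$, the coefficient $d$ of $e_{246}$ is bounded above by $\mu(\psi)$ (since $e_{246}$ is a unit decomposable vector), so it suffices to prove that the supremum $\mu(\psi)$ is achieved within this family; and the achievability follows from compactness of the orbit together with the construction above, which produces an LU-equivalent state whose $e_{246}$-coefficient has modulus exactly $\mu(\psi)$ and which lies in $W$, hence can be normalized. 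Once $d=\mu(\psi)$ is secured, the sign conditions in Definition \ref{def:Delta}(i) are a routine application of the residual phase freedom, and condition (ii) holds by construction, so $\ket{\psi}$ has a representative in $\D$.
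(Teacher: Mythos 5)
Your setup follows the paper's own proof up to the point where you rotate the maximizing product vector onto $e_{246}$ by an element of $\Un(2)\times\Un(2)\times\Un(2)$, so that the resulting SOV state $\ket{\psi}=\sum t_{ijk}e_{ijk}$ has $|t_{111}|=\mu(\psi)$. The genuine gap is precisely the step you flag as the main obstacle: passing from this state, which a priori lives in the full eight-dimensional space $W$, to the five-coefficient form \eqref{eq:5dimVektor} without disturbing the coefficient of $e_{246}$. Your proposed fix is circular. Compactness shows only that the supremum of $d$ over representatives already written in the form \eqref{eq:5dimVektor} is attained; it does not show that this supremum equals $\mu(\psi)$, which is exactly the assertion to be proved. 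The phrase ``lies in $W$, hence can be normalized'' assumes without justification that the reduction to the form \eqref{eq:5dimVektor} can be carried out while preserving $t_{111}$, and the parenthetical suggestions (restricting the normalization to $V_1,V_2$, or ``arranging the remaining unitary freedom'') do not visibly annihilate the coefficients $t_{011},t_{101},t_{110}$ of $e_{146},e_{236},e_{245}$.

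The missing idea --- and the crux of the paper's proof --- is that no further reduction is needed: maximality of the overlap at $e_{246}$ already forces $t_{ijk}=0$ whenever exactly one of $i,j,k$ is zero. Indeed, if say $t_{011}\ne 0$, consider the unit decomposable vector $\ket{\a\we 4\we 6}$ with $\ket{\a}=(t_{011}\ket{1}+\mu\ket{2})/\sqrt{|t_{011}|^2+\mu^2}\in V_1$; its overlap with $\ket{\psi}$ equals $\sqrt{|t_{011}|^2+\mu^2}>\mu$, contradicting $\mu=\mu(\psi)$. Hence $t_{011}=t_{101}=t_{110}=0$, the state is automatically of the form \eqref{eq:5dimVektor} with $d=\mu(\psi)$, and the remaining ordering and sign conditions of Definition \ref{def:Delta} follow from diagonal phase unitaries and the $S_3$-action, as you describe. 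Without this one observation your argument does not close.
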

\begin{proof}
Let $\ket{\psi}\in\we^3(V)$ be a unit vector. By Lemma \ref{lm:5basis}
the LU-orbit of $\ket{\psi}$ meets $W$, and so we may assume that
$\ket{\psi}\in W$, where $W=V_1\we V_2\we V_3$ is the SOV subspace
(see Eq. \eqref{eq:Vi}). By Lemma \ref{le:maximum} there exists unit
vectors $\ket{\a}\in V_1$, $\ket{\b}\in V_2$, and $\ket{\g}\in V_3$
such that $\mu(\ps)=\braket{\a\we\b\we\g}{\ps}$. Let
$g\in\Un(2)\times\Un(2)\times\Un(2)$ be chosen so that
$g\ket{\a}=\ket{2}$, $g\ket{\b}=\ket{4}$, and $g\ket{\g}=\ket{6}$. By
replacing $\ket{\psi}$ with $g\cdot\ket{\psi}$, we obtain that
$\mu:=\mu(\psi)=\braket{e_{246}}{\psi}$. Thus we can write
$\ket{\psi}=\sum_{i,j,k\in\{0,1\}}t_{ijk}e_{ijk}$, where $t_{ijk}\in\bC$,
with $t_{111}=\mu$.

It is now easy to show that $t_{ijk}=0$ if exactly one of $i,j,k$ is
zero. Assume that $i=0$ and $j=k=1$. Let
$\ket{\ph}=\ket{\a\we\b\we\g}$ where
$\a=(t_{011}\ket{1}+\mu\ket{2})/\sqrt{|t_{011}|^2+\mu^2}$,
$\b=\ket{4}$, and $\g=\ket{6}$. Then we have
$\mu\ge|\braket{\ph}{\ps}|=\sqrt{|t_{011}|^2+\mu^2}$, and so
$t_{011}=0$. Consequently $\ket{\psi}\in W_6$.

By applying a suitable diagonal LU transformation, we can further
assume that $t_{ijk}\ge0$ if exactly two of $i,j,k$ are zero. By
using the action of the symmetric group $S_3$ described in
Sec. \ref{sec:RDM}, we can further assume that $t_{100}\ge t_{010}\ge
t_{001}\ge0$. Finally, the diagonal LU transformation, which fixes the
basis vectors $\ket{i}$ for $i=2,4,6$ and multiplies them with $-1$
for $i=1,3,5$, will replace $t_{000}$ with $-t_{000}$ and will not
change any other $t_{ijk}$.  This means that we can assume that also
$\Re(t_{000})\ge0$. Hence $\ket{\psi}\in\Delta$.
\end{proof}

In view of this lemma, we shall refer to $\D$ as the {\em canonical
  region}. Evidently, two LU-equivalent states in $\D$ have the same
coefficient $d$. Thus
\begin{corollary}\label{cr:delta=DIFFERENTd}
Two points in $\D$ having different $d$-coordinate are not LU-equivalent.
\end{corollary}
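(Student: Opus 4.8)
The plan is to observe that the $d$-coordinate of a point of $\D$ is nothing but the value of the function $\mu$ evaluated at the corresponding state, and that $\mu$ is constant on LU-orbits; equality of $d$-coordinates is then forced by LU-equivalence. First I would verify that $\mu$ is an LU-invariant. Given $U\in\Un(6)$, the induced operator $\we^3 U$ on $\we^3(V)$ is unitary, hence preserves the Hermitian inner product, and it sends a decomposable unit $3$-vector $\ket{\a\we\b\we\g}$ to the decomposable unit $3$-vector $\ket{U\a\we U\b\we U\g}$; moreover $\ket{\a\we\b\we\g}\mapsto\ket{U\a\we U\b\we U\g}$ is a bijection of the set of all unit decomposable $3$-vectors onto itself. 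Therefore the maximum in Eq.~\eqref{eq:mu} defining $\mu(\ps)$ is unchanged when $\ket{\ps}$ is replaced by $\we^3 U\ket{\ps}$, i.e.\ $\mu(\we^3U\,\ps)=\mu(\ps)$.

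Next I would simply invoke condition (ii) of Definition~\ref{def:Delta}: if $p=(a,b,c,d,z)\in\D$ and $\ket{\ps}$ is the state corresponding to $p$, then $d=\mu(\ps)$. Hence, if $p,q\in\D$ are LU-equivalent, say the state $\ket{\ps_q}$ corresponding to $q$ equals $\we^3 U\ket{\ps_p}$ for some $U\in\Un(6)$, then their $d$-coordinates satisfy $d_q=\mu(\ps_q)=\mu(\ps_p)=d_p$. The contrapositive is exactly the assertion of the corollary.

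There is no real obstacle here; the corollary is an immediate consequence of Lemma~\ref{le:KanForm} together with the LU-invariance of $\mu$, which is the only point that needs (routine) checking, and it merely makes precise the remark preceding the statement that two LU-equivalent states in $\D$ share the same coefficient $d$. One could alternatively route the argument through Corollary~\ref{cr:Equiv}: since the invariants $M_1,\dots,M_7$ separate LU-orbits, the continuous LU-invariant $\mu$ factors through the orbit map, so it takes equal values on LU-equivalent states; and on $\D$ this invariant records precisely the coordinate $d$.
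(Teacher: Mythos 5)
Your argument is correct and is exactly the paper's (the corollary is stated there without proof, on the grounds that condition (ii) of Definition~\ref{def:Delta} identifies the $d$-coordinate with the LU-invariant $\mu$, which is precisely what you verify). One small remark: the appeal to Lemma~\ref{le:KanForm} in your closing paragraph is superfluous---only Definition~\ref{def:Delta} and the invariance of $\mu$ under $\we^3U$ are needed---but this does not affect the validity of the proof.
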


\subsection{Inequalities defining $\D$}

Our definition of $\D$ is not easy to use because the condition (ii)
is hard to verify. We shall prove that this condition can be replaced
by the inequality $d>0$ and four additional inequalities. The set of
inequalities defining $\D$ as a subset of $\S$ will be simplified
later, see Proposition \ref{pp:projection}. We need an auxiliary
lemma, but first let us make a couple of observations.

If $\a,\b,\g,\d$ are arbitrary complex numbers, then the following identity holds
\begin{eqnarray} \label{eq:C-identity}
(|\a|^2+|\b|^2+|\g|^2+|\d|^2)^2-4|\a\d-\b\g|^2 =
(|\a|^2-|\b|^2+|\g|^2-|\d|^2)^2+4|\a\b^*+\g\d^*|^2.
\end{eqnarray}

By the Cauchy-Schwarz inequality we have
\begin{eqnarray} \label{eq:C-S}
\max_{u=(\x,\eta)} |\a\x+\b\eta|^2 = |\a|^2+|\b|^2,
\end{eqnarray}
where $\a$ and $\b$ are arbitrary complex numbers, and the maximum is
taken over all unit vectors $u=(\x,\eta)\in\bC^2$. This observation
can be generalized as follows.
\begin{lemma}\label{le:C-S}
Let $\a,\b,\g,\d$ be any complex numbers. Denote the maximum of
$|(\a\x_1+\b\eta_1)\x_2+(\g\x_1+\d\eta_1)\eta_2|^2$ taken over all
unit vectors $u_i=(\x_i,\eta_i)\in\bC^2$, $i=1,2$, by $\lambda$. Then
\begin{eqnarray} \label{eq:C-Smax}
\lambda=\frac{1}{2} \left( |\a|^2+|\b|^2+|\g|^2+|\d|^2+
\sqrt{(|\a|^2+|\b|^2+|\g|^2+|\d|^2)^2-4|\a\d-\b\g|^2}
\right).
\end{eqnarray}
\end{lemma}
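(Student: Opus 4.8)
The statement is a two‑variable maximization: over unit vectors $u_1=(\x_1,\eta_1)$ and $u_2=(\x_2,\eta_2)$ in $\bC^2$, maximize $|(\a\x_1+\b\eta_1)\x_2+(\g\x_1+\d\eta_1)\eta_2|^2$. The plan is to do the $u_2$ maximization first, using the already established Cauchy--Schwarz identity \eqref{eq:C-S}: for fixed $u_1$, the inner maximum over $u_2$ equals $|\a\x_1+\b\eta_1|^2+|\g\x_1+\d\eta_1|^2$. So $\lambda=\max_{u_1}\big(|\a\x_1+\b\eta_1|^2+|\g\x_1+\d\eta_1|^2\big)$, which is the largest eigenvalue of the $2\times2$ positive semidefinite Hermitian matrix $H=A^\dagger A$, where $A=\begin{pmatrix}\a&\b\\\g&\d\end{pmatrix}$. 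In other words $\lambda$ is the square of the largest singular value of $A$.

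Next I would compute that largest eigenvalue explicitly. We have $\tr H=|\a|^2+|\b|^2+|\g|^2+|\d|^2$ and $\det H=|\det A|^2=|\a\d-\b\g|^2$. The eigenvalues of a $2\times2$ Hermitian matrix are $\tfrac12\big(\tr H\pm\sqrt{(\tr H)^2-4\det H}\big)$, so the larger one is exactly the right-hand side of \eqref{eq:C-Smax}. (One should note in passing that $(\tr H)^2-4\det H\ge0$ automatically, since $H$ is positive semidefinite, so the square root is real; indeed the identity \eqref{eq:C-identity} rewrites this discriminant as a manifest sum of squares $(|\a|^2-|\b|^2+|\g|^2-|\d|^2)^2+4|\a\b^*+\g\d^*|^2$, which both confirms nonnegativity and is presumably the reason \eqref{eq:C-identity} was recorded.) That completes the proof.

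There is really no serious obstacle here; the only point requiring a line of care is the reduction of the inner maximum to the eigenvalue problem. Concretely, $|\a\x_1+\b\eta_1|^2+|\g\x_1+\d\eta_1|^2=\|A u_1\|^2=\langle u_1, H u_1\rangle$, and maximizing a Hermitian quadratic form over unit vectors gives the top eigenvalue by the spectral theorem (or Rayleigh quotient). So the single genuine step is: (1) apply \eqref{eq:C-S} in the $u_2$ variable; (2) recognize the remainder as a Rayleigh quotient for $H=A^\dagger A$; (3) write the top eigenvalue via trace and determinant. If one prefers to avoid invoking the spectral theorem, one can instead directly maximize $|\a\x_1+\b\eta_1|^2+|\g\x_1+\d\eta_1|^2$ over $u_1=(\x_1,\eta_1)$ by writing $\x_1=\cos\theta$, $\eta_1=e^{i\ph}\sin\theta$ (a phase on $\x_1$ is irrelevant) and optimizing the resulting trigonometric expression in $\theta,\ph$; this reproduces the same formula but is messier, so I would present the eigenvalue argument as the main line and relegate the trigonometric computation to a remark or omit it.

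Finally I would remark that this lemma will be applied with $(\a,\b,\g,\d)$ suitable linear combinations of the coefficients $a,b,c,d,z$ appearing in \eqref{eq:muW6}, so that together with a further optimization over $u_3$ it yields the inequalities cutting out the canonical region $\D$; the appearance of expressions like $|\a\d-\b\g|$ here matches the structure of $F'$ and the $D_x$'s, which is presumably why the formula is stated in this form.
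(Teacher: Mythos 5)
Your proof is correct, and the second half takes a genuinely different route from the paper's. Both arguments begin identically: apply the scalar Cauchy--Schwarz identity \eqref{eq:C-S} in the $u_2$ variable to reduce $\lambda$ to $\max_{u_1}\left(|\a\x_1+\b\eta_1|^2+|\g\x_1+\d\eta_1|^2\right)$. From there the paper stays entirely elementary: it rewrites this quantity as $\tfrac12\tr H+\tfrac12\left(|\a|^2-|\b|^2+|\g|^2-|\d|^2\right)\left(|\x_1|^2-|\eta_1|^2\right)+2\Re\left((\a\b^*+\g\d^*)\x_1\eta_1^*\right)$, observes that $(|\x_1|^2-|\eta_1|^2,\,2\x_1\eta_1^*)$ sweeps out the unit sphere of $\bC^2$ up to phase, applies \eqref{eq:C-S} a second time to get the form \eqref{eq:C-Smaxi}, and only then invokes the identity \eqref{eq:C-identity} to convert that expression into the stated form \eqref{eq:C-Smax}. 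You instead recognize the remaining maximum as the Rayleigh quotient of $H=A^\dagger A$ with $A=\left(\begin{smallmatrix}\a&\b\\ \g&\d\end{smallmatrix}\right)$ and read off the top eigenvalue from $\tr H$ and $\det H=|\a\d-\b\g|^2$, which lands on \eqref{eq:C-Smax} directly, with no need for \eqref{eq:C-identity} in the main line. Your version is shorter and more conceptual (it identifies $\lambda$ as the squared operator norm of $A$), at the cost of invoking the spectral theorem; the paper's version is self-contained, reusing only the one-variable identity \eqref{eq:C-S}, and it makes the role of \eqref{eq:C-identity} explicit --- which matters since that identity is reused later, e.g.\ in the proof of part (b) of Theorem \ref{thm:NejDelta} to show $P^2-4|Q|^2\ge0$. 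Your closing observation that the discriminant $(\tr H)^2-4\det H$ is automatically nonnegative because $H\succeq0$ is a nice touch that the paper handles instead via the sum-of-squares form in \eqref{eq:C-identity}.
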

\begin{proof}
By Eq. \eqref{eq:C-S} we have
\begin{eqnarray}
\lambda=\max_{u_1} \left( |\a\x_1+\b\eta_1|^2 +
|\g\x_1+\d\eta_1|^2 \right).
\end{eqnarray}
The function that we are maximizing here can be rewritten as
\begin{eqnarray}
&& \frac{1}{2}\left(|\a|^2+|\b|^2+|\g|^2+|\d|^2\right)
+\frac{1}{2}\left(|\a|^2-|\b|^2+|\g|^2-|\d|^2\right)
\left(|\x_1|^2-|\eta_1|^2\right) \notag \\
&&\qquad +2\Re \left( (\a\b^*+\g\d^*)\x_1\eta_1^* \right).
\end{eqnarray}
Since $(|\x_1|^2-|\eta_1|^2, 2\x_1\eta_1^*)$ runs through
all unit vectors in $\bC^2$ (up to an overall phase factor),
another application of Eq. \eqref{eq:C-S} gives
\begin{eqnarray} \label{eq:C-Smaxi}
\lambda=\frac{1}{2} \left( |\a|^2+|\b|^2+|\g|^2+|\d|^2+
\sqrt{(|\a|^2-|\b|^2+|\g|^2-|\d|^2)^2+4|\a\b^*+\g\d^*|^2}
\right).
\end{eqnarray}
By using the identity \eqref{eq:C-identity}, this formula can be rewritten in the form \eqref{eq:C-Smax}.
\end{proof}

Note that $a^2+b^2+c^2+d^2+|z|^2=1$ implies the equality
\begin{eqnarray} \label{eq:alpha-id}
(d^2-a^2)(d^2-b^2) -d^2|z|^2=a^2b^2+c^2d^2 +d^2(2d^2-1),
\end{eqnarray}
which will be tacitly used in the next proof.

\begin{theorem}\label{thm:NejDelta}
Let $p=(a,b,c,d,z)\in\S$, $z=x+iy$, such that $a\ge b\ge c\ge0$,
$x\ge0$, and let $\ket{\psi}$ be the corresponding normalized pure
fermionic state.
\begin{description}
\item[(a)] If $p\in\D$, then $d>0$ and the following inequalities hold
\begin{eqnarray}
&& a^2b^2+c^2d^2 +d^2(2d^2-1)\ge0, \label{eq:Nej(i)} \\
&& d(d^2-s_1)-2abc\ge0, \label{eq:Nej(ii)} \\
&& 2\left( a^2b^2+c^2d^2 +d^2(2d^2-1) \right)(d^2-s_1)
   -x^2(ab+cd)^2 -y^2(ab-cd)^2\ge0, \label{eq:Nej(iii)} \\
&& d^2(2d^2-1)(d^2-s_1) -2abcd(x^2-y^2) -4s_3 \ge0.
\label{eq:Nej(iv)}
\end{eqnarray}
If equality holds in \eqref{eq:Nej(i)}, then $bx=(ab-cd)y=0$.
If equality holds in \eqref{eq:Nej(ii)} or \eqref{eq:Nej(iii)}, then $bx=0$.
\item[(b)] Conversely, if $d>0$ and the inequalities
\eqref{eq:Nej(i)}--\eqref{eq:Nej(iv)} hold, then $p\in\D$.
\end{description}
\end{theorem}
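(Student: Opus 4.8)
By Definition~\ref{def:Delta}, the conditions $a\ge b\ge c\ge0$, $x\ge0$ being already assumed, $p\in\D$ amounts exactly to $d=\mu(\psi)$; and since the test vector $e_{246}$ gives $\mu(\psi)\ge|\braket{e_{246}}{\psi}|=d$ automatically, this is the same as $\mu(\psi)\le d$. The plan is to reduce this to an explicit finite system of polynomial inequalities and then match it with \eqref{eq:Nej(i)}--\eqref{eq:Nej(iv)}. By Lemma~\ref{le:maximum} and Eq.~\eqref{eq:muW6}, $\mu(\psi)^2$ is the maximum over unit vectors $u_i=(\xi_i,\eta_i)\in\bC^2$ of $|a\eta_1\xi_2\xi_3+b\xi_1\eta_2\xi_3+c\xi_1\xi_2\eta_3+d\eta_1\eta_2\eta_3+z^*\xi_1\xi_2\xi_3|^2$. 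This expression is linear in $u_3$, so Eq.~\eqref{eq:C-S} eliminates $u_3$; regrouping by $u_2$ and applying Lemma~\ref{le:C-S} together with the identity \eqref{eq:C-identity} eliminates $u_2$ and yields, for each fixed $u_1$, $\max_{u_2,u_3}|w|^2=\tfrac12\big(T(u_1)+\sqrt{T(u_1)^2-4|\Delta(u_1)|^2}\big)$, where $T(u_1)=|a\eta_1+z^*\xi_1|^2+(b^2+c^2)|\xi_1|^2+d^2|\eta_1|^2=u_1^\dagger R_a u_1$ ($R_a$ being the relevant block of the $1$-RDM of Section~\ref{sec:RDM}) and $\Delta(u_1)=ad\eta_1^2+dz^*\xi_1\eta_1-bc\xi_1^2$. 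Hence $\mu(\psi)\le d$ if and only if for every unit $u_1$ one has $T(u_1)\le2d^2$ and $d^2\big(T(u_1)-d^2\big)\le|\Delta(u_1)|^2$. Because the construction is $S_3$-symmetric (the action \eqref{eq:sigmatau}), one may equally leave $u_2$ or $u_3$ to the end, which permutes $a,b,c$; I would use $a\ge b\ge c\ge0$ to always take the most convenient of the three versions.

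Writing $u_1=(re^{i\theta},\sqrt{1-r^2})$ with $r\in[0,1]$, substituting, and using the normalization identity \eqref{eq:alpha-id}, a routine computation (omitted) gives $d^2\big(T(u_1)-d^2\big)-|\Delta(u_1)|^2=r^2\widetilde H(r,\theta)$ with $\widetilde H(r,\theta)=-d^2(d^2-s_1)+r^2\big(d^2|z|^2-a^2d^2-b^2c^2\big)+2abcd(1-r^2)\cos2\theta+2dr\sqrt{1-r^2}\big((ad+bc)x\cos\theta+(ad-bc)y\sin\theta\big)$, so the main condition reads $\widetilde H(r,\theta)\le0$ on $[0,1]\times\bR$. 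For part~(a) I would then read off the four inequalities as the value of this maximization at the natural boundary or stationary configurations of $(r,\theta)$: $r\to0^+$ with $\cos2\theta=1$ gives \eqref{eq:Nej(ii)} (equivalently, the Hessian of $|w|^2$ at $u_1=u_2=u_3=(0,1)$ is negative semidefinite, its $3\times3$ determinant being $d(d^2-s_1)-2abc$); $r=1$ gives, after using $d\ge a,b,c$ to pick out the binding one among the three $S_3$-related forms, \eqref{eq:Nej(i)}; and the remaining two, \eqref{eq:Nej(iii)} and \eqref{eq:Nej(iv)}, come from the interior stationary configurations, where maximizing in $\theta$ couples the $\cos2\theta$ term with the linear $\cos\theta,\sin\theta$ terms and produces the $x^2-y^2$ contributions --- here $x\ge0$ and the nonnegativity of the coefficient $2abcd$ of $\cos2\theta$ are used to pin down which stationary point is relevant. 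The equality statements in (a) come from tracking when the Cauchy--Schwarz and largest-eigenvalue bounds used above are tight: tightness forces the optimizing $u_i$ to be prescribed eigenvectors of explicit $2\times2$ matrices, and substituting back forces $bx=0$, resp.\ $bx=(ab-cd)y=0$.

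For part~(b) I would assume $d>0$ and \eqref{eq:Nej(i)}--\eqref{eq:Nej(iv)} and run the argument backwards, say in the version leaving $u_3$ to the end so that the $r=1$ case is literally \eqref{eq:Nej(i)}. From \eqref{eq:Nej(ii)} and $d>0$ one gets $d^2\ge s_1$, hence $d\ge a,b,c$; combining this with \eqref{eq:Nej(i)} one then checks $T(u_1)\le2d^2$ for all $u_1$ (equivalently $\lambda_{\max}(R_c)\le2d^2$). It remains to prove $\widetilde H(r,\theta)\le0$ on all of $[0,1]\times\bR$; maximizing over $\theta$ and then over $r$ reduces this to finitely many scalar inequalities, and the point is that --- again using $a\ge b\ge c\ge0$, $x\ge0$ to discard the redundant permuted ones --- these collapse to exactly the four hypotheses: \eqref{eq:Nej(ii)} handles the $r\to0$ end, \eqref{eq:Nej(i)} the $r=1$ end, and \eqref{eq:Nej(iii)}, \eqref{eq:Nej(iv)} the interior stationary configurations, with \eqref{eq:Nej(ii)} also guaranteeing that the ``square both sides'' steps are legitimate. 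Since $\mu(\psi)\ge d$ always, this gives $\mu(\psi)=d$, and as the ordering and positivity conditions are part of the hypothesis, $p\in\D$.

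I expect the main obstacle to be precisely the two-variable maximization of $\widetilde H$ that underlies both directions: in part~(b) one must control the coupling between the $\cos2\theta$ term and the linear $\cos\theta,\sin\theta$ terms when $x$ and $y$ are both nonzero --- the regime producing the interior stationary $\theta$ and the $x^2-y^2$ term, for which \eqref{eq:Nej(iv)} is designed --- and then verify that after eliminating $\theta$ and $r$ the resulting list of conditions reduces to exactly the four stated ones (no more) under $a\ge b\ge c\ge0$, $x\ge0$; and in part~(a) one must carry the equality cases cleanly through the whole chain of Cauchy--Schwarz applications, which is delicate bookkeeping.
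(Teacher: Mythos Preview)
Your approach is essentially the paper's: reduce via Lemma~\ref{le:C-S} to a single remaining unit vector, then extract the four inequalities from the resulting one-parameter family. The paper makes two choices that sharpen the analysis and that you should adopt. First, it leaves $u_3$ (the variable paired with the smallest coefficient $c$) to the end rather than $u_1$; this way the $r=1$/$t\to\infty$ boundary directly yields the binding member \eqref{eq:Nej(i)} of the $S_3$-triple, so no after-the-fact switching between permuted versions is needed. Second, it parametrizes by $t=\xi_3/\eta_3$ instead of $r=|\xi|$, which turns the key condition into a genuine quadratic $\alpha t^2-2d\Re((abz+cdz^*)e^{i\theta})t+d(d(d^2-s_1)-2abc\cos2\theta)\ge0$ (your $\sqrt{1-r^2}$ disappears). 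Then \eqref{eq:Nej(i)} and \eqref{eq:Nej(ii)} are the leading and constant coefficients, and the discriminant condition, viewed as a trigonometric inequality in $\theta$, gives \eqref{eq:Nej(iii)} directly and \eqref{eq:Nej(iv)} after a factorization $4(d^2-a^2)(d^2-b^2)\alpha\delta\ge0$ that you do not anticipate; establishing $\delta\ge0$ from this still needs a short case analysis. For part~(b) your plan is right; the step you flag (legitimizing ``square both sides'', i.e.\ $2d^2\ge P$) the paper handles by first deriving $3d^2-1\ge a^2$ from \eqref{eq:Nej(i)} and \eqref{eq:Nej(ii)}.
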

\begin{proof}
To simplify notation, we shall denote by $\a,\b,\g,\d$ the left hand
sides of \eqref{eq:Nej(i)}--\eqref{eq:Nej(iv)}, respectively.  By
applying Lemma \ref{le:C-S} to compute the maximum in
Eq. \eqref{eq:muW6} over unit vectors $u_1$ and $u_2$ only, we obtain
that
\begin{eqnarray}
 \label{eq:mu=max}
 \mu(\psi)^2=\frac{1}{2}\max_{u_3} \left( P+\sqrt{P^2 -4|Q|^2} \right),
\end{eqnarray}
where $P=|z^*\x_3+c\eta_3|^2 +(a^2+b^2)|\x_3|^2 +d^2|\eta_3|^2$,
$Q=ab\x_3^2-dz^*\x_3\eta_3-cd\eta_3^2$, and
$u_3=(\x_3,\eta_3)\in\bC^2$ is any unit vector.

Proof of (a). As $p\in\D$, we have $d=\mu(\psi)>0$. Consequently, the inequality
\begin{eqnarray} \label{ea:max(u3)}
 2d^2\ge P+\sqrt{P^2 -4|Q|^2}
\end{eqnarray}
is valid for all unit vectors $u_3=(\x_3,\eta_3)\in\bC^2$. We now
assume that $\eta_3\ne0$ and set $\x_3/\eta_3=te^{i\theta}$ where $t$
and $\theta$ are real numbers. By moving the term $P$ in
\eqref{ea:max(u3)} to the left hand side and squaring both sides, we
obtain that
\begin{eqnarray} \label{eq:Nej-3}
d^4-d^2 \left( |tz^*e^{i\theta}+c|^2 +(a^2+b^2)t^2 +d^2 \right)
|\eta_3|^2
+|abt^2e^{2i\theta}-dz^*te^{i\theta}-cd|^2\cdot|\eta_3|^4\ge0.
\end{eqnarray}
We divide the left hand side by $|\eta_3|^4$ and use the fact that
$|\eta_3|^{-2}=1+t^2$. Then the left hand side is divisible by $t^2$,
and by omitting this factor we obtain that
\begin{eqnarray} \label{nej:osn}
\a t^2 -2d \Re \left( (abz+cdz^*)e^{i\theta} \right) t
+d \left( d(d^2-s_1)-2abc\cos 2\theta \right) \ge0.
\end{eqnarray}
Since this inequality holds for all real $t$ and $\theta$, we deduce
that $\a\ge0$ and $\b\ge0$, i.e., the inequalities \eqref{eq:Nej(i)}
and \eqref{eq:Nej(ii)} hold. As the discriminant of the quadratic
polynomial in $t$ on the left hand side of \eqref{nej:osn} must be
nonpositive for all real $\theta$, we have
\begin{eqnarray} \label{eq:NejDiskr}
\a \left( d(d^2-s_1)-2abc \cos 2\theta \right)
-d\left( \Re\left((abz+cdz^*)e^{i\theta}\right) \right)^2 \ge0.
\end{eqnarray}
If $\a=0$, then we must have $abz+cdz^*=0$, which is equivalent to
$bx=(ab-cd)y=0$. Now assume that $\b=0$. By setting $\theta=0$ in
\eqref{eq:NejDiskr}, we deduce that $\Re(abz+cdz^*)=0$, i.e., $bx=0$.

Next we shall prove \eqref{eq:Nej(iii)} and \eqref{eq:Nej(iv)}. We can
rewrite the inequality \eqref{eq:NejDiskr} as
\begin{eqnarray} \label{eq:Nej-2}
d\g\ge \left( 4\a abc +dx^2(ab+cd)^2 -dy^2(ab-cd)^2 \right)
\cos 2\theta -2dxy(a^2b^2-c^2d^2) \sin 2\theta.
\end{eqnarray}
Since this inequality holds for all real $\theta$, it follows
that $\g\ge0$, i.e., the inequality \eqref{eq:Nej(iii)}
holds. Moreover, we must have
\begin{eqnarray}
d^2\g^2   \label{eq:Nej-1}
-\left( 4\a abc +dx^2(ab+cd)^2 -dy^2(ab-cd)^2 \right)^2
-\left( 2dxy(a^2b^2-c^2d^2) \right)^2 \ge0.
\end{eqnarray}

If $\g=0$, then the two equalities $xy(a^2b^2-c^2d^2)=0$ and $4\a abc
+dx^2(ab+cd)^2 -dy^2(ab-cd)^2 =0$ must hold. If also $x>0$ then from
the first equality we deduce that $y(ab-cd)=0$, and then from the
second one we deduce that $b=0$. Thus $\g=0$ implies that $bx=0$.

The left hand side of \eqref{eq:Nej-1} factorizes and we obtain that
\begin{eqnarray} \label{eq:fakt}
4(d^2-a^2)(d^2-b^2)\a\d \ge0.
\end{eqnarray}

Assume that $\d<0$. Then $d=a$ or $\a=0$. If $d=a$, then $\b\ge0$
implies that $b=c=0$ and $d^2-s_1=0$, which contradicts the
assumption that $\d<0$. Hence, we must have $d>a$ and $\a=0$. It
follows that $bx=(ab-cd)y=0$. If $b=0$, then
$\d=d^2(2d^2-1)(d^2-a^2)<0$ contradicts the inequality
\eqref{eq:Nej(i)}. Hence $b>0$, and so $x=0$ and
$(d^2-a^2)(d^2-b^2)=d^2y^2$. As $d>a$, this implies that $y\ne0$ and
so $ab=cd$ and $c>0$. Now \eqref{eq:Nej(i)}  implies that
$c^2+d^2=1/2$. The inequality $\d<0$ becomes
$-2c^2d^2(d^2-s_1)+2c^2d^2y^2-4c^4d^2<0$. After canceling the
factor $2c^2d^2$, we obtain that $a^2+b^2+y^2<1/2$ which gives a
contradiction. Hence, we must have $\d\ge0$.

Proof of (b). We have to prove that $d=\mu(\psi)$. It is immediate
from the definition of $\mu(\psi)$ that $d\le\mu(\psi)$. In order to
prove that $d\ge\mu(\psi)$, we will reverse the main steps in the
proof of (a).

Since $d>0$, it follows from $\b\ge0$ that $d^2\ge s_1$ and, in
particular, $d\ge a$. Since $\a\ge0$ and $\d\ge0$, the inequality
\eqref{eq:fakt} holds, and so does the inequality \eqref{eq:Nej-1}. As
$\g\ge0$, we deduce that the inequality \eqref{eq:Nej-2} holds for all
real $\theta$, and so does \eqref{eq:NejDiskr}.  Since $\a\ge0$ and
$\b\ge0$, this implies that the inequality \eqref{nej:osn} holds for
all real $t$ and $\theta$, and that the inequality \eqref{eq:Nej-3}
holds for all unit vectors $u_3=(\x_3,\eta_3)\in\bC^2$ with
$\x_3=te^{i\theta}\eta_3$.  We can rewrite the inequality
\eqref{eq:Nej-3} as $(2d^2-P)^2\ge P^2-4|Q|^2$. By using the identity
\eqref{eq:C-identity}, one can easily show that $P^2-4|Q|^2\ge0$.

We claim that $2d^2-P\ge0$ for all unit vectors $u_3$.  By computing
the maximum of $P$ over all $u_3$, our claim asserts that $2d^2\ge
1/2+((1/2-c^2-d^2)^2+c^2|z|^2)^{1/2}$.  Note that \eqref{eq:Nej(i)}
implies that $d^2\ge a^2+|z|^2$.  As $2d^2\ge b^2+c^2$, we obtain that
$4d^2\ge1$, i.e., $2d^2-1/2\ge0$.  Hence, our claim is equivalent to
the inequality $(2d^2-1/2)^2\ge(1/2-c^2-d^2)^2+c^2|z|^2$.  This can be
simplified to $d^2(3d^2-1)\ge(d^2-a^2-b^2)c^2$.  Therefore it suffices
to prove that $3d^2-1\ge c^2$.  In fact the stronger inequality
$3d^2-1\ge a^2$ holds.  Indeed, the inequality \eqref{eq:Nej(ii)}
implies that $d^2\ge a^2+b^2+c^2$.  By adding this inequality and the
inequality $d^2\ge a^2+|z|^2$, we obtain that $2d^2\ge1+a^2-d^2$. This
completes the proof of our claim.

By extracting square roots on both sides of $(2d^2-P)^2\ge
P^2-4|Q|^2$, we conclude that the inequality \eqref{ea:max(u3)} is
valid for all unit vectors $u_3$.  By invoking Eq. \eqref{eq:mu=max},
we obtain that $d\ge\mu(\psi)$.  This completes the proof of part (b),
and of the theorem.
\end{proof}

We derive two consequences of the above theorem.
\begin{corollary}\label{cr:Posledice}\
\begin{description}
\item[(i)] $\D$ is the closure of $\D^0$ (in particular, $\D^0\ne\emptyset$).
\item[(ii)] All points $p=(a,b,c,d,z)\in\D$ satisfy the inequality
\begin{eqnarray} \label{eq:StrongIneq}
2abc+d(2d^2-1)\ge0.
\end{eqnarray}
\end{description}
\end{corollary}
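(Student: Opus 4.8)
\emph{Part (i).} I would prove this first, since part (ii) will use it. By Theorem~\ref{thm:NejDelta}, inside the region $a\ge b\ge c\ge0$, $x\ge0$ the set $\D$ is cut out of $\S$ by the open condition $d>0$ together with the four polynomial inequalities \eqref{eq:Nej(i)}--\eqref{eq:Nej(iv)}. First I would exhibit one point at which all of these, \emph{and} the orderings, are strict: take $a>b>c>0$ and $|z|$ very small and $d$ close to $1$; then each of the four left-hand sides is dominated by a term of the form $d^{2}\cdot(\text{positive constant})$, so by continuity a whole $\S$-neighbourhood of the point still satisfies every hypothesis of Theorem~\ref{thm:NejDelta}(b). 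Hence the point lies in $\D^{0}$, so $\D^{0}\ne\emptyset$. For the density $\D=\overline{\D^{0}}$, the key observation is that Lemma~\ref{le:connection} says precisely that the cone $C:=\{\lambda p:\lambda\ge0,\ p\in\D\}\subseteq W_{6}\cong\bR^{6}$ is convex; it is closed because $\D$ is compact with $0\notin\D$, and any $p_{0}\in\D^{0}$ lies in $\operatorname{int}C$ (the cone over an $\S$-neighbourhood of $p_{0}$ inside $\D$ is $\bR^{6}$-open). For $p\in\D\subseteq C$ the half-open segment $[p_{0},p)$ then lies in $\operatorname{int}C$ by a standard convexity fact; renormalising it gives a path $s_{t}\in(\operatorname{int}C)\cap\S\subseteq\D^{0}$ with $s_{t}\to p$, so $p\in\overline{\D^{0}}$. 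Since $\D$ is closed, $\D=\overline{\D^{0}}$.

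\emph{Part (ii).} Using the normalisation $s_{1}+d^{2}+|z|^{2}=1$, write $P:=2d^{2}-1=(d^{2}-s_{1})-|z|^{2}$. If $P\ge0$ then $2abc+dP\ge0$ is immediate since $a,b,c\ge0$, $d>0$; so assume $P<0$. From \eqref{eq:Nej(ii)} and $d>0$ we get $d^{2}-s_{1}\ge0$, hence $|z|^{2}-|P|=d^{2}-s_{1}\ge0$. Set $\beta:=2abc\ge0$ and $\delta:=-dP=d|P|\ge0$; then \eqref{eq:StrongIneq} is exactly the claim $\beta\ge\delta$. I would first prove $\beta\ge\delta$ on the subset $\D':=\{p\in\D:\ d(d^{2}-s_{1})>2abc\}$ where \eqref{eq:Nej(ii)} is strict. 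On $\D'$, suppose $\beta<\delta$. Using $4s_{3}=4a^{2}b^{2}c^{2}=\beta^{2}$ (see \eqref{eq:si}) and the elementary bound $x^{2}-y^{2}\ge-|z|^{2}$, inequality \eqref{eq:Nej(iv)} rearranges to $\beta d|z|^{2}\ge d^{2}|P|(d^{2}-s_{1})+\beta^{2}$; substituting $d^{2}-s_{1}=|z|^{2}-|P|$ and simplifying turns this into $d|z|^{2}(\beta-\delta)\ge\beta^{2}-\delta^{2}=(\beta-\delta)(\beta+\delta)$, and dividing by $\beta-\delta<0$ gives $d|z|^{2}\le\beta+\delta$. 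But the defining inequality of $\D'$ reads $d(|z|^{2}-|P|)>\beta$, i.e.\ $d|z|^{2}>\beta+\delta$ — a contradiction. So \eqref{eq:StrongIneq} holds on $\D'$. Finally, $d(d^{2}-s_{1})-2abc$ is real-analytic on the connected sphere $\S$ and is positive at the interior point found in part~(i), so its zero set is nowhere dense in $\S$; together with $\D=\overline{\D^{0}}$ this makes $\D'$ dense in $\D$, and continuity of $2abc+d(2d^{2}-1)$ extends \eqref{eq:StrongIneq} to all of $\D$.

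The main obstacle is the last step of (ii): the combination of \eqref{eq:Nej(iv)} with the crude bound $x^{2}-y^{2}\ge-|z|^{2}$ pins $d|z|^{2}$ from above by exactly the quantity $\beta+\delta$ that \eqref{eq:Nej(ii)} bounds it from below by, so the estimate is tight and a naive sign count fails on the locus where \eqref{eq:Nej(ii)} is an equality — which is why one localises to $\D'$ first and only then invokes density and continuity. (Alternatively one can keep $p\in\D$, deduce equality in \eqref{eq:Nej(ii)}, invoke the clause ``$bx=0$'' of Theorem~\ref{thm:NejDelta}(a), and rule out $b=0$ and $x=0$ by hand using \eqref{eq:Nej(i)}; this avoids density at the cost of a short case check.) Exhibiting the explicit interior point in part~(i) and checking the four inequalities there is routine.
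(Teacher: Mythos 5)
Your proof is correct and, at its core, follows the same route as the paper. Part (i) is identical in structure: exhibit a point where all the defining inequalities are strict (the paper uses the explicit point $(8,4,2,11,2+4i)/15$) and then invoke Lemma \ref{le:connection}; your convex-cone argument is exactly the intended reading of that lemma, just written out in full. In part (ii) your chain of estimates is the paper's identity in disguise: adding $4abcdx^2\ge0$ to the left-hand side of \eqref{eq:Nej(iv)} and substituting $|z|^2=1-s_1-d^2$ yields precisely $\bigl(d(d^2-s_1)-2abc\bigr)\bigl(2abc+d(2d^2-1)\bigr)\ge0$, which is what your inequality $d|z|^2(\beta-\delta)\ge(\beta-\delta)(\beta+\delta)$ amounts to. The one genuine difference is how the strictly positive first factor is secured before dividing: the paper works on $\D^0$ and asserts, without further justification, that \eqref{eq:Nej(ii)} is strict there (true, since equality forces $bx=0$ by Theorem \ref{thm:NejDelta}(a), and points with $b=0$ or $x=0$ cannot be interior to $\D$), whereas you work on the locus where \eqref{eq:Nej(ii)} is strict and prove its density via the nowhere-density of the zero set of a nonzero real-analytic function on $\S$. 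Your variant is marginally longer but makes explicit a step the paper glosses over. One caution about your parenthetical alternative: in the subcase $x=0$, $b>0$ with equality in \eqref{eq:Nej(ii)}, inequality \eqref{eq:Nej(i)} alone does not close the argument (AM--GM gives $a^2b^2+c^2d^2\ge 2abcd$, so the bound it provides points the wrong way); one needs \eqref{eq:Nej(iii)} as well. Since that remark is only an aside, it does not affect the validity of your proof.
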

\begin{proof}
(i) It follows from Theorem \ref{thm:NejDelta} that
  $\D^0\ne\emptyset$. For instance, all inequalities defining $\D$ are
  strict at the point $p=(8,4,2,11,2+4i)/15$ and so $p\in\D^0$.  Now,
  the assertion follows from Lemma \ref{le:connection}.

(ii) In view of (i), it suffices to prove this inequality when
  $p\in\D^0$. If $\d$ denotes the left hand side of
  Eq. \eqref{eq:Nej(iv)}, then $\d+4abcdx^2$ is a polynomial in
  $a,b,c,d$ and $|z|^2$. After substituting $|z|^2=1-s_1-d^2$ in this
  polynomial, we obtain the inequality
  $(d(d^2-s_1)-2abc)(2abc+d(2d^2-1))\ge0$. As $p\in\D^0$ we have
  $d(d^2-s_1)-2abc>0$ and so $2abc+d(2d^2-1)\ge0$.
\end{proof}

\subsection{The boundary of $\D$}

We can now describe the boundary of $\D$.

\begin{proposition} \label{pp:Granica}
For $p=(a,b,c,d,z)\in\D$ with $z=x+iy$, we have $p\in\partial\D$ if and only if $cx(a-b)(b-c)\Phi(p)=0$.
\end{proposition}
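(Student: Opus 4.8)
The strategy is to identify the relative interior $\D^0$ explicitly via Theorem~\ref{thm:NejDelta} and then recognize $\partial\D$ as exactly the locus where at least one of the defining inequalities \eqref{eq:Nej(i)}--\eqref{eq:Nej(iv)} becomes an equality, or where one of the ordering inequalities $a\ge b\ge c\ge 0$, $x\ge0$ is tight. First I would argue that a point $p\in\D$ lies in $\D^0$ if and only if all the inequalities cutting out $\D$ from $\S$ are strict at $p$. The ``if'' direction is clear since $\D$ is defined in $\S$ by finitely many closed (polynomial and linear) conditions, so if all of them are strict then a whole $\S$-neighborhood of $p$ stays in $\D$. The ``only if'' direction is the substantive half: I must show that whenever one of these conditions is an equality at $p\in\D$, every $\S$-neighborhood of $p$ contains points outside $\D$. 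This is where Theorem~\ref{thm:NejDelta}(a) does the heavy lifting --- its ``equality'' clauses ($bx=0$ when \eqref{eq:Nej(i)}, \eqref{eq:Nej(ii)}, or \eqref{eq:Nej(iii)} is tight) let me restrict attention to the relevant coordinate subspaces and exhibit nearby points of $\S$ that violate one of \eqref{eq:Nej(i)}--\eqref{eq:Nej(iv)} or the ordering constraints.

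Next I would translate ``some defining inequality is an equality'' into the single polynomial condition $cx(a-b)(b-c)\Phi(p)=0$. The ordering inequalities $a\ge b$, $b\ge c$ contribute the factors $(a-b)$ and $(b-c)$; the inequality $c\ge 0$ together with $x\ge 0$ contributes $c$ and $x$ --- note that when $c=0$ one can perturb the phase/sign freedom of the vanishing coordinates, and similarly when $x=0$ (i.e. $z$ real), and these perturbations leave $\S$ but exit $\D$. The remaining factor $\Phi(p)$, defined in \eqref{eq:PolFi}, must be shown to vanish exactly on the part of $\partial\D$ not already accounted for by the ordering/positivity constraints; here I expect to use the relation $M_7' = abcdxy\,\Phi$ from \eqref{eq:M'7} together with the equality cases of \eqref{eq:Nej(iv)} --- indeed $\Phi(p)$ is essentially the left side of \eqref{eq:Nej(iv)} after substituting $|z|^2 = 1 - s_1 - d^2$, up to the $x^2$ correction term used in the proof of Corollary~\ref{cr:Posledice}(ii). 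So the vanishing of $\Phi$ should correspond to equality in \eqref{eq:Nej(iv)} (on the open stratum where $abcd\neq0$ and $xy\neq0$), while the degenerate strata where $abcd=0$ or $xy=0$ are absorbed into the factors $c(a-b)(b-c)x$ (using $a\ge b\ge c$, vanishing of $c$, $a-b$, or $b-c$ covers $bc=0$ and $a=0$ is impossible for normalized $\D$ with $d>0$... actually $a=0$ forces $b=c=0$, handled by $c=0$; and $y=0$ needs separate care).

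The bulk of the work, and the main obstacle, is the ``only if'' direction of the interior characterization for the inequalities \eqref{eq:Nej(iii)} and \eqref{eq:Nej(iv)}: I need to produce, for a point $p\in\D$ at which \eqref{eq:Nej(iv)} holds with equality but $cx(a-b)(b-c)\neq 0$, an explicit curve in $\S$ through $p$ along which \eqref{eq:Nej(iv)} fails immediately. Because the gradient of the left side of \eqref{eq:Nej(iv)} (as a function on $\S$) need not be nonzero a priori, I would have to check that the equality surface is genuinely a boundary --- not a ``fold'' where $\D$ locally lies on both sides. The cleanest route is to use the factorization $4(d^2-a^2)(d^2-b^2)\a\d\ge0$ from \eqref{eq:fakt} in the proof of Theorem~\ref{thm:NejDelta}: on $\D^0$ one has $\a>0$ and $d>a\ge b$ strictly (from strict \eqref{eq:Nej(ii)}), so $\d=0$ is a simple zero transverse to generic directions, and a first-order perturbation argument shows points on the ``wrong side'' leave $\D$. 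I would also double-check the converse inclusion: that the locus $\{cx(a-b)(b-c)\Phi=0\}\cap\D$ contains no interior points, which follows once I confirm each factor's vanishing does force some defining inequality to be tight (for $\Phi=0$ with $abcdxy\neq0$, the substitution $|z|^2=1-s_1-d^2$ identifies $\Phi$ with the left side of \eqref{eq:Nej(iv)} up to the sign-definite term, forcing \eqref{eq:Nej(iv)} to be equality). Assembling these pieces gives $p\in\partial\D \iff cx(a-b)(b-c)\Phi(p)=0$.
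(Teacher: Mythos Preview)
Your overall architecture is right and matches the paper: characterize $\D^0$ as the locus where all the defining inequalities from Theorem~\ref{thm:NejDelta} are strict, then identify $\partial\D$ with the vanishing of $cx(a-b)(b-c)\Phi(p)$. Two points, one a simplification and one a genuine gap.

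\emph{Simplification (converse direction).} You are working too hard to show that strict inequalities imply interior. Once you observe (as you should) that on $\S$ the left side of \eqref{eq:Nej(iv)} \emph{equals} $\Phi(p)$ exactly (substituting $|z|^2=1-s_1-d^2$ gives $d^2-s_1-|z|^2=2d^2-1$, no ``$x^2$ correction term''), the argument is immediate: if $cx(a-b)(b-c)\Phi(p)\ne0$ at $p\in\D$, then $a>b>c>0$, $x>0$, and \eqref{eq:Nej(iv)} is strict; the equality clauses in Theorem~\ref{thm:NejDelta}(a) then force \eqref{eq:Nej(i)}--\eqref{eq:Nej(iii)} to be strict as well (since $bx>0$). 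All defining conditions being strict, $p\in\D^0$. The paper does exactly this in two lines. Your remark that ``$y=0$ needs separate care'' is a red herring: no constraint on the sign of $y$ enters the definition of $\D$, and $y=0$ does not force a boundary point.

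\emph{Gap (forward direction, the $\Phi(p)=0$ case).} Here you need, for $p\in\D$ with $a>b>c>0$, $x>0$, and $\Phi(p)=0$, a nearby point $q\in\S$ with $\Phi(q)<0$ (hence $q\notin\D$). Your appeal to \eqref{eq:fakt} and ``$\d=0$ is a simple zero transverse to generic directions'' does not deliver this: the factorization \eqref{eq:fakt} is a \emph{consequence} of $p\in\D$, not an identity on $\S$, and ``generic transversality'' is not an argument --- you must rule out that $\Phi$ has a tangential critical point at $p$ (in which case $\Phi\ge0$ could persist locally on $\S$). The paper closes this gap by a direct gradient computation: the tangential derivatives
\[
a\,\partial_b\Phi-b\,\partial_a\Phi=-2c(a^2-b^2)\bigl(4abc+d(x^2-y^2)\bigr),\qquad
x\,\partial_y\Phi-y\,\partial_x\Phi=8abcdxy,
\]
cannot both vanish under the standing hypotheses (if $y\ne0$ the second is nonzero; if $y=0$ the first is, since $4abc+dx^2>0$). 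Hence $\nabla\Phi$ is not parallel to $p$, $\Phi$ changes sign on $\S$ near $p$, and $p\in\partial\D$. Without this (or an equivalent transversality check), your forward implication is incomplete.
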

\begin{proof}
Assume that $cx(a-b)(b-c)\Phi(p)=0$. If $cx(a-b)(b-c)=0$, it is
obvious that $p\in\partial\D$. Now let $cx(a-b)(b-c)\ne0$, i.e.,
$a>b>c>0$ and $x>0$. Then we have $\Phi(p)=0$. By computing the
gradient of $\Phi$, in the $6$-dimensional Euclidean space with
coordinates $a,b,c,d,x,y$, we find that at the point $p$ we have
$a\frac{\partial\Phi}{\partial b}-b\frac{\partial\Phi}{\partial
  a}=-2c(a^2-b^2)(4abc+d(x^2-y^2))$ and $x\frac{\partial\Phi}{\partial
  y}-y\frac{\partial\Phi}{\partial x}=8abcdxy$. Since at least one of
these two expressions is nonzero, the gradient $(\nabla\Phi)_p$ is not
parallel to $p$. Consequently, in any neighborhood of $p$ on the unit
sphere of $W_6$ the polynomial $\Phi$ takes both positive and negative
values. Hence $p\in\partial\D$ by Theorem \ref{thm:NejDelta}.

We shall prove the converse by contradiction. Thus, in addition to the
hypothesis $p\in\partial\D$, we shall assume that
$cx(a-b)(b-c)\Phi(p)\ne0$. Consequently, we have $a>b>c>0$, $x>0$. As
$d>0$ and $p\in\partial\D$, we infer that equality must hold in at
least one of the inequalities \eqref{eq:Nej(i)}, \eqref{eq:Nej(ii)},
\eqref{eq:Nej(iii)}.  But in each of these three cases, part (a) of
Theorem \ref{thm:NejDelta} asserts that equality implies that
$bx=0$. Hence, we have a contradiction.
\end{proof}

To demonstrate that the hypothesis $p\in\D$ cannot be omitted,
consider the point $p=(12,6,4,9,2,2)/2\sqrt{35}$ for which
$cx(a-b)(b-c)\Phi(p)>0$ holds, but $p$ is not in $\D$.

By using Lemma \ref{le:maximum}, it follows from \cite[Theorem
  1]{cxz10} that the minimum of $\mu'(\psi)$ over all unit vectors
$\ket{\psi}\in\we^3(V)$ is equal to $2/3$, and that the minimum is
attained only at the states $\ket{\psi}$ which, when regarded as a
three-qubit state, are LU-equivalent to the W state \cite{dvc00}. They
are all LU-equivalent to the point $p=(1/3,1/3,1/3,2/3,i\sqrt{2}/3)$
in $\D$. It is easy to compute the invariants $M_i$ at $p$:
\begin{eqnarray} \label{eq:inv-p}
M_1=1,~ M_2=2/3,~ M_3=1/9,~ M_4=4/27,~ M_6=8/729,~ M_5=M_7=0.
\end{eqnarray}
By using Lemma \ref{le:KanForm}, we deduce that the projection of $\D$
on the $d$-axis is the closed interval $[2/3,1]$. Thus if
$\ket{\psi}\in\D$ corresponds to the point $(a,b,c,d,z)$, then $2/3\le
d\le1$. If $d=1$ then $\ket{\psi}=e_{246}$, a decomposable
$3$-vector. On the other hand, if $d=2/3$ then it follows easily from
Eqs. \eqref{eq:M'1}--\eqref{eq:M'7} and \eqref{eq:inv-p} that
$a=b=c=1/3$ and $z=\pm i\sqrt{2}/3$. Thus there are exactly two points
$(a,b,c,d,z)\in\Delta$ with $d=2/3$. Note that these points belong to
$\partial\D$.

We shall prove now that the minimum of $\mu'(\psi)$ over all
normalized states $\ket{\psi}$ in $\we^3(V)$ is $2/3$. Thus, we obtain
an independent proof of the fact from \cite{cxz10} mentioned above.
\pagebreak[3]

\begin{proposition} \label{pp:Minimum}
Let $p=(a,b,c,d,z)\in\D$ where $z=x+iy$.
\begin{description}
\item[(i)] The minimum of $d$ over all points $p\in\D$ is $2/3$.
It is achieved only at the two points with coordinates
$a=b=c=1/3$, $d=2/3$ and $z=\pm i\sqrt{2}/3$.
\item[(ii)] $\min_\psi \mu(\psi)=2/3$, where the minimum is over all
  normalized states $\ket{\psi}\in\we^3(V)$.
\end{description}
\end{proposition}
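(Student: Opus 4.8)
The plan is to derive part (i) directly from the inequalities that cut out $\D$ (Theorem \ref{thm:NejDelta}) together with Corollary \ref{cr:Posledice}, and then read off part (ii). For the reduction: by Lemma \ref{le:KanForm} every normalized $\ket{\psi}\in\we^3(V)$ is LU-equivalent to the state corresponding to some $p=(a,b,c,d,z)\in\D$, and by Definition \ref{def:Delta} that state has $\mu(\psi)=d$; since $\mu$ is LU-invariant, $\min_\psi\mu(\psi)=\min_{p\in\D}d$, so (ii) will follow from (i). Hence everything comes down to proving that $d\ge2/3$ on $\D$, with equality exactly at $(1/3,1/3,1/3,2/3,\pm i\sqrt2/3)$.

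For the lower bound I would first dispose of the case $2d^2\ge1$ (then $d\ge1/\sqrt2>2/3$) and assume $2d^2<1$. Since $d>0$ on $\D$, Corollary \ref{cr:Posledice}(ii) gives $2abc\ge d(1-2d^2)>0$, so $a,b,c>0$. The idea is then to trap $abc$ from both sides: \eqref{eq:Nej(ii)} gives $2abc\le d(d^2-s_1)$, and comparing this with $2abc\ge d(1-2d^2)$ yields $s_1\le 3d^2-1$; on the other hand AM--GM for $a^2,b^2,c^2$ gives $27(abc)^2\le s_1^3$, i.e.\ $3(abc)^{2/3}\le s_1$, and feeding in $2abc\ge d(1-2d^2)$ produces $3\bigl(d(1-2d^2)/2\bigr)^{2/3}\le s_1$. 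Chaining the two estimates,
\[
3\,\bigl(d(1-2d^2)/2\bigr)^{2/3}\;\le\; s_1\;\le\; 3d^2-1 ,
\]
both ends being positive; cubing and clearing denominators turns this into $\tfrac{27}{4}\,d^2(1-2d^2)^2\le(3d^2-1)^3$, and with $u=d^2$ the polynomial identity
\[
(3u-1)^3-\tfrac{27}{4}\,u(1-2u)^2=\tfrac{9u-4}{4}
\]
shows this is equivalent to $9d^2\ge4$, i.e.\ $d\ge2/3$.

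For the equality case I would trace the equalities back: if $d=2/3$ the chain above collapses, so AM--GM equality forces $a=b=c$, equality of the two ends forces $s_1=3d^2-1=1/3$ hence $a=b=c=1/3$, and the normalization forces $|z|^2=1-s_1-d^2=2/9$. Plugging these values into \eqref{eq:Nej(iv)} reduces that inequality to $x^2-y^2\le-2/9$, which with $x^2+y^2=2/9$ and $x\ge0$ pins down $x=0$, $y=\pm\sqrt2/3$. Conversely, evaluating \eqref{eq:Nej(i)}--\eqref{eq:Nej(iv)} at $(1/3,1/3,1/3,2/3,\pm i\sqrt2/3)$ (three vanish, \eqref{eq:Nej(i)} equals $1/81$) shows via Theorem \ref{thm:NejDelta}(b) that these two points lie in $\D$, so $d=2/3$ is attained there. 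Part (ii) then follows from the first paragraph, the minimum being realized by the ($W$-type) state corresponding to either of these points.

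The step I expect to be the main obstacle is finding the right combination of the defining inequalities: combining \eqref{eq:Nej(i)} and \eqref{eq:Nej(ii)} with crude upper bounds on $a^2b^2+c^2d^2$ only gives $d\ge\sqrt{3/7}\approx0.655$, short of $2/3$. The point is that \eqref{eq:Nej(ii)} and Corollary \ref{cr:Posledice}(ii) together squeeze $abc$ from both sides, so AM--GM can be applied in its equality-forcing direction; and the resulting cubic inequality in $d^2$ collapses, somewhat miraculously, to the linear inequality $9d^2\ge4$, which explains why the minimum is exactly $2/3$ and is attained at the symmetric point $a=b=c$.
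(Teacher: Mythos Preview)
Your argument is correct, and the heart of it---sandwiching $abc$ between $d(1-2d^2)/2$ from Corollary \ref{cr:Posledice}(ii) and $d(d^2-s_1)/2$ from \eqref{eq:Nej(ii)}, then applying AM--GM---is exactly the key step in the paper's proof. The polynomial identity $(3u-1)^3-\tfrac{27}{4}u(1-2u)^2=\tfrac{9u-4}{4}$ is the same collapse the paper exploits.

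Where you differ is in the treatment of the equality case. The paper first fixes a minimizer $p$, and before running the AM--GM chain spends most of the proof showing that $x=0$ at $p$ via a perturbation argument: assuming $x>0$, it shows that the strict inequalities \eqref{eq:Nej(i)}--\eqref{eq:Nej(iii)} leave room to slide along $\S$ and decrease $d$, and separately handles the borderline $\Phi(p)=0$ by rotating in the $(a,d)$-plane. Only after establishing $x=0$ does it invoke \eqref{eq:Nej(ii)} and \eqref{eq:StrongIneq}. You instead run the AM--GM chain directly (it needs no information about $x$), pin down $a=b=c=1/3$ and $|z|^2=2/9$ from the collapsed equalities, and only then read off $x=0$ from \eqref{eq:Nej(iv)}. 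This is more economical: it avoids the case analysis on $\Phi$ and the perturbation construction entirely, and it yields $d\ge2/3$ as a pointwise inequality on all of $\D$ rather than just at a minimizer. The paper's route, on the other hand, makes explicit that a minimizer must sit on the piece of $\partial\D$ where $x=0$, which connects to the boundary description in Proposition \ref{pp:Granica}.
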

\begin{proof}
(i) Let $p\in\D$ be any point where the minimum occurs. Clearly, we
  must have $p\in\partial\D$. Due to the examples given in the
  proposition, we have $d\le2/3$, and so $2d^2-1<0$. The inequality
  \eqref{eq:StrongIneq} implies that $c>0$.

Assume that $x>0$. As $bx>0$, the inequalities
\eqref{eq:Nej(i)}--\eqref{eq:Nej(iii)} must be strict at the point
$p$. Since $2d^2-1<0$ and $c>0$, the inequality \eqref{eq:Nej(iv)}
implies that $y\ne0$. Suppose now that the inequality
\eqref{eq:Nej(iv)} at the point $p$ is an equality, i.e., that
$\Phi(p)=0$. By solving the equations $M_1=1$ and $\Phi(p)=0$ for
$x^2$ and $y^2$, we obtain that
\begin{eqnarray} \label{eq:x2}
4abcdx^2 &=& (d(d^2-s_1)-2abc)(2abc+d(2d^2-1)), \\ \label{eq:y2}
4abcdy^2 &=& (d(d^2-s_1)+2abc)(2abc-d(2d^2-1)).
\end{eqnarray}
We choose a point $p'=(a',b',c',d',z')\in\D$, $z'=x'+iy'$, close to
$p$ such that $a'>a$, $b'=b$, $c'=c$, $d'<d$, and $\Phi(p')=0$.  We
can do that by simply setting $d'=d\cos\theta-a\sin\theta$ and
$a'=d\sin\theta+a\cos\theta$, where $\theta>0$ is small, and then
computing $x'$ and $y'$ from the above two equations with $a$ and $d$
replaced by $a'$ and $d'$, respectively. These two equations guarantee
that $\Phi(p')=0$. As $x>0$ and $y\ne0$, we have to choose $x'>0$, and
$y'$ to have the same sign as $y$. For sufficiently small $\theta>0$,
the inequalities \eqref{eq:Nej(i)}--\eqref{eq:Nej(iii)} will be
satisfied at the point $p'$, and we will have equality in
\eqref{eq:Nej(iv)}.  As $b'=b$ and $c'=c$, the other inequalities
defining $\D$ will also be satisfied.  This contradicts with the
hypothesis that $d$ takes the minimal value at the point $p$. We
conclude that also the inequality \eqref{eq:Nej(iv)} is strict at $p$.

Consider the function $f(t)=1-s_1-y^2-(x+t)^2$ of a real variable
$t$.  At the point $t=0$ we have $f(0)=d^2$. Note that $x<1$ and
$f'(0)=-2x<0$.  Hence we can choose a small $\varepsilon>0$ such that
$f(\varepsilon)<d^2$ and all four inequalities
\eqref{eq:Nej(i)}--\eqref{eq:Nej(iv)} are still valid at the point
$q=(a,b,c,\sqrt{f(\varepsilon)},x+\varepsilon+iy)$. By part (b) of
Theorem \ref{thm:NejDelta}, we conclude that $q\in\D$. This
contradicts the fact that the minimum of the coordinate $d$ over $\D$
occurs at the point $p$.  Thus we have shown that $x=0$.

By the inequalities \eqref{eq:Nej(ii)} and \eqref{eq:StrongIneq}, we
have $d(d^2-s_1)\ge2abc\ge d(1-2d^2)$, and so $c>0$ and
$s_1\le3d^2-1$. By the arithmetic-geometric mean inequality we have
\begin{eqnarray}
\left( \frac{3d^2-1}{3} \right)^3 \ge
\left( \frac{s_1}{3} \right)^3 \ge (abc)^2 \ge
\frac{d^2}{4} (1-2d^2)^2.
\end{eqnarray}
By expanding the leftmost and rightmost member, we obtain that
$d\ge2/3$.  Moreover, if $d=2/3$ then the above inequalities
become equalities, and so we must have $a=b=c$. By (i) we have
$a\ge1/3$ and from \eqref{eq:Nej(ii)} we have $a\le1/3$. Hence,
$p=(1,1,1,2,\pm i\sqrt{2})/3$.

(ii) Since $\mu'$ is LU-invariant, by Lemma \ref{le:KanForm}
we may minimize over $\D$ only. By Lemma \ref{le:maximum} we
have $\mu'=\mu$ on $\D$. Hence (ii) follows from (i).
\end{proof}

\section{Three-fermion canonical form}
\label{sec:canonical}

We begin by introducing the notation $p\mapsto p'$ for the projection
map $\bR^4\times\bC\to\bR^4$.  Thus for any point
$p=(a,b,c,d,z)\in\bR^4\times\bC$, we set $p'=(a,b,c,d)$. Let $\D'$ be
the image of $\D$ under this projection map, i.e.,
$\D'=\{p':p\in\D\}$.

\subsection{The canonical form}
\label{sec:canonicalform}

In order to state the main result of this section, the canonical form
for fermionic states, we have to resolve the question of
LU-equivalence of points in $\D$. The most important fact is that two
distinct points in $\D$ which are LU-equivalent must lie on
$\partial\D$.

\begin{proposition}\label{pp:Delta}
Let $p=(a,b,c,d,z)$ and $q=(\tilde{a},\tilde{b},\tilde{c},\tilde{d},\tilde{z})$ be distinct points of
$\D$, and let $z=x+iy$ and $\tilde{z}=\tilde{x}+i\tilde{y}$. Then $p$ and $q$ are
LU-equivalent if and only if the following two conditions hold
\begin{description}
\item[(i)] $p'=q'$ and also $\tilde{x}=x$ if $c>0$;
\item[(ii)] $cx\Phi(p)=\tilde{c}\tilde{x}\Phi(q)=0$.
\end{description}
In particular, if $p$ and $q$ are LU-equivalent, then $p,q\in\partial\D$.
\end{proposition}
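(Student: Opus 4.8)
The plan is to derive everything from Corollary~\ref{cr:Equiv}: two states of $\wedge^3(V)$ are LU-equivalent iff their seven invariants $M_1,\dots,M_7$ agree. Since $p,q\in\S$ one has $M_1(p)=M_1(q)=1$ throughout, and whenever $p$ and $q$ are LU-equivalent Corollary~\ref{cr:delta=DIFFERENTd} already gives $d=\tilde d$; I would treat $c=0$ and $c>0$ separately in each implication.

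\textbf{Sufficiency.} Assume (i) and (ii), so $a=\tilde a$, $b=\tilde b$, $c=\tilde c$, $d=\tilde d$, and (from $M_1=1$) $|z|^2=|\tilde z|^2$. Inspecting \eqref{eq:M'2}--\eqref{eq:M'6}, each $M'_i$ with $i=2,\dots,6$ is a polynomial in $a,b,c,d$, in $|z|^2$, and in $abc(x^2-y^2)$. If $c=0$ the last argument vanishes, so $M_i(p)=M_i(q)$ for $i=2,\dots,6$ and $M_7(p)=M_7(q)=0$ by \eqref{eq:M'7}. If $c>0$, then (i) also gives $\tilde x=x$, which with $|z|^2=|\tilde z|^2$ and $p\ne q$ forces $\tilde y=-y\ne0$, hence $\tilde z=\bar z$; then $x^2-y^2=\tilde x^2-\tilde y^2$ and $\Phi(q)=\Phi(p)$ by \eqref{eq:PolFi}, so $M_i(p)=M_i(q)$ for $i=2,\dots,6$, while $M_7(q)=abcd\,\tilde x\tilde y\,\Phi(q)=-abcd\,xy\,\Phi(p)=-M_7(p)$, which equals $M_7(p)$ precisely because $x\Phi(p)=0$ by (ii), forcing $M_7(p)=0$. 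In both cases $p$ and $q$ are LU-equivalent; and since $c=0$ or $x\Phi(p)=0$, we have $cx(a-b)(b-c)\Phi(p)=0$, so $p\in\partial\D$ by Proposition~\ref{pp:Granica}, and likewise $q\in\partial\D$.

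\textbf{Necessity, the crucial step $a=\tilde a$, $b=\tilde b$, $c=\tilde c$.} Assume $p$ and $q$ are LU-equivalent, so all $M_i$ agree and $d=\tilde d$. Since $(M_2,M_4,M_6)$ agree on $p$ and $q$, formulas \eqref{eq:D-2}--\eqref{eq:D-6} show the multisets $\{D_a,D_b,D_c\}$ agree; and because $s_1\le d^2$ on $\D$ (from \eqref{eq:Nej(ii)}), the identities $D_a-D_b=(a^2-b^2)(c^2-d^2)$ and $D_b-D_c=(b^2-c^2)(a^2-d^2)$ give $D_a\le D_b\le D_c$, hence $D_a(p)=D_a(q)$, $D_b(p)=D_b(q)$, $D_c(p)=D_c(q)$. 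Using $M_1=1$ (so $s_1+|z|^2=1-d^2$ is common to $p$ and $q$) and the identity $D_a=a^2(s_1-d^2)-a^4+d^2(1-d^2)$, these become $g(a^2)=\tilde g(\tilde a^2)$, $g(b^2)=\tilde g(\tilde b^2)$, $g(c^2)=\tilde g(\tilde c^2)$, where $g(\lambda)=\lambda(s_1-d^2)-\lambda^2$ and $\tilde g(\lambda)=\lambda(\tilde s_1-d^2)-\lambda^2$. To pin things down I would now feed in $M_3$ and $M_5=|F|^2$: substituting $|z|^2=1-d^2-s_1$ into \eqref{eq:M'3} and \eqref{eq:M'5}, using $M_3(p)=M_3(q)$ and $M_5(p)=M_5(q)$, and combining with the inequalities of Theorem~\ref{thm:NejDelta} and $0\le s_3\le(s_1/3)^3$, one should be able to exclude $s_1\ne\tilde s_1$. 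Then $g=\tilde g$, and as $g$ is strictly decreasing on $[0,\infty)$ (its vertex is at $(s_1-d^2)/2\le0$), $g(a^2)=g(\tilde a^2)$ forces $a=\tilde a$, and similarly $b=\tilde b$, $c=\tilde c$. I expect this exclusion of $s_1\ne\tilde s_1$ to be the main obstacle and to contain most of the computation, since $M_2,M_4,M_6$ alone do not determine $s_1$, so the invariants $M_3$ and $M_5$ are genuinely needed.

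\textbf{Necessity, finishing.} Once $p'=q'$, $M_1=1$ gives $|z|^2=|\tilde z|^2$. If $c=0$, then $abc=0$ and \eqref{eq:M'5}, \eqref{eq:M'7} impose nothing further, so (i) and (ii) hold and $p,q\in\partial\D$ by Proposition~\ref{pp:Granica}. If $c>0$, then $a,b,c,d>0$, and $M_5(p)=M_5(q)$ together with $|z|^2=|\tilde z|^2$ yields $x^2-y^2=\tilde x^2-\tilde y^2$, hence $\tilde x^2=x^2$ and $\tilde y^2=y^2$; since $p\ne q$ this forces $\tilde x=x$, $\tilde y=-y$, $y\ne0$ (which is (i)) and $\Phi(q)=\Phi(p)$. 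Then $M_7(p)=M_7(q)$ reads $abcd\,xy\,\Phi(p)=-abcd\,xy\,\Phi(p)$, so $x\Phi(p)=0$, which is (ii); and $cx(a-b)(b-c)\Phi(p)=0$, so $p\in\partial\D$ by Proposition~\ref{pp:Granica}, and symmetrically $q\in\partial\D$, completing the proof.
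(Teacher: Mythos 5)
Your sufficiency argument, and the finishing steps of necessity (deducing $\tilde{x}=x$, $\tilde{y}=-y\ne0$, and then $x\Phi(p)=0$ from the equality of $M_5$ (or $M_3$) and $M_7$), are sound and essentially coincide with what the paper does. The problem is the step you yourself flag as ``the main obstacle'': showing $s_1=\tilde{s}_1$, equivalently $|z|=|\tilde{z}|$ (given $M_1=1$ and $d=\tilde{d}$). You do not carry this step out --- you only assert that feeding in $M_3$ and $M_5$ and ``combining with the inequalities of Theorem~\ref{thm:NejDelta}'' should exclude $s_1\ne\tilde{s}_1$. That is exactly where the whole difficulty of the necessity direction sits, and as written your proof has a hole there: the matching of the ordered triples $(D_a,D_b,D_c)$ (which is correct, and is the content of Lemma~\ref{le:DetDx}) only yields $g(a^2)=\tilde{g}(\tilde{a}^2)$ etc.\ for two \emph{a priori different} quadratics $g$ and $\tilde{g}$, which gives nothing until $s_1=\tilde{s}_1$ is known.

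The paper closes this gap by a specific device you did not find. Eliminating $s_1,s_2,s_3$ from the invariant equations produces the cubic \eqref{eq:j1} in $t=|z|^2$, whose coefficients depend only on $d^2$ and on $M_2,\dots,M_5$ and hence are the same for $p$ and $q$. It factors as $3d^2(t-|z|^2)h(t)$ with $h$ quadratic, and the inequalities \eqref{eq:Nej(ii)} and \eqref{eq:Nej(iv)} of Theorem~\ref{thm:NejDelta} are used to show $|z|^2\ge(t_1+t_2)/2$ and $h(|z|^2)\ge0$, hence that $|z|^2$ is the \emph{largest} real root of the cubic; the same applies to $|\tilde{z}|^2$, so $|z|=|\tilde{z}|$, and then \eqref{eq:s1f}--\eqref{eq:s3f} give $s_i=\tilde{s}_i$ for $i=1,2,3$ directly, with no appeal to the $D_x$ at all. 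To rescue your route you would need to supply an argument of comparable substance for the exclusion of $s_1\ne\tilde{s}_1$; the expectation that $M_3$ and $M_5$ ``should'' suffice is not a proof.
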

\begin{proof}
{\em Necessity.} Since $p$ and $q$ are LU-equivalent, they have the
same values of the invariants $M_i$, $i=1,\ldots,7$. It follows from
Definition \ref{def:Delta} (ii) that $\tilde{d}=d$. By substituting $|z|^2$
with $t$ on the left hand side of Eq. \eqref{eq:j1}, we obtain a cubic
polynomial $g(t)$. As $|z|^2$ is one of its roots, $g(t)$ factorizes
as $g(t)=3d^2(t-|z|^2)h(t)$ where
\begin{eqnarray}
h(t)=d^2t^2-d^2(3-6d^2-|z|^2)t+d^2(2d^2-1)(4d^2-2+|z|^2)-8s_3.
\end{eqnarray}
As the discriminant of $h(t)$ is nonnegative, $h(t)$ has two real
roots $t_1\le t_2$. As $t_1+t_2=3-6d^2-|z|^2$, we have
$2|z|^2-(t_1+t_2)=6d^2+3|z|^2-3=3(d^2-s_1)\ge0$ by Theorem
\ref{thm:NejDelta}. Hence, $|z|^2\ge(t_1+t_2)/2$. On the other hand,
$h(|z|^2)=2\left( d^2(d^2-s_1)^2 -4s_3 \right)$ is nonnegative by the
same theorem. It follows that $|z|^2\ge t_2$. Thus all roots of $g(t)$
are real, and $|z|^2$ is the largest root. By the same argument,
$|\tilde{z}|^2$ is the largest root of $g(t)$, and so $|\tilde{z}|=|z|$.

Eqs. \eqref{eq:s1f}--\eqref{eq:s3f} imply that the elementary
symmetric functions $s_1,s_2,s_3$ of $a,b,c$ are the same as those of
$\tilde{a},\tilde{b},\tilde{c}$.  As $a\ge b\ge c\ge0$ and
$\tilde{a}\ge\tilde{b}\ge\tilde{c}\ge0$, it follows that
$\tilde{a}=a$, $\tilde{b}=b$ and $\tilde{c}=c$.  Consequently, we have
$p'=q'$. If $c>0$, then the inequalities $x\ge0$, $\tilde{x}\ge0$, and
Eq. \eqref{eq:M'3} imply that $\tilde{x}=x$. Thus (i) holds.

Assume that $cx\ne0$. By (i) we have $\tilde{x}=x$ and
$\tilde{y}=-y\ne0$. Since $M_7$ has the same value at $p$ and $q$ and
$\tilde{a}\tilde{b}\tilde{c}\tilde{d}\tilde{x}\tilde{y}=-abcdxy\ne0$,
Eq. \eqref{eq:M'7} implies that
$\Phi(q)=-\Phi(p)$. By Theorem \ref{thm:NejDelta} both $\Phi(p)$ and
$\Phi(q)$ are nonnegative, and so $\Phi(q)=\Phi(p)=0$.  Hence, (ii)
holds.

{\em Sufficiency.} We have to show that $M_i(p)=M_i(q)$ for
$i=1,\ldots,7$. This follows immediately by inspection of the formulae
\eqref{eq:M'1}--\eqref{eq:M'7}.

Finally, note that if $p$ and $q$ are LU-equivalent, then (ii) and
Lemma \ref{pp:Granica} imply that $p,q\in\partial\D$.
\end{proof}

Let us denote by $\cO_\psi$ the LU-orbit of $\ket{\psi}\in\we^3
(V)$. Thus,
$\cO_\psi=\Un(6)\cdot\ket{\psi}=\{g\cdot\ket{\psi}\colon g\in\Un(6)\}$. Assume
that $\|\psi\|=1$. The intersection $\cO_\psi\cap\D$ consists of a
single point if and only if one of the following holds:
\begin{description}
\item[(i)] $cx\Phi(p)>0$;
\item[(ii)] $c>0$ and $x\Phi(p)=y=0$.
\item[(iii)] $c=z=0$.
\end{description}
Otherwise, $\cO_\psi\cap\D$ is either
\begin{description}
\item[(iv)] a pair of points: $\{p,q\}$ where $p=(a,b,c,d,z)$,
$q=(a,b,c,d,z^*)$, $cy>0$, and $x\Phi(p)=0$;
\\
or
\item[(v)] a semicircle: $\{(a,b,0,d,re^{it}):|t|\le\pi/2\}$ where
$r=\sqrt{1-a^2-b^2-d^2}>0$.
\end{description}
While the case (i) covers all points in $\D^0$ and some points on
$\partial\D$, in all other cases the points lie on $\partial\D$. It is
easy to see that all five cases indeed occur.

We now state our main result, which follows immediately from Lemma
\ref{le:KanForm} and Proposition \ref{pp:Delta}.
\begin{theorem}\label{thm:KanForma}
Any normalized pure fermionic state $\ket{\varphi}\in\we^3(V)$,
with $\dim V=6$, is LU-equivalent to a state
\begin{eqnarray}
\ket{\psi}=ae_{235}+be_{145}+ce_{136}+de_{246}+ze_{135},
\end{eqnarray}
where $p:=(a,b,c,d,z)\in\D$, $z=x+iy$. Such a state $\ket{\psi}$ is
unique if $cx\Phi(p)>0$; in particular this is true if $p\in\D^0$. To
guarantee the uniqueness of $\ket{\psi}$ when $cx\Phi(p)=0$, we require
that (i) $y\ge0$ and (ii) $y=0$ if $c=0$.

(For a simpler description of the canonical region $\D$ see
Proposition \ref{pp:projection} below.)
\end{theorem}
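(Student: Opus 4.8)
The plan is to deduce the theorem directly from Lemma~\ref{le:KanForm} (every LU-orbit of normalized states meets $\D$) together with Proposition~\ref{pp:Delta} (the exact criterion for two points of $\D$ to be LU-equivalent); essentially all the content already resides in those two statements, so the remaining work is to check that the normalization conditions (i)--(ii) pick out exactly one point from each orbit's intersection with $\D$.

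For existence, Lemma~\ref{le:KanForm} first gives an LU-equivalent state $\ket{\psi}$ whose parameters $p=(a,b,c,d,z)$ lie in $\D$, so in particular $a\ge b\ge c\ge 0$ and $x\ge 0$. If $cx\Phi(p)>0$ nothing further is needed. If $cx\Phi(p)=0$, I would normalize the phase of $z$: when $c=0$, pass to $q=(a,b,0,d,|z|)$; when $c>0$ (so necessarily $x\Phi(p)=0$) and $y<0$, pass to $q=(a,b,c,d,z^{*})$. In either case one checks $q\in\D$ via Theorem~\ref{thm:NejDelta}(b), because the inequalities \eqref{eq:Nej(i)}--\eqref{eq:Nej(iv)} are unaffected: when $c=0$ their left-hand sides involve $z$ only through $|z|^{2}$, and under $z\mapsto z^{*}$ they involve $z$ only through $|z|^{2}$ and $x^{2}-y^{2}$. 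One then checks $q$ is LU-equivalent to $p$ by comparing invariants through Corollary~\ref{cr:Equiv}: among $M'_1,\dots,M'_7$ only $M'_7=abcdxy\,\Phi$ is sensitive to these changes, and it vanishes at $p$ (killed by $c=0$, or by $x\Phi(p)=0$). After this adjustment $p$ satisfies (i) and (ii). Alternatively one may simply quote the description of $\cO_\psi\cap\D$ recorded just before the theorem --- a single point, a conjugate pair, or a semicircle --- and select the member with $y\ge 0$, respectively $y=0$.

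For uniqueness, first suppose $cx\Phi(p)>0$. If some $q\in\D$ with $q\ne p$ were LU-equivalent to $p$, then Proposition~\ref{pp:Delta} would force $cx\Phi(p)=0$, a contradiction; so $p$ is the only point of $\D$ in its orbit, with no further conditions needed. The ``in particular'' clause holds because on $\D$ one has $c,x,\Phi(p)\ge 0$ --- the last being exactly inequality~\eqref{eq:Nej(iv)} after substituting $|z|^{2}=1-s_1-d^{2}$ --- while $p\in\D^{0}$ forces $cx(a-b)(b-c)\Phi(p)\ne 0$ by Proposition~\ref{pp:Granica}, hence $cx\Phi(p)>0$. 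Now suppose $cx\Phi(p)=0$, and that a distinct $q=(\tilde a,\tilde b,\tilde c,\tilde d,\tilde z)\in\D$ also satisfying (i)--(ii) is LU-equivalent to $p$. Proposition~\ref{pp:Delta} gives $p'=q'$ and, if $c>0$, also $\tilde x=x$; the normalization $\|\psi\|=1$ then gives $|z|=|\tilde z|$. If $c>0$ this yields $\tilde y^{2}=y^{2}$, and since $y,\tilde y\ge 0$ we get $\tilde y=y$, hence $q=p$. If $c=0$, condition (ii) gives $y=\tilde y=0$, so $z=x\ge 0$ and $\tilde z=\tilde x\ge 0$, and $|z|=|\tilde z|$ again forces $q=p$. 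Either way $q\ne p$ is contradicted, so $\ket{\psi}$ is unique.

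The only step that needs genuine care is the existence half --- confirming that the phase-normalizing substitutions stay inside $\D$ and inside the LU-orbit --- but even this reduces to a one-line inspection of Theorem~\ref{thm:NejDelta} and Eqs.~\eqref{eq:M'1}--\eqref{eq:M'7}, or may be bypassed using the orbit-intersection list stated just before the theorem. Everything else is immediate from Lemma~\ref{le:KanForm}, Proposition~\ref{pp:Delta}, and Proposition~\ref{pp:Granica}.
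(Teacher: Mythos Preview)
Your proposal is correct and follows essentially the same approach as the paper, which simply states that the theorem ``follows immediately from Lemma~\ref{le:KanForm} and Proposition~\ref{pp:Delta}'' without further elaboration. You have supplied precisely the details that the paper omits: the phase-normalization step (via Theorem~\ref{thm:NejDelta}(b) and Corollary~\ref{cr:Equiv}, or equivalently via the orbit-intersection list (i)--(v) immediately preceding the theorem), and the verification of the ``in particular'' clause using Proposition~\ref{pp:Granica}.
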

Our next objective is to give a more geometric description of the
canonical region $\D$ and its projection $\D'$. By restricting the
projection map $\bR^4\times\bC\to\bR^4$, we obtain the map $\D\to\D'$,
$p\mapsto p'$. The fibre of this map over a point $p':=(a,b,c,d)\in\D'$ is
the set $F_{p'}:=\{q\in\D\colon q'=p'\}$. We shall determine the nature of
these fibres and find explicit inequalities which define $\D'$ as a
subset of $\bR^4$.

For convenience, we set $r:=\sqrt{1-a^2-b^2-c^2-d^2}$ and denote by
$S_{p'}$ the semicircle consisting of all points $(p',z)=(a,b,c,d,z)$
with $z=x+iy$ such that $|z|=r$ and $x\ge0$. Note that
$F_{p'}=S_{p'}\cap\D\supseteq\{p\}$ for all $p\in\D$. If $c=0$ it is
easy to see that $F_{p'}=S_{p'}$.  Note that $S_{p'}=\{p\}$ if and
only if $r=0$.

We claim that if $F_{p'}=\{p\}$, then $r=0$. Indeed, observe that
$p=(p',z)\in\D$, with $z=x+iy$, implies that $(p',z^*)\in\D$, and so we
must have $y=0$. Let $q:=(p',ir)\in S_{p'}$. Since $p\in\D$, the
inequalities \eqref{eq:Nej(i)}--\eqref{eq:Nej(iv)} are valid at
$p$. The first two do not involve $z$ and remain valid at $q$. By
using the fact that $y=0$, we see that the remaining two inequalities
remain valid when we replace $p$ by $q$. Hence, Theorem
\ref{thm:NejDelta} implies that $q\in\D$. Consequently, $q\in F_{p'}$,
and so we must have $q=p$, i.e., $r=0$.

Finally, let us denote by $\Psi(a,b,c,d,z)$ the left hand side of
\eqref{eq:Nej(iii)}.  It is evident from the definitions of $\Phi$ and
$\Psi$ that $\Phi(a,b,c,d,re^{it})$ and $\Psi(a,b,c,d,re^{it})$,
considered as functions of $t\in[0,\pi/2]$, are constant if $c=0$, and
are strictly increasing if $c>0$.

\begin{lemma}\label{le:fibre}
Let $p=(a,b,c,d,re^{i\theta})\in\D$, $r>0$, $|\theta|<\pi/2$.
The following assertions hold:
\begin{description}
\item[(i)] If $\Phi(p',r)\ge0$ then $F_{p'}=S_{p'}$.
\item[(ii)] If $\Phi(p',r)<0$ then $F_{p'}$ consists of all points
  $(p',z')\in S_{p'}$, $z'=x'+iy'$, such that $0\le x'\le x_0$ where
  $x_0$ is the nonnegative solution of Eq. \eqref{eq:x2}.
\end{description}
\end{lemma}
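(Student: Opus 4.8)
The plan is to decide membership of the points of the semicircle $S_{p'}$ in $\D$ by means of Theorem~\ref{thm:NejDelta}, exploiting that the inequalities \eqref{eq:Nej(i)} and \eqref{eq:Nej(ii)} do not involve the coordinate $z$. Since $p\in\D$, these two inequalities hold at $p$ and hence at \emph{every} point of $S_{p'}$, and $d>0$ there; so by Theorem~\ref{thm:NejDelta} a point $(p',z)\in S_{p'}$ lies in $\D$ if and only if $\Phi\ge0$ and $\Psi\ge0$ at that point, where $\Psi$ is the left side of \eqref{eq:Nej(iii)} and, on the unit sphere, $\Phi$ coincides with the left side of \eqref{eq:Nej(iv)} (because $d^2-s_1-|z|^2=2d^2-1$ there). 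Restricted to $S_{p'}$ both $\Phi$ and $\Psi$ depend on $z$ only through $x:=\Re z\in[0,r]$, and by the remark preceding the lemma they are non-increasing in $x$ --- constant when $c=0$, strictly decreasing when $c>0$. Hence $\{x:\Phi\ge0\}$ and $\{x:\Psi\ge0\}$ are closed initial segments $[0,x_\Phi]$ and $[0,x_\Psi]$ of $[0,r]$, and $F_{p'}$ is exactly the set of $(p',z)\in S_{p'}$ with $\Re z\le\min(x_\Phi,x_\Psi)$. Since $p\in\D$ has $\Re z=r\cos\theta>0$, we already have $x_\Phi,x_\Psi>0$.

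The key step will be to show $x_\Psi\ge x_\Phi$, i.e. that \eqref{eq:Nej(iii)} is never the binding constraint along $S_{p'}$; after that the rest is bookkeeping. I would argue by contradiction: if $x_\Psi<x_\Phi\,(\le r)$, then at $x=x_\Psi\in(0,r)$ the corresponding point of $S_{p'}$ satisfies \eqref{eq:Nej(i)} and \eqref{eq:Nej(ii)} (inherited from $p$), satisfies \eqref{eq:Nej(iii)} with equality, and satisfies \eqref{eq:Nej(iv)} strictly (because $x_\Psi<x_\Phi$); so by part~(b) of Theorem~\ref{thm:NejDelta} it lies in $\D$. But then part~(a) of that theorem forces $b\,x_\Psi=0$, hence $b=0$ (as $x_\Psi>0$), hence $c=0$ (as $a\ge b\ge c\ge0$). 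With $c=0$, however, $\Psi$ is constant along $S_{p'}$, equal to its value at $p$, which is $\ge0$; thus $x_\Psi=r\ge x_\Phi$, a contradiction. So $\min(x_\Phi,x_\Psi)=x_\Phi$ and $F_{p'}=\{(p',z)\in S_{p'}:\Re z\le x_\Phi\}$.

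It then remains to locate $x_\Phi$. In case (i), $\Phi(p',r)\ge0$; since $\Phi$ is non-increasing in $x$ its minimum over $S_{p'}$ is $\Phi(p',r)\ge0$, so $\Phi\ge0$ on all of $S_{p'}$, giving $x_\Phi=r$ and $F_{p'}=S_{p'}$. In case (ii), $\Phi(p',r)<0$; this rules out $c=0$ (otherwise $\Phi$ would be constant on $S_{p'}$ and equal to the nonnegative number $\Phi(p)$), so $\Phi$ is strictly decreasing in $x$. Its maximum over $S_{p'}$ (attained at $\Re z=0$) exceeds $\Phi(p)\ge0$, while its minimum (attained at $\Re z=r$) is $\Phi(p',r)<0$, so $\Phi$ has a unique zero at some $x_0\in(0,r)$; thus $x_\Phi=x_0$ and $F_{p'}=\{(p',z)\in S_{p'}:\Re z\le x_0\}$. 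Finally, solving $\Phi=0$ on the unit sphere for $(\Re z)^2$ --- the same computation already carried out in the proof of Proposition~\ref{pp:Minimum}, legitimate here since $abcd\ne0$ --- identifies $(\Re z)^2$ with the right-hand side of \eqref{eq:x2} divided by $4abcd$, so $x_0$ is precisely the nonnegative solution of \eqref{eq:x2}. The only non-routine ingredient is the comparison $x_\Psi\ge x_\Phi$ above, which rests entirely on the ``equality implies $bx=0$'' clause of Theorem~\ref{thm:NejDelta}(a).
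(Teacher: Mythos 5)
Your proof is correct and follows essentially the same route as the paper's: both reduce membership in $F_{p'}$ to the signs of $\Phi$ and $\Psi$ along the semicircle via Theorem~\ref{thm:NejDelta}, exploit their monotonicity in $x=\Re z$, and rule out $\Psi$ being the binding constraint by the ``equality in \eqref{eq:Nej(iii)} implies $bx=0$'' clause of part~(a). Your reorganization (proving $x_\Psi\ge x_\Phi$ once, up front, rather than embedding the contradiction argument in each case) is only a cosmetic difference.
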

\begin{proof}
Without any loss of generality, we may assume that $\theta\ge0$.  For
$t\in[0,\pi/2]$, let $p(t)=(p',re^{it})$. If $c=0$, then the functions
$\Phi(p(t))$ and $\Psi(p(t))$ are constant and so (i) holds in that
case.  We assume that $c>0$. The inequalities \eqref{eq:Nej(i)} and
\eqref{eq:Nej(ii)} are valid at all points of the fibre $F_{p'}$, not
only at the point $p$, because their left hand sides do not depend on
$z$. Moreover, since $b>0$ these two inequalities are strict at each
point $p(t)\in F_{p'}$ with $0\le t<\pi/2$.

(i) We have to prove that $p(t)\in\D$ for $t\in[0,\pi/2]$. The proof
is based on part (a) of Theorem \ref{thm:NejDelta}. As $p\in\D$, we
have $\Psi(p(\theta))=\Psi(p)\ge0$. Assume that $\Psi(p(0))<0$ and so
$p(0)\notin\D$. There is a unique $t_1\in(0,\theta]$ such that
  $\Psi(p(t_1))=0$. As $\Phi(p(t_1))>0$, we have $p(t_1)\in\D$. Since
  $b>0$ and $0<t_1<\pi/2$, we must have $\Psi(p(t_1))>0$ by part (a)
  of Theorem \ref{thm:NejDelta}. Hence, we have a contradiction. We
  conclude that $\Psi(p(0))\ge0$. It follows that for any
  $t\in[0,\pi/2]$, we have $\Phi(p(t))\ge0$ and $\Psi(p(t))\ge0$, and
  so $p(t)\in\D$.

(ii) Since $p\in\D$, we have $\Phi(p)\ge0$ and $\Psi(p)\ge0$. On the
  other hand, by the hypothesis, we have $\Phi(p',r)<0$. It follows
  that there is a unique $t_1\in(0,\theta]$ such that
    $\Phi(p(t_1))=0$. We deduce that $x_0=r\cos t_1$, and that
    $\Phi(p(t))\ge0$ if and only if $t_1\le t\le\pi/2$. If
    $\Psi(p(t_1))<0$, we can deduce a contradiction by the same
    argument as in part (i). Therefore $\Psi(p(t_1))\ge0$, and the
    assertion (ii) follows easily.
\end{proof}

We now determine the projection $\D'$ of $\D$ and simplify the set of
inequalities in Theorem \ref{thm:NejDelta} which define $\D$.
\begin{proposition} \label{pp:projection}\
\begin{description}
\item[(i)] The subset $\D'\subseteq\bR^4$ consists of all points
  $p'=(a,b,c,d)$ which satisfy the inequalities $a\ge b\ge c\ge0$,
  $d>0$, $a^2+b^2+c^2+d^2\le1$ as well as the inequalities \eqref{eq:Nej(ii)}
  and \eqref{eq:StrongIneq}.
\item[(ii)] A point $p=(a,b,c,d,z)\in\bR^4\times\bC$ belongs to $\D$
  if and only if $a^2+b^2+c^2+d^2+|z|^2=1$ and $p$ satisfies the following
  inequalities (with $s_1=a^2+b^2+c^2$)
\begin{eqnarray}
&& a\ge b\ge c\ge 0,~x\ge0,~d>0; \\
&& d(d^2-s_1)-2abc\ge0; \label{eq:Nej(ii)'},  \\
&& 2abc-d(1-2d^2)\ge0; \label{eq:StrongIneq'} \\
&& 4abcdx^2\le \left( d(d^2-s_1)-2abc \right)
\left( 2abc-d(1-2d^2) \right). \label{eq:Nej(iv)'}
\end{eqnarray}
\end{description}
\end{proposition}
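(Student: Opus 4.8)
The plan is to deduce both parts from Theorem~\ref{thm:NejDelta} and Corollary~\ref{cr:Posledice}; since part~(i) will follow quickly from part~(ii), the work is in part~(ii). The key preliminary observation is that, under the normalization $a^2+b^2+c^2+d^2+|z|^2=1$, three of the four inequality groups of part~(ii) are essentially those of Theorem~\ref{thm:NejDelta}: \eqref{eq:Nej(ii)'} is literally \eqref{eq:Nej(ii)}; \eqref{eq:StrongIneq'} is \eqref{eq:StrongIneq} rewritten; and \eqref{eq:Nej(iv)'} is equivalent to \eqref{eq:Nej(iv)}, because (as recorded in the proof of Corollary~\ref{cr:Posledice}(ii)) substituting $|z|^2=1-s_1-d^2$ identifies the left-hand side $\Phi$ of \eqref{eq:Nej(iv)} with $\bigl(d(d^2-s_1)-2abc\bigr)\bigl(2abc-d(1-2d^2)\bigr)-4abcdx^2$. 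Thus the genuine content of part~(ii) is that, once $a\ge b\ge c\ge0$, $x\ge0$, $d>0$, \eqref{eq:Nej(ii)'}, \eqref{eq:StrongIneq'}, \eqref{eq:Nej(iv)'} are assumed, the two remaining defining inequalities \eqref{eq:Nej(i)} and \eqref{eq:Nej(iii)} hold automatically.

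Granting that, the ``only if'' direction of part~(ii) requires nothing new: \eqref{eq:Nej(ii)'} and \eqref{eq:Nej(iv)'} follow from Theorem~\ref{thm:NejDelta}(a) and the identity just quoted, while \eqref{eq:StrongIneq'} is exactly Corollary~\ref{cr:Posledice}(ii). Part~(i) then drops out: the inclusion of $\D'$ in the set displayed in part~(i) is read off by projecting the characterization of $\D$, and for the reverse inclusion, given $(a,b,c,d)$ satisfying those conditions, I would take $z=ir$ with $r=\sqrt{1-a^2-b^2-c^2-d^2}\ge0$; then $x=0$ makes \eqref{eq:Nej(iv)'} read $0\le\bigl(d(d^2-s_1)-2abc\bigr)\bigl(2abc-d(1-2d^2)\bigr)$, which holds since both factors are nonnegative, so part~(ii) gives $(a,b,c,d,ir)\in\D$ and hence $(a,b,c,d)\in\D'$.

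The substance is the ``if'' direction of part~(ii): assuming the new inequalities and the normalization, I must produce \eqref{eq:Nej(i)} and \eqref{eq:Nej(iii)}, after which Theorem~\ref{thm:NejDelta}(b) applies. For \eqref{eq:Nej(i)}, multiplying \eqref{eq:StrongIneq'} by $d>0$ gives $d^2(2d^2-1)\ge-2abcd$, so its left-hand side $a^2b^2+c^2d^2+d^2(2d^2-1)\ge(ab-cd)^2\ge0$. For \eqref{eq:Nej(iii)}, chaining \eqref{eq:Nej(ii)'} and \eqref{eq:StrongIneq'} and dividing by $d$ gives $d^2\ge s_1$ and $3d^2-s_1-1\ge0$, and I would use the identity, valid under the normalization,
\[
\Psi \;=\; \Phi \;+\; (a^2b^2+c^2d^2)(3d^2-s_1-1) \;+\; d^2(2d^2-1)(d^2-s_1) \;+\; 4s_3 ,
\]
where $\Psi$, $\Phi$ are the left-hand sides of \eqref{eq:Nej(iii)}, \eqref{eq:Nej(iv)}. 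Since $\Phi\ge0$ by \eqref{eq:Nej(iv)'}, it remains to check that the last three terms sum to something nonnegative, and here a short case split on the sign of $2d^2-1$ is needed. If $2d^2\ge1$, each of the three terms is individually nonnegative. If $2d^2<1$, I rewrite that sum as $(3d^2-s_1-1)\bigl(a^2b^2+c^2d^2+d^2(2d^2-1)\bigr)+\bigl((2abc)^2-(d(1-2d^2))^2\bigr)$, where the first product is $\ge0$ (its second factor being the left-hand side of \eqref{eq:Nej(i)}, just shown to be $\ge0$) and the bracket is $\ge0$ because $2abc\ge d(1-2d^2)\ge0$. Either way $\Psi\ge0$, i.e.\ \eqref{eq:Nej(iii)} holds.

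I expect the only real obstacle to be the bookkeeping around this identity for $\Psi$ — verifying it, and recognizing that the case split on the sign of $2d^2-1$ is genuinely necessary, since $(2abc)^2-(d(1-2d^2))^2$ can be negative when $2d^2>1$. Everything else reduces to substitution into Theorem~\ref{thm:NejDelta} and Corollary~\ref{cr:Posledice}.
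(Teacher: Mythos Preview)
Your proof is correct, but it takes a different route from the paper's. The paper proves (i) first: for sufficiency it specializes to the point $p=(a,b,c,d,ir)$ with $x=0$, verifies \eqref{eq:Nej(i)}, \eqref{eq:Nej(iii)}, \eqref{eq:Nej(iv)} there, and then handles (ii) sufficiency by appealing to Lemma~\ref{le:fibre} on the fibres $F_{p'}$ to carry the conclusion from $x=0$ to general $x$. You bypass Lemma~\ref{le:fibre} altogether by observing (correctly) that $\Psi-\Phi$ is actually independent of $z$ on the sphere, so the same identity proves $\Psi\ge0$ for every $x$, and then (i) becomes a corollary of (ii). This is a genuine structural simplification.

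On the other hand, the paper's treatment of the identity is cleaner: it records $\Psi-\Phi=(ab-cd)^2(3d^2-1-s_1)+(d(d^2-s_1)+2abc)(d(2d^2-1)+2abc)$, which is the same polynomial as yours but factored so that nonnegativity is immediate from \eqref{eq:Nej(ii)'} and \eqref{eq:StrongIneq'} with no case split on the sign of $2d^2-1$. Your grouping forces the split because $d^2(2d^2-1)(d^2-s_1)$ changes sign. So each approach buys something: yours is more self-contained, the paper's algebra is tidier.
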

\begin{proof}
(i) {\em Necessity.} Let $p':=(a,b,c,d)\in\D'$ and let
$r=\sqrt{1-s_1-d^2}$. It follows from Lemma \ref{le:fibre} that
the point $p:=(p',ir)\in F_{p'}$. Consequently, $p'$ must satisfy all
inequalities listed in the proposition.

{\em Sufficiency.}  Let $p=(a,b,c,d,ir)$, where
$r=\sqrt{1-s_1-d^2}\ge0$. It suffices to show that $p\in\D$. By
Theorem \ref{thm:NejDelta}, it suffices to verify that $p$ satisfies
the inequalities \eqref{eq:Nej(i)}, \eqref{eq:Nej(iii)} and
\eqref{eq:Nej(iv)}. As $d>0$, the inequality \eqref{eq:Nej(i)} follows
immediately from \eqref{eq:StrongIneq'}.  One can easily verify that
\begin{eqnarray}
\Psi(p)-\Phi(p)=(ab-cd)^2(3d^2-1-s_1)+
(d(d^2-s_1)+2abc)(d(2d^2-1)+2abc).
\end{eqnarray}
The inequalities \eqref{eq:Nej(ii)'} and \eqref{eq:StrongIneq'} imply
that $3d^2-1-s_1\ge0$, and so $\Psi(p)-\Phi(p)\ge0$. Since at the
point $p$ we have $z=ir$ and so $x=0$, we obtain that
$\Phi(p)=(d(d^2-s_1)-2abc)(2abc+d(2d^2-1))$. Thus, the inequalities
\eqref{eq:Nej(ii)'} and \eqref{eq:StrongIneq'} also imply that
$\Phi(p)\ge0$. Thus $\Psi(p)\ge\Phi(p)\ge0$, and so the inequalities
\eqref{eq:Nej(iii)} and \eqref{eq:Nej(iv)} also hold at the point $p$.

(ii) We already know that the listed inequalities are necessary.
(Note that the inequalities \eqref{eq:Nej(iv)} and \eqref{eq:Nej(iv)'}
are equivalent to each other.)  To prove the sufficiency, we observe
that all inequalities listed in part (i) are satisfied at the point
$p':=(a,b,c,d)$, and so $p'\in\D'$. Since $|z|=r:=\sqrt{1-s_1-d^2}$
and $x\ge0$, we have $z\in S_{p'}$, i.e., $z=re^{it}$ for some real
$t$, $|t|\le\pi/2$. As $p$ satisfies the inequality
\eqref{eq:Nej(iv)'}, we have $\Phi(p',re^{it})=\Phi(p)\ge0$. Now Lemma
\ref{le:fibre} implies that $p\in F_{p'}$ and, in particular,
$p\in\D$.
\end{proof}

We point out that none of the four inequalities $d>0$, and
\eqref{eq:Nej(ii)'}--\eqref{eq:Nej(iv)'} in part (ii) can be
omitted. For the inequality $d>0$ we can use the point
$p=(3,0,0,0,4)/5$ which is not in $\D$ and satisfies the other three
inequalities.  Similarly, for the remaining three inequalities:
$(1,1,1,1,0)/2$ violates only \eqref{eq:Nej(ii)'},
$(1,0,0,1,1)/\sqrt{3}$ violates only \eqref{eq:StrongIneq'}, and
$(4,2,2,6,2+3i)/\sqrt{73}$ violates only \eqref{eq:Nej(iv)'}.

Note also that in view of Proposition \ref{pp:Minimum}, the inequality
$d>0$ can be replaced by the stronger inequality $d\ge2/3$.

\subsection{\label{subsec:algorithm} The algorithm}

We shall describe a numerical algorithm which, for a generic
normalized state $\ket{\phi}\in\we^3(V)$ as input, computes its
canonical form $\ket{\psi}$. Let us denote by $p=(a,b,c,d,z)$ the
point in $\D$ that corresponds to $\ket{\psi}$. So, our problem is to
compute the coordinates $a,b,c,d$ and $z=x+iy$.  Since $\ket{\phi}$ is
given, we can compute the values of the invariants $M_i$ at
$\ket{\phi}$. For simplicity, we shall write in this subsection
$M_i=M_i(\phi)$. Since $\ket{\psi}$ is LU-equivalent to $\ket{\phi}$,
we also have $M_i(p)=M_i$ for $i=1,\ldots,7$. Thus, $p=(a,b,c,d,z)$ is
a solution of this system of equations. In order to solve numerically
this system of equations, we proceed as follows.

By eliminating the expression $x^2-y^2$ from Eqs. \eqref{eq:M'3} and
\eqref{eq:M'5} we obtain
\begin{eqnarray}
3M_5+4d^2M_3 &=& 3d^4|z|^4 +24(s_3-s_2d^2)d^2
                +4(s_2+s_1d^2)d^2.        \label{eq:j5}
\end{eqnarray}
By solving Eqs. \eqref{eq:M'1}, \eqref{eq:M'2} and
\eqref{eq:j5} for $s_1$, $s_2$, and $s_3$, we obtain that
\begin{eqnarray}
s_1 &=& 1 -d^2 -|z|^2, \label{eq:s1f} \\
2s_2 &=& 2d^4 -(2+|z|^2)d^2 +M_2, \label{eq:s2f} \\
24d^2s_3 &=& 24d^8 -12( 2+|z|^2)d^6
+3( 4M_2 +2|z|^2 -|z|^4 )d^4 \notag \\
&&\quad +2( 2M_3-M_2 )d^2 +3M_5. \label{eq:s3f}
\end{eqnarray}
As $d>0$, by substituting the above expressions into
Eqs. \eqref{eq:M'4} and \eqref{eq:M'6} we obtain the equations
\begin{eqnarray}
&& 3d^4|z|^6 +9(2d^2-1)d^4|z|^4
   +(6(1-2M_2)d^4+2(M_2-2M_3)d^2-3M_5)|z|^2 \notag \\
&& \quad-96d^{10}+144d^8-48(1+M_2)d^6+16(2M_2-M_3)d^4
 \notag \\ \label{eq:j1}
&& \quad+2(2M_3+3M_2^2-M_2-12M_4-6M_5)d^2+3M_5=0, \\
&& 27d^8|z|^8 +36d^6(8d^4-4d^2+M_2)|z|^6 \notag \\
&& \quad+6d^4(144d^8-144d^6+6d^4(7+4M_2)-4d^2(4M_2+M_3)
-3M_5)|z|^4 \notag \\
&& \quad+12d^2(4d^4-2d^2+M_2)(6d^4(1-2M_2)
+2d^2(M_2-2M_3)-3M_5)|z|^2 \notag \\
&& \quad-2304d^{16}+4608d^{14}-576(5+4M_2)d^{12}
+192(3+16M_2-2M_3)d^{10} \notag \\
&& \quad+96(4M_3-11M_2-6M_2^2-3M_5)d^8
+48(M_2-2M_3+8M_2^2+6M_5-4M_2M_3)d^6 \notag \\
&& \quad+4(16M_2M_3-7M_2^2-18M_5-36M_2M_5-4M_3^2
-144M_6)d^4 \notag \\  \label{eq:j2}
&& \quad+24(2M_2-M_3)M_5d^2-9M_5^2=0.
\end{eqnarray}

By substituting formally $|z|^2$ and $d^2$ in these two equations with
new indeterminates $s$ and $t$, respectively, we obtain two
polynomials $g(s,t)$ and $h(s,t)$ which have a common solution, namely
$s=|z|^2$, $t=d^2$. We can eliminate now $s$ from these two equations
by computing the resultant of the polynomials $g(s,t)$ and $h(s,t)$
with respect to the variable $s$. This resultant has the form $-2^4
3^5 t^{12} f(t)$, where $f(t)$ is a polynomial in $t$ of degree 8
whose coefficients are polynomials in the six parameters
$M_1,\ldots,M_6$ with integer coefficients. As the fully expanded
$f(t)$ has 550 terms, it is only given in Appendix
\ref{dodatak}.  Since we know that $t=d^2$ is a root of $f(t)$, the
polynomial $f(t)$ has a factorization $f(t)=(t-d^2)f_1(t)$. It turns
out that $f_1(t)$, when expanded as a polynomial in $t$ and the
parameters $M_i$, has 41967 terms.

After substituting the expressions \eqref{eq:M'1}--\eqref{eq:M'6} into
$f_1(t)$, by using Maple we obtain the following factorization
\begin{eqnarray} \label{eq:f_1}
f_1(d^2) &=& 2^8 3^5 (abcd)^2(d^2-a^2)(d^2-b^2)(d^2-c^2)
\left( d^2(d^2-s_1)^2-4s_3 \right)^3 \cdot\Phi(p).
\end{eqnarray}
Hence, $d^2$ is a multiple root of $f(t)$ if and only if $c=0$ or
$d(d^2-s_1)-2abc=0$ or $\Phi(p)=0$. Recall that for $p\in\D$ the
equality $d(d^2-s_1)-2abc=0$ implies that $bx=0$. It follows that
$d^2$ is a multiple root of $f(t)$ if and only if $cx\Phi(p)=0$.

Computing $d^2$ is the main step in the algorithm. We applied our
algorithm to about 100 randomly chosen input states $\ket{\phi}$. In
all these cases the equation $f(t)=0$ had $4$, $6$, or $8$ real roots,
all of them lying in the interval $(0,1)$. Moreover, $d^2$ was always
the largest of these real roots. Since $p\in\D$, the correct value of
$d$ will be equal to $\mu(\phi)$.

Once the value of $d^2$ is selected, the remaining steps of the
algorithm are straightforward. We compute $|z|^2$ by solving the
polynomial equation $g(s,d^2)=0$. As shown in the proof of Proposition
\ref{pp:Delta}, this cubic has three real roots and $|z|^2$ is its
largest root. Then the equations \eqref{eq:s1f}--\eqref{eq:s3f} provide
the values of $s_1$, $s_2$, and $s_3$. As $a^2$, $b^2$, $c^2$ are the
roots of the cubic $t^3-s_1t^2+s_2t-s_3$, and we know that $a\ge b\ge
c\ge0$, we can compute $a$, $b$, and $c$. Next, $x^2$ (and $y^2$) can be
computed from Eq. \eqref{eq:M'3}. As $x\ge0$, we obtain also the value
of $x$. Finally, we can determine the sign of $y$ by using
Eq. \eqref{eq:M'7}.

When applying the algorithm, we had to increase the precision by
setting the global Maple parameter \texttt{Digits} from $10$ (the
default) to $200$ since some of the coefficients in $f(t)=0$ are small
and hence the equation might not be well conditioned.

We conjecture that, if $f(t)$ is not identically zero, then\\
(1) all real roots of $f(t)$ belong to the interval $[0,1]$ and\\
(2) $d^2$ is the largest real root of $f(t)$.

\section{Real and quasi-real pure fermionic states}
\label{sec:quasireal}

We say that a state $\ket{\psi}\in\we^3(V)$ is {\em real} if there
exists $g\in\Un(6)$ such that all components of $g\cdot\ket{\psi}$,
with respect to the standard basis $\{e_{ijk}\}$, are real. Let us
denote by $\cO_\psi$ the LU-orbit containing the state $\ket{\psi}$,
i.e., $\cO_\psi=\Un(6)\cdot\ket{\psi}=\{g\cdot\ket{\psi}\colon
g\in\Un(6)\}$.  By $\ket{\psi^*}$ we denote the complex conjugate of
$\ket{\psi}$ computed in the standard basis. We say that $\ket{\psi}$
is {\em quasi-real} if $\ket{\psi^*}$ and $\ket{\psi}$ are
LU-equivalent. It is immediate that $\ket{\psi}$ is {\em quasi-real}
if and only if $\cO_{\psi^*}=\cO_\psi$.  Note that every real state is
quasi-real.

One defines similarly the real and quasi-real pure states of three
qubits.  A pure three-qubit state is real if in some orthonormal basis
all of its components are real numbers.  It is quasi-real if it is
LU-equivalent to its complex conjugate state. It was proved in
\cite[Eq. (34) and Appendix B]{ajt01} that the pure three-qubit
quasi-real states are real.

By using the invariants, we can characterize the quasi-real
states.
\begin{proposition}\label{pp:quasi-real}
A state $\ket{\psi}\in\wedge^3(V)$ is quasi-real if and only if
$M_7(\psi)=0$.
\end{proposition}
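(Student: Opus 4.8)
The plan is to read off quasi-reality from the invariants. First I would observe that, by definition, $\ket{\psi}$ is quasi-real exactly when $\ket{\psi^*}$ and $\ket{\psi}$ lie on the same LU-orbit, and by Corollary~\ref{cr:Equiv} this is equivalent to the seven equalities $M_i(\psi^*)=M_i(\psi)$, $i=1,\dots,7$. So the whole statement reduces to understanding how each generator $M_i$ transforms under $\ket{\psi}\mapsto\ket{\psi^*}$.

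Next I would reduce to a state in $W_6$. By Lemma~\ref{lm:5basis} we may write $\ket{\psi}=g\cdot\ket{\phi}$ with $g\in\Un(6)$ and $\ket{\phi}=ae_{235}+be_{145}+ce_{136}+de_{246}+ze_{135}$, $a,b,c,d\ge0$. Since the matrix of $\wedge^3 g$ in the basis $\{e_{ijk}\}$ has entries which are polynomials (with integer coefficients) in the entries of $g$, taking complex conjugates yields $\ket{\psi^*}=g^*\cdot\ket{\phi^*}$ with $g^*\in\Un(6)$; hence $\ket{\psi^*}$ is LU-equivalent to $\ket{\phi^*}$, and the latter is precisely the point $(a,b,c,d,z^*)$ of $W_6$, because $a,b,c,d$ are real. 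As the $M_i$ are LU-invariant, this gives $M_i(\psi)=M_i'(a,b,c,d,z)$ and $M_i(\psi^*)=M_i'(a,b,c,d,z^*)$, so everything is now a matter of inspecting the explicit restrictions \eqref{eq:M'1}--\eqref{eq:M'7} with $z=x+iy$, under the substitution $z\mapsto z^*$, i.e.\ $x\mapsto x$, $y\mapsto -y$.

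The key point is then purely a matter of parity in the variable $y$. Each of $M_1',\dots,M_6'$ depends on $z$ only through $|z|^2=x^2+y^2$ and $x^2-y^2$ (for $M_3'$ via $M_1'$ together with the term $abcd(x^2-y^2)$, for $M_5'$ via $|z|^4$ and $x^2-y^2$, for $M_4'$ and $M_6'$ via even powers of $|z|$), hence is an even function of $y$; therefore $M_i(\psi^*)=M_i(\psi)$ for $i=1,\dots,6$ with no condition imposed. By contrast $M_7'=abcdxy\,\Phi(a,b,c,d,z)$ with $\Phi$ given by \eqref{eq:PolFi} again even in $y$, so $M_7'$ is odd in $y$ and $M_7(\psi^*)=-M_7(\psi)$.

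Combining these, the equality $M_i(\psi^*)=M_i(\psi)$ holds automatically for $i\le 6$, while for $i=7$ it reads $-M_7(\psi)=M_7(\psi)$, i.e.\ $M_7(\psi)=0$. Hence $\ket{\psi}$ is quasi-real if and only if $M_7(\psi)=0$, which is the assertion. I do not anticipate a serious obstacle; the only step needing a little care is the reduction, namely verifying that complex conjugation intertwines with the $\Un(6)$-action after replacing $g$ by $g^*$, so that conjugating the canonical form of $\ket{\psi}$ again lands in $W_6$ at the point $(a,b,c,d,z^*)$.
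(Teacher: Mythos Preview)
Your proof is correct and follows essentially the same line as the paper's: reduce via Corollary~\ref{cr:Equiv} to comparing the seven invariants on $\ket{\psi}$ and $\ket{\psi^*}$, pass to a representative in $W_6$ with real $a,b,c,d$, and then read off from the explicit formulae \eqref{eq:M'1}--\eqref{eq:M'7} that $M'_1,\dots,M'_6$ are even in $y$ while $M'_7$ is odd. The only cosmetic difference is that the paper invokes Lemma~\ref{le:KanForm} to put $\ket{\psi}$ itself into $\Delta$ (so that $\ket{\psi^*}$ is literally the point $(a,b,c,d,z^*)$), whereas you use Lemma~\ref{lm:5basis} and explicitly track the conjugation $g\mapsto g^*$; both are fine.
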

\begin{proof}
Clearly we may assume that $\|\psi\|=1$. Since $M_7$ is a unitary
invariant, we may replace $\ket{\psi}$ with any state in the orbit
$\cO_\psi$. Hence, by Lemma \ref{le:KanForm} we may assume that
$\ket{\psi}\in\Delta$ and that it is given by
Eq. \eqref{eq:5dimVektor}. Thus $\ket{\psi}$ corresponds to the point
$(a,b,c,d,z)\in\Delta$ and $\ket{\psi^*}$ corresponds to the point
$(a,b,c,d,z^*)\in\Delta$. It is obvious from Eqs.
\eqref{eq:M'1}--\eqref{eq:M'7} that $M_i(\psi)=M_i(\psi^*)$ for
$i=1,\ldots,6$ and $M_7(\psi^*)=-M_7(\psi)$. Now the assertion follows
from Corollary \ref{cr:Equivalence}.
\end{proof}

We shall need the following simple lemma.
\begin{lemma}\label{le:DetDx}
Let $p=(a,b,c,d,z)\in\D$, $c>0$, and let $D_a$, $D_b$, $D_c$ be
defined as in Sec. \ref{sec:RDM}. Then we have $D_c\ge D_b\ge
D_a>0$. Moreover, $D_a=D_b$ if and only if $a=b$, and $D_b=D_c$ if and
only if $b=c$. Consequently, the determinants $D_a$, $D_b$, $D_c$ are
uniquely determined by the invariants $M_2$, $M_4$, $M_6$.
\end{lemma}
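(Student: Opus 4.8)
The plan is to work directly from the explicit formulas \eqref{eq:Da}--\eqref{eq:Dc}. First I would form the pairwise differences, in which the common term $d^2|z|^2$ cancels, and expand: one finds
$D_b-D_a=(a^2-b^2)(d^2-c^2)$ and $D_c-D_b=(b^2-c^2)(d^2-a^2)$.
Since $p\in\D$, Definition \ref{def:Delta} gives $a\ge b\ge c\ge0$, so the first factor in each product is nonnegative, and $c>0$ forces $a\ge b\ge c>0$.

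Next I would control the second factor using the defining inequalities of $\D$. By Proposition \ref{pp:projection}(ii) (equivalently \eqref{eq:Nej(ii)}) and $d>0$, the inequality $d(d^2-s_1)-2abc\ge0$ yields $d^2\ge s_1=a^2+b^2+c^2$; since $b,c>0$ this gives $d^2>a^2$ and $d^2>c^2$ strictly. Hence $D_b-D_a\ge0$, with equality if and only if $a^2=b^2$, i.e. $a=b$; and $D_c-D_b\ge0$, with equality if and only if $b^2=c^2$, i.e. $b=c$. Positivity of $D_a$ is immediate: $D_a=(b^2+c^2)(a^2+d^2)+d^2|z|^2>0$ because $b^2+c^2>0$ and $a^2+d^2>0$. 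This establishes $D_c\ge D_b\ge D_a>0$ together with the stated equality conditions.

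For the final assertion I would recall from \eqref{eq:D-2}--\eqref{eq:D-6} that $M_2$, $M_4$, $M_6$ restrict on $W_6$ to the elementary symmetric functions $D_a+D_b+D_c$, $D_aD_b+D_bD_c+D_cD_a$, $D_aD_bD_c$. Therefore $D_a,D_b,D_c$ are precisely the three roots, counted with multiplicity, of the cubic $\lambda^3-M_2\lambda^2+M_4\lambda-M_6$; and since the previous step shows they are linearly ordered as $D_a\le D_b\le D_c$, each of them is individually determined by the values of $M_2,M_4,M_6$ (as the smallest, middle, and largest root, respectively). I do not expect a real obstacle here: the only point requiring attention is the verification that points of $\D$ satisfy $d^2\ge a^2+b^2+c^2$, which is exactly inequality \eqref{eq:Nej(ii)} combined with $d>0$, and the observation that $c>0$ makes the inequalities $d^2\ge a^2$ and $d^2\ge c^2$ strict, so that the equality cases in the two differences are governed solely by the $a,b,c$ factors.
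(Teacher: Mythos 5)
Your proof is correct and follows essentially the same route as the paper: both rest on the factorizations $D_b-D_a=(a^2-b^2)(d^2-c^2)$ and $D_c-D_b=(b^2-c^2)(d^2-a^2)$, the strict inequalities $d>a\ge b\ge c>0$ forced by $c>0$ and \eqref{eq:Nej(ii)}, and the identification of $M_2,M_4,M_6$ with the elementary symmetric functions of the $D_x$ via \eqref{eq:D-2}--\eqref{eq:D-6}. Your write-up merely makes explicit the derivation of $d^2\ge s_1$ and the ordering-of-roots argument that the paper leaves implicit.
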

\begin{proof}
As $c>0$, we have $d>a\ge b\ge c>0$. The assertions of the lemma follow
from the formulae $D_c-D_b=(d^2-a^2)(b^2-c^2)$ and
$D_b-D_a=(d^2-c^2)(a^2-b^2)$, together with Eqs. \eqref{eq:D-2}--\eqref{eq:D-6}.
\end{proof}

We can now extend the above mentioned result of Acin et
al. \cite{ajt01} to the fermionic states in $\wedge^3(V)$.
\begin{proposition} \label{pp:quasireal=extension}
Any quasi-real state $\ket{\psi}\in\wedge^3(V)$ is real.
\end{proposition}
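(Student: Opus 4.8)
The plan is to bring $\ket\psi$ to its canonical form, dispose of the case $c=0$ by a direct argument, and in the remaining case $c>0$ transfer the problem to three qubits and invoke the theorem of Acin et al.

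First I would normalise $\ket\psi$, note that by Proposition~\ref{pp:quasi-real} quasi-reality amounts to $M_7(\psi)=0$, and use Lemma~\ref{le:KanForm} to assume $\ket\psi$ is in canonical form, so that it corresponds to a point $p=(a,b,c,d,z)\in\D$, $z=x+iy$, with $a\ge b\ge c\ge 0$, $x\ge 0$, and $abcd\,xy\,\Phi(p)=0$ by Eq.~\eqref{eq:M'7}. Suppose $c=0$. If also $z=0$ then $\ket\psi=ae_{235}+be_{145}+de_{246}$ already has real components. If $z\ne 0$ then $c=0$ rules out the alternatives (i)--(iv) of the orbit analysis preceding Theorem~\ref{thm:KanForma}, so $\cO_\psi\cap\D$ is the semicircle of alternative (v), namely $\{(a,b,0,d,re^{it}):|t|\le\pi/2\}$ with $r=\sqrt{1-a^2-b^2-d^2}>0$, which contains the state with real coordinates $(a,b,0,d,r)$. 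In either subcase $\ket\psi$ is LU-equivalent to a state with real components, hence real.

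Now let $c>0$. Under the isometry~\eqref{eq:isometry}, $\ket\psi$ corresponds to $\ket\phi=a\ket{100}+b\ket{010}+c\ket{001}+d\ket{111}+z\ket{000}$ and $\ket{\psi^*}$ to $\ket{\phi^*}$; since $\ket\psi$ is quasi-real and $\ket\psi,\ket{\psi^*}\in W$, Theorem~\ref{th:equiv} yields $u\in\Un(2)\times\Un(2)\times\Un(2)$ and $\tau\in S_3$ with $\ket{\phi^*}=u\,(P_\tau\ket\phi)$, where $P_\tau$ permutes the three qubits. The one-qubit reduced density matrices are carried to unitarily equivalent matrices by local unitaries, have spectra invariant under complex conjugation (being Hermitian), and are permuted by $\tau$; comparing the determinants $D_a,D_b,D_c$ of the one-qubit RDMs of $\ket\phi$ with those of $\ket{\phi^*}$ therefore forces $\tau$ to fix the triple $(D_a,D_b,D_c)$. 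By Lemma~\ref{le:DetDx} (applicable since $p\in\D$ and $c>0$), $\tau$ can only interchange indices whose $D$-values coincide, i.e.\ only within $\{a,b\}$ when $a=b$ and within $\{b,c\}$ when $b=c$. In every such case $P_\tau$ fixes $\ket{000}$ and $\ket{111}$ and merely permutes the weight-one vectors $\ket{100},\ket{010},\ket{001}$, interchanging only those with equal coefficients, so $P_\tau\ket\phi=\ket\phi$ and $\ket{\phi^*}=u\ket\phi$. Thus $\ket\phi$ is quasi-real for the group $\Un(2)\times\Un(2)\times\Un(2)$, and by the theorem of Acin et al.\ \cite{ajt01} recalled at the start of this section there is $v\in\Un(2)\times\Un(2)\times\Un(2)$ with $v\ket\phi$ having all real components. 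Regarding $v$ as an element of $\Un(6)$ (it preserves $W$), $v\ket\psi\in W$ carries on the basis vectors $e_{i+1,j+3,k+5}$ the real components of $v\ket\phi$ and has all other components zero, so $v\ket\psi$ is real and hence so is $\ket\psi$.

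The crux of the argument, and the step I expect to need the most care, is controlling the permutation $\tau$: $G$-quasi-reality of $\ket\phi$ does not by itself give $\Un(2)\times\Un(2)\times\Un(2)$-quasi-reality, and the result of \cite{ajt01} (formulated for $\Un(2)\times\Un(2)\times\Un(2)$, without qubit permutations) only becomes usable after the reduced-density-matrix comparison, resting on Lemma~\ref{le:DetDx}, shows that $\tau$ acts trivially on the canonical representative $\ket\phi$.
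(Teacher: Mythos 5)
Your proposal is correct and follows essentially the same route as the paper: reduce to the canonical region $\D$, dispose of $c=0$ directly, and for $c>0$ use Theorem~\ref{th:equiv} to write the conjugation as $g=u P_\tau\in G$, then invoke Lemma~\ref{le:DetDx} (the paper phrases the comparison via the invariants $M_2,M_4,M_6$ rather than the RDM spectra, but this is the same computation) to force the permutation part to act trivially, and finish with the result of Acin et al.\ \cite{ajt01}. Your treatment of the permutation $\tau$ is exactly the point the paper's proof also hinges on, so there is nothing to add.
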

\begin{proof}
By Lemma \ref{le:KanForm}, we may assume that $\ket{\psi}\in\D$.
Let $p=(a,b,c,d,z)$ be the point corresponding to $\ket{\psi}$.
If $c=0$, then we may assume that $z=x\ge0$ and so $\ket{\psi}$
is real. Thus we may assume that $c>0$.

As $\ket{\psi}$ is quasi-real, we have $\ket{\psi^*}\in\cO_\psi$. By
Theorem \ref{th:equiv}, the intersection $W\cap\cO_\psi$ is a single
$G$-orbit. Since both $\ket{\psi}$ and $\ket{\psi^*}$ belong to this
intersection, there exists $g\in G$ such that
$\ket{\psi^*}=g\cdot\ket{\psi}$. By using that $g=U\s$ for some
$U\in\Un(2)\times\Un(2)\times\Un(2)$ and some $\s\in S_3$, we obtain
that $\s^{-1}U\s\cdot\ket{\psi}=\s^{-1}\cdot\ket{\psi^*}$.  Note that
$\s^{-1}U\s\in\Un(2)\times\Un(2)\times\Un(2)$.  Since the invariants
$M_2$, $M_4$ ,$M_6$ take the same values at $\ket{\psi}$ and
$\ket{\psi^*}$, Lemma \ref{le:DetDx} implies that
$\s^{-1}\cdot\ket{\psi^*}=\ket{\psi^*}$. We can now apply the above
mentioned result of Ref. \cite{ajt01} to conclude the proof.
\end{proof}

\section{A new canonical form for pure three-qubit states}
\label{sec:threequbit}

A canonical form for pure three-qubit states was constructed first by
Acin et al. in \cite{ajt01}. Their Fig. 1 shows three inequivalent set
of states (I), (II), and (III) that can be used to construct a
canonical form.  Their canonical form is constructed by using the set
(III). They also show that case (II) can be reduced to case
(III). However, they were not able to handle the symmetric
decomposition (I).  By using our canonical form for three fermions, we
can obtain a solution in that case.

Recall that $\S$ denotes the unit sphere of $W_6$ defined by the
equation $a^2+b^2+c^2+d^2+|z|^2=1$.  We begin by introducing the region
\begin{eqnarray}
 \label{eq:Theta}
\T=\bigcup_{\s\in S_3} \s\cdot\D\subseteq\S,
\end{eqnarray}
which is obviously invariant under the action of $S_3$ (for the latter see Sec. \ref{sec:RDM}).

Since the inequalities \eqref{eq:Nej(ii)'}, \eqref{eq:StrongIneq'},
and \eqref{eq:Nej(iv)'} are not affected by the permutations of the
variables $a,b,c$, it follows easily that $\T$ is the region of $\S$
defined by these three inequalities and the linear inequalities
$a,b,c,x\ge0$ and $d>0$.  One can also show that the relative interior
$\T^0$ of $\T$ consists of the points of $\T$ at which all the
inequalities defining $\T$ are strict. It follows from Theorem
\ref{thm:NejDelta} that if at some point $p=(a,b,c,d,z)\in\T$ the
equality holds in \eqref{eq:Nej(ii)'} or \eqref{eq:StrongIneq'}, then
we must have $abcx=0$. Consequently, the relative boundary
$\partial\T$ of $\T$ consists of the points $p=(a,b,c,d,z)\in\T$ for
which $abcx\Phi(p)=0$.

Let us give yet another description of the region $\T$. Recall that
$\S_i$ denotes the unit sphere in the subspace $V_i\subseteq V$.
\begin{proposition}\label{pp:Theta}
The set $\T$ consists of all points $p=(a,b,c,d,z)\in\S$,
$z=x+iy$, satisfying the inequalities $a,b,c,x\ge0$ and such that
\begin{eqnarray} \label{eq:Theta=d}
d=\max_{\a,\b,\g} \abs{\braket{\a,\b,\g}{\ps}}, \quad
(\a,\b,\g)\in\S_1\times\S_2\times\S_3.
\end{eqnarray}
\end{proposition}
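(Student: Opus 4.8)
The plan is to prove the set equality $\T=R$, where $R$ denotes the subset of $\S$ on the right-hand side of the proposition, i.e.\ the set of all $p=(a,b,c,d,z)\in\S$, $z=x+iy$, with $a,b,c,x\ge0$ and $d=\max_{(\a,\b,\g)\in\S_1\times\S_2\times\S_3}\abs{\braket{\a,\b,\g}{\ps}}$. The whole argument rests on two facts already available in the text: the explicit coordinate description of the $S_3$-action on $W_6$ in Eq.~\eqref{eq:sigmatau}, and the $\Un(6)$-invariance of $\mu$. First I would record two preliminary observations. Since $\S$ is the unit sphere of $W_6\subseteq W$, the state $\ket{\ps}$ attached to a point of $\S$ lies in $W$, so by Definition~\ref{def:mu} and Lemma~\ref{le:maximum} the maximum appearing in \eqref{eq:Theta=d} equals $\mu'(\ps)=\mu(\ps)$; hence the defining condition of $R$ is \emph{exactly} condition~(ii) of Definition~\ref{def:Delta}. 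Secondly, Eq.~\eqref{eq:sigmatau} says precisely that acting on a point of $W_6$ by $\s\in S_3\subseteq\Un(6)$ permutes the coordinates $a,b,c$ (and, as $\s$ ranges over $S_3$, through the full symmetric group on these three variables) while fixing $d$, $x$, $y$; moreover the state corresponding to $\s\cdot p$ is the image of the state corresponding to $p$ under the $\we^3(V)$-action of $\s$. Combined with the $\Un(6)$-invariance of $\mu$ this gives $\mu(\s\cdot p)=\mu(p)$ for all $\s\in S_3$ and $p\in W_6$, and shows that $S_3$ preserves both $\S$ and the linear conditions $a,b,c,x\ge0$.

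For the inclusion $\T\subseteq R$: take $q\in\T$, say $q=\s\cdot p$ with $\s\in S_3$ and $p=(a,b,c,d,z)\in\D$. By \eqref{eq:sigmatau} the point $q$ has coordinates $(a',b',c',d,z)$ with $\{a',b',c'\}$ a permutation of $\{a,b,c\}$, so $a',b',c'\ge0$ and $x\ge0$ are inherited from $p\in\D$, and $\mu(q)=\mu(p)=d$ by the invariance of $\mu$ and Definition~\ref{def:Delta}(ii); hence $q\in R$. For the reverse inclusion $R\subseteq\T$: take $q=(a,b,c,d,z)\in R$ and choose $\s\in S_3$ such that $\s^{-1}$ sorts the triple $(a,b,c)$ into non-increasing order (possible precisely because the induced action on these three coordinates is the full symmetric group). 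Then $\s^{-1}\cdot q$ has coordinates $(\tilde a,\tilde b,\tilde c,d,z)$ with $\tilde a\ge\tilde b\ge\tilde c\ge0$, still $x\ge0$, and $\mu(\s^{-1}\cdot q)=\mu(q)=d$ since $q\in R$; so $\s^{-1}\cdot q\in\D$ by Definition~\ref{def:Delta}, and therefore $q=\s\cdot(\s^{-1}\cdot q)\in\s\cdot\D\subseteq\T$. Combining the two inclusions yields $\T=R$, which is the assertion.

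I do not expect a genuine obstacle here: once the two preliminary observations are in place, the proposition is a routine bookkeeping consequence of Definition~\ref{def:Delta} and the coordinate form of the $S_3$-action. The only point that deserves a careful sentence is that permuting the coordinates $a,b,c$ of a point of $W_6$ really does amount to acting by an element of $S_3\subseteq\Un(6)$ on the corresponding fermionic state, so that the $\Un(6)$-invariance of $\mu$ may legitimately be invoked; this is exactly what Eq.~\eqref{eq:sigmatau} was arranged to encode. A second, more clerical point is to state explicitly that $\max_{\a,\b,\g}\abs{\braket{\a,\b,\g}{\ps}}$ in \eqref{eq:Theta=d} is the quantity $\mu'(\ps)$ of Definition~\ref{def:mu}, and then to replace it by $\mu(\ps)$ via Lemma~\ref{le:maximum}, so that the condition in the proposition coincides verbatim with condition~(ii) of Definition~\ref{def:Delta}.
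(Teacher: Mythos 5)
Your proof is correct and follows essentially the same route as the paper: identify the maximum in \eqref{eq:Theta=d} with $\mu(\ps)$ via Lemma~\ref{le:maximum}, and combine the coordinate form \eqref{eq:sigmatau} of the $S_3$-action with the $\Un(6)$-invariance of $\mu$ and Definition~\ref{def:Delta}. You are in fact slightly more thorough than the published argument, which only spells out the inclusion $\T\subseteq R$ and leaves the (routine) reverse inclusion, which you carry out by sorting $(a,b,c)$ with a suitable $\s\in S_3$, to the reader.
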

\begin{proof}
If $p=(a,b,c,d,z)\in\T$, then $p\in\s\cdot\D$ for some $\s\in S_3$. As
$S_3$ only permutes the coordinates $a,b,c$ and leaves the coordinates
$d$ and $z$ unchanged, we have $a,b,c,x\ge0$. Now \eqref{eq:Theta=d}
follows form Lemma \ref{le:maximum} and Definition \ref{def:Delta}.
\end{proof}

Evidently, two LU-equivalent three-qubit states in $\T$ have the same
coefficient $d$.

To prove the next proposition, we shall use the invariants
$Q_i$, $1\le i\le7$, of three qubits which are described in
Sec. \ref{sec:Inv3Qub}. At a point $p=(a,b,c,d,z)\in W_6$,
we have
\begin{eqnarray}
Q_2-Q_3 &=& 2(a^2-b^2)(d^2-c^2), \label{eq:Q2-3} \\
Q_3-Q_4 &=& 2(b^2-c^2)(d^2-a^2), \label{eq:Q3-4} \\
Q_4-Q_2 &=& 2(c^2-a^2)(d^2-b^2). \label{eq:Q4-2}
\end{eqnarray}
The following proposition gives the canonical form for pure
three-qubit states that was mentioned in \cite{ajt01} but left open.
\begin{proposition}\label{pp:ThetaKAN}
Any normalized pure three-qubit state $\ket{\phi}$ is
LU-equivalent to a state
\begin{eqnarray}
\ket{\psi} &=& a\ket{100}+b\ket{010}+c\ket{001}+d\ket{111}
+z\ket{000},
\end{eqnarray}
where $p:=(a,b,c,d,z)\in\T$, $z=x+iy$. Such state is unique if
$abcx\Phi(p)>0$, i.e., if $p\in\T^0$. To guarantee the uniqueness of
$\ket{\psi}$ when $abcx\Phi(p)=0$, i.e., when $p\in\partial\T$, we
require that (i) $y\ge0$ and (ii) $y=0$ if $abc=0$.
\end{proposition}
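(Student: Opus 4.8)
The plan is to derive existence from the fermion--qubit correspondence of Theorem~\ref{th:equiv} and to prove uniqueness by recovering the coordinates $a,b,c,d$ and $z$ one at a time from the local unitary invariants. For existence I would identify $\ket{\phi}$ with its image in the SOV subspace $W\subseteq\we^3(V)$ via the isometry \eqref{eq:embedding}. By Lemma~\ref{le:KanForm} the $\Un(6)$--orbit of $\ket{\phi}$, viewed as a fermionic state, meets the canonical region $\D$, so there is $g\in G$ with $g\cdot\ket{\phi}\in\D$. Writing $g=U\sigma$ with $U\in\Un(2)\times\Un(2)\times\Un(2)$ and $\sigma\in S_3$, and using that $S_3$ normalizes $\Un(2)\times\Un(2)\times\Un(2)$ in $G$, we get $\sigma^{-1}\cdot(g\cdot\ket{\phi})=(\sigma^{-1}U\sigma)\cdot\ket{\phi}$, so $\ket{\phi}$ is LU--equivalent, as a three-qubit state, to the state $\sigma^{-1}\cdot(g\cdot\ket{\phi})$, which lies in $\sigma^{-1}\cdot\D\subseteq\T$. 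This yields a representative $\ket{\psi}$ with $p=(a,b,c,d,z)\in\T$ and automatically $x\ge0$; the tie-breaking conventions are imposed at the very end, once we know which states of $\T$ are LU--equivalent.

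For uniqueness, suppose $p,q\in\T$ give LU--equivalent three-qubit states $\ket{\psi},\ket{\psi'}$. First, $d$ is an LU--invariant of $\ket{\psi}$: by Proposition~\ref{pp:Theta} it is the maximal overlap of $\ket{\psi}$ with a three-qubit product state, which is unchanged under LU transformations, so $d(p)=d(q)$. Since LU--equivalence of three-qubit states implies $G$--equivalence, Theorem~\ref{th:equiv} shows $p$ and $q$ are LU--equivalent as fermionic states, hence $M_i(p)=M_i(q)$ for $i=1,\dots,7$. The argument from the proof of Proposition~\ref{pp:Delta} now applies to any point of $\T$ once it is moved into $\D$ by a permutation (which alters neither $d$, nor $|z|$, nor the $M_i$): $|z|^2$ is the largest real root of an explicit cubic with coefficients polynomial in $d$ and $M_1,\dots,M_6$, and then Eqs.~\eqref{eq:s1f}--\eqref{eq:s3f} recover $s_1,s_2,s_3$, so $|z(p)|=|z(q)|$ and $\{a^2,b^2,c^2\}$ agrees as a multiset. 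The one genuinely new step is pinning down the ordering of $(a,b,c)$: the defining inequality \eqref{eq:Nej(ii)'} of $\T$ gives $d^2\ge s_1=a^2+b^2+c^2$, hence $d^2-a^2$, $d^2-b^2$, $d^2-c^2\ge0$, and then Eqs.~\eqref{eq:Q2-3}--\eqref{eq:Q4-2} show that the sign of $Q_2-Q_3$ equals the sign of $a-b$ whenever $d^2>c^2$, and similarly for $Q_3-Q_4$ and $Q_4-Q_2$; since $Q_2,Q_3,Q_4$ are LU--invariants that agree at $p$ and $q$, this forces the triple $(a,b,c)$ to coincide, with a short direct check, from Eqs.~\eqref{eq:Q2'}--\eqref{eq:Q4'}, of the degenerate configurations in which one of $d^2-a^2$, $d^2-b^2$, $d^2-c^2$ vanishes and therefore two of $a,b,c$ are zero. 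Finally, $Q_6$ (Eq.~\eqref{eq:Q6'}) is quadratic in $x$ with nonnegative leading coefficient $16abcd$, so with $a,b,c,d,|z|$ known it determines $x^2$, hence $x$ since $x\ge0$, and $y^2=|z|^2-x^2$; and $Q_7=8abcdxy\,\Phi(p)$ (Eq.~\eqref{eq:Q7'}), together with $\Phi(p)\ge0$ on $\T$ (since $\Phi$ is symmetric in $a,b,c$ and is nonnegative on $\D$ by Theorem~\ref{thm:NejDelta}), fixes the sign of $y$ provided $abcx\Phi(p)>0$. This establishes $p=q$ on $\T^0$.

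When $abcx\Phi(p)=0$, i.e.\ $p\in\partial\T$, the coordinates $d$, $|z|$ and $(a,b,c)$ are still determined as above, and two cases remain. If $abc=0$ then $s_3=a^2b^2c^2=0$ is an LU--invariant (a polynomial in $d$, $|z|$ and the $M_i$ by \eqref{eq:s3f}), so every point of $\T$ in the orbit has $abc=0$ and the orbit meets $\T$ in a single semicircle $\{(a,b,c,d,re^{it}):|t|\le\pi/2\}$; imposing $y=0$, which is condition~(ii), singles out one point. If $abc\ne0$ then $abcx\Phi(p)=0$ forces $x=0$ or $\Phi(p)=0$, hence $Q_7(p)=0$; comparing the three-qubit invariants shows that $\ket{\psi}$ and its complex conjugate, which corresponds to the point $(a,b,c,d,z^*)\in\T$, have equal $Q_1,\dots,Q_6$ while their $Q_7$ values are negatives of each other, so by $Q_7(p)=0$ they agree on all invariants and are LU--equivalent (LU--invariants separate LU--orbits, \cite{OV1990}; this mirrors Proposition~\ref{pp:quasi-real}), and imposing $y\ge0$, which is condition~(i), singles out one point. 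A representative produced by the existence step is brought into this form by replacing $z$ with $z^*$ when $abc\ne0$, or by rotating to $z=|z|$ along the semicircle when $abc=0$. This completes the argument.

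The only part of the proof going beyond Theorem~\ref{thm:KanForma} and Proposition~\ref{pp:Delta} is the recovery of the ordering of $(a,b,c)$, and I expect the main obstacle to be precisely there; the key point making it work is that the inequality $d^2\ge s_1$, valid throughout $\T$, makes the single-qubit purities $Q_2,Q_3,Q_4$ monotone in $a,b,c$, so that these LU--invariants break the residual $S_3$ ambiguity. The only other delicate issue is the careful treatment of the degenerate boundary configurations (equal values among $a,b,c$, $|z|=d$, several vanishing coordinates), which is routine but must be carried out with the explicit formulas \eqref{eq:Q2'}--\eqref{eq:Q7'} in hand.
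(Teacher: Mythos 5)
Your proposal is correct and follows essentially the same route as the paper: existence via Lemma \ref{le:KanForm} together with the decomposition $g=U\sigma$ in $G$, and uniqueness by comparing invariants, with the residual $S_3$ ambiguity in $(a,b,c)$ broken exactly as in the paper by Eqs.~\eqref{eq:Q2-3}--\eqref{eq:Q4-2} and the inequality $d^2\ge s_1$, and the boundary cases ($abc=0$ semicircle, complex conjugation when $x\Phi(p)=0$) handled identically. The only cosmetic difference is that you recover $x$ and the sign of $y$ directly from $Q_6$ and $Q_7$ where the paper cites Proposition \ref{pp:Delta}; this changes nothing of substance.
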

\begin{proof}
First we show that $\ket{\phi}$ is LU-equivalent to some state in
$\T$.  By Lemma \ref{le:KanForm} we can transform $\ket{\phi}$ into
$\D$ by the group $G$, i.e., there exist $\s\in S_3$ and
$g\in\Un(2)\times\Un(2)\times\Un(2)$ such that $\ket{\chi}:=\s
g\cdot\ket{\phi}\in\D$. Consequently, $\ket{\psi}:=g\cdot\ket{\phi}\in
\s^{-1}\cdot\D\subseteq\T$.

In order to prove the uniqueness assertions, suppose that we have two
distinct points $p=(a,b,c,d,z)$ and
$q=(\tilde{a},\tilde{b},\tilde{c},\tilde{d},\tilde{z})$ in $\T$, which
are equivalent under the local unitary group
$\Un(2)\times\Un(2)\times\Un(2)$ of three qubits. By definition of
$\T$, there exist $\s,\s'\in S_3$ such that $\s\cdot p,\tilde{\s}\cdot
q\in\D$. Note that the points $\s\cdot p$ and $\tilde{\s}\cdot q$
belong to the same orbit of $G$. Since $S_3$ only permutes the
coordinates $a,b,c$ and leaves the coordinates $d$ and $z$ unchanged,
Proposition \ref{pp:Delta} implies that $\tilde{d}=d$ and that
$(\tilde{a},\tilde{b},\tilde{c})$ is a permutation of $(a,b,c)$. Since
$Q_i(p)=Q_i(q)$ for all $i$, it follows from
Eqs. \eqref{eq:Q2-3}--\eqref{eq:Q4-2} that $\tilde{a}=a$,
$\tilde{b}=b$, $\tilde{c}=c$ and consequently $|\tilde{z}|=|z|$. As
usual, we write $z=x+iy$ and $\tilde{z}=\tilde{x}+i\tilde{y}$.  If
$abc>0$, then Proposition \ref{pp:Delta} also implies that $\tilde{x}=x$, and
so $\tilde{y}=-y\ne0$ because $p\ne q$. If $abc=0$, then the invariants
$Q_i(p)$ depend only on $a,b,c,d$ and $|z|^2$, and so all points
$(a,b,c,d,re^{it})$, where $r=\sqrt{1-s_1-d^2}$ and $|t|\le\pi/2$, are
LU-equivalent to each other. All uniqueness assertions easily
follow.
\end{proof}

\section*{Acknowledgments}
We thank Jianxin Chen, Zhengfeng Ji, and Mary Beth Ruskai for helpful
discussion on this paper.  LC was mainly supported by MITACS and
NSERC. The CQT is funded by the Singapore MoE and the NRF as part of
the Research Centres of Excellence programme.  DD was supported in
part by an NSERC Discovery Grant. BZ is supported by NSERC and
CIFAR. The computations involving invariants were performed by using
Maple${}^{\text{\sc TM}}$ \cite{Maple} and Magma \cite{Magma}.

\appendix
\section{The polynomial $f(t)$} \label{dodatak}

The polynomial $f(t)$ defined in Sec. \ref{subsec:algorithm} can be written as
\begin{eqnarray}
f(t)=\sum_{k=0}^8 c_kt^k,
\end{eqnarray}
where
\begin{small}
\begin{align*}
c_0 ={}& 972(-64M_2M_4M_5^2M_6+16M_2^2M_4M_5^3+16M_2^2M_4^2M_5^2
+16M_2^3M_5^2M_6-8M_2^4M_4M_5^2 \displaybreak[0]\\
& -16M_2M_5^3M_6-16M_4^2M_5^3-8M_4M_5^4+64M_5^2M_6^2+3M_2^2M_5^4
-3M_2^4M_5^3+M_2^6M_5^2-M_5^5), \displaybreak[0]\\[0ex plus 2ex]
c_1 ={}& 1296(3M_1M_2^5M_5^2-9M_2^4M_3M_5^2+36M_1M_5^3M_6
-32M_3M_4M_5^3-64M_1M_2M_5M_6^2 \displaybreak[0]\\
& -M_1M_2^7M_5+4M_1M_2M_4M_5^3+128M_3M_5M_6^2
+48M_2^2M_3M_4M_5^2+32M_2^2M_3M_4^2M_5 \displaybreak[0]\\
& +48M_1M_4M_5^2M_6-16M_2^4M_3M_4M_5-48M_2M_3M_5^2M_6
+12M_1M_2^2M_5^2M_6 \displaybreak[0]\\
& +32M_2^3M_3M_5M_6-12M_1M_2^3M_4M_5^2+8M_1M_2^5M_4M_5
-16M_1M_2^4M_5M_6 \displaybreak[0]\\
& -16M_1M_2^3M_4^2M_5+64M_1M_2^2M_4M_5M_6-128M_2M_3M_4M_5M_6
+M_1M_2M_5^4 \displaybreak[0]\\
& +12M_2^2M_3M_5^3-3M_1M_2^3M_5^3-48M_3M_4^2M_5^2
+2M_2^6M_3M_5-5M_3M_5^4), \displaybreak[0]\\[0ex plus 2ex]
c_2 ={}& 108(256M_2^2M_3^2M_4^2+2736M_2M_4^2M_5^2
+252M_2M_4M_5^3-55296M_2M_4M_6^2  \displaybreak[0]\\
& +2124M_2^5M_4M_5-128M_2^4M_3^2M_4-6192M_2^3M_4^2M_5
+256M_2^3M_3^2M_6 \displaybreak[0]\\
& -13824M_2^4M_4M_6-10368M_4^2M_5M_6-768M_3^2M_4M_5^2
-4608M_2^3M_4^3-864M_2^7M_4 \displaybreak[0]\\
& -225M_2^7M_5+16M_2^6M_3^2+207M_2^5M_5^2+1024M_3^2M_6^2
-27M_2M_5^4+3456M_2^5M_4^2 \displaybreak[0]\\
& +1728M_2^6M_6-18M_1^2M_5^4-1224M_5^3M_6+13824M_2^3M_6^2
-27M_2^3M_5^3 \displaybreak[0]\\
& +4M_1^2M_2^8-160M_3^2M_5^3+1536M_1M_3M_4M_5M_6
-1920M_1^2M_2M_4M_5M_6 \displaybreak[0]\\
& +192M_1M_2M_3M_4M_5^2+384M_1M_2^2M_3M_5M_6
+1024M_1M_2^2M_3M_4M_6 \displaybreak[0]\\
& -384M_1M_2^3M_3M_4M_5+72M_2^9+36864M_6^3+768M_2^2M_3^2M_4M_5
-768M_2M_3^2M_5M_6 \displaybreak[0]\\
& +1728M_1M_3M_5^2M_6-16M_1M_2^7M_3-256M_1M_2^3M_3M_4^2
-144M_1M_2^3M_3M_5^2 \displaybreak[0]\\
& -1440M_1^2M_2M_5^2M_6-256M_1M_2^4M_3M_6+128M_1M_2^5M_3M_4
+480M_1^2M_2^2M_4^2M_5 \displaybreak[0]\\
& +96M_1M_2^5M_3M_5+14400M_2^2M_4M_5M_6
-1024M_1M_2M_3M_6^2+64M_1M_2M_3M_5^3 \displaybreak[0]\\
& +528M_1^2M_2^2M_4M_5^2+1872M_2^2M_5^2M_6-768M_3^2M_4^2M_5
+64M_1^2M_2^4M_4^2 \displaybreak[0]\\
& -1512M_2^3M_4M_5^2-288M_1^2M_4M_5^3+256M_1^2M_2^2M_6^2
+27648M_2^2M_4^2M_6 \displaybreak[0]\\
& +5184M_2M_4^3M_5-144M_2^4M_3^2M_5-9216M_2M_5M_6^2
-6336M_4M_5^2M_6 \displaybreak[0]\\
& +98M_1^2M_2^2M_5^3-2952M_2^4M_5M_6+6M_1^2M_2^6M_5
-32M_1^2M_2^6M_4+1152M_1^2M_5M_6^2 \displaybreak[0]\\
&  +64M_1^2M_2^5M_6-288M_1^2M_4^2M_5^2-90M_1^2M_2^4M_5^2
+288M_1^2M_2^3M_5M_6-144M_1^2M_2^4M_4M_5 \displaybreak[0]\\
& -256M_1^2M_2^3M_4M_6-1024M_2M_3^2M_4M_6+288M_2^2M_3^2M_5^2),\displaybreak[0]\\[0ex plus 2ex]
c_3 ={}& 36(-1024M_2M_3^3M_6+3528M_1M_2^6M_4+20736M_2M_3M_4^3
-81M_1M_2^4M_5^2 \displaybreak[0]\\
& -14688M_3M_5^2M_6-324M_2^3M_3M_5^2-2048M_3^3M_4M_5
+1404M_4M_5^3M_1 \displaybreak[0]\\
& -832M_1^3M_2^3M_4^2-172M_1^3M_2^3M_5^2-5616M_1M_4^2M_5^2
+768M_2^2M_3^3M_5-640M_1^3M_2^4M_6 \displaybreak[0]\\
& +82944M_1M_5M_6^2+8496M_2^5M_3M_4
+1656M_2^5M_3M_5+352M_1^3M_2^5M_4 \displaybreak[0]\\
& -41472M_3M_4^2M_6-24768M_2^3M_3M_4^2+4320M_1^3M_5^2M_6
+165888M_1M_4M_6^2 \displaybreak[0]\\
& -23040M_1M_2^2M_6^2-4464M_1M_2^5M_6-288M_1^2M_3M_5^3
+24M_1^2M_2^6M_3 \displaybreak[0]\\
& +31104M_1M_2^2M_4^3-18720M_1M_2^4M_4^2
-15552M_1M_4^3M_5-36864M_2M_3M_6^2 \displaybreak[0]\\
& +300M_1^3M_2^5M_5-432M_2M_3M_5^3-2304M_1^3M_2M_6^2
-11808M_2^4M_3M_6 \displaybreak[0]\\
& -567M_2^2M_5^3M_1+36M_1^3M_2M_5^3+4608M_1^2M_3M_6^2
-576M_1M_2^3M_3^2M_5 \displaybreak[0]\\
& -768M_1M_2^3M_3^2M_4-6372M_1M_2^4M_4M_5
-1184M_1^3M_2^3M_4M_5+54144M_1M_2^3M_4M_6 \displaybreak[0]\\
& +3024M_2M_3M_4M_5^2
-12096M_2^3M_3M_4M_5+6912M_1M_3^2M_5M_6-576M_1^2M_2^4M_3M_4 \displaybreak[0]\\
& -720M_1^2M_2^4M_3M_5+1176M_1^2M_2^2M_3M_5^2
+3072M_1^3M_2^2M_4M_6+3168M_1^3M_2^2M_5M_6 \displaybreak[0]\\
& +1152M_1^2M_2^3M_3M_6+57600M_2^2M_3M_4M_6
+21888M_2M_3M_4^2M_5+3072M_1M_3^2M_4M_6 \displaybreak[0]\\
& +11664M_1M_2M_5^2M_6
+864M_1M_2^2M_4M_5^2+384M_1M_2M_3^2M_5^2-145152M_1M_2M_4^2M_6 \displaybreak[0]\\
& +3456M_1^3M_4M_5M_6+19728M_1M_2^2M_4^2M_5-576M_1^3M_2M_4^2M_5
-50688M_3M_4M_5M_6 \displaybreak[0]\\
& +14976M_2^2M_3M_5M_6-2304M_1^2M_3M_4^2M_5
-3456M_1^2M_3M_4M_5^2+1920M_1^2M_2^2M_3M_4^2 \displaybreak[0]\\
& +768M_1M_2^2M_3^2M_6+12384M_1M_2^3M_5M_6-198M_1M_2^8
+243M_1M_5^4-640M_3^3M_5^2 \displaybreak[0]\\
& -36M_1^3M_2^7-1024M_3^3M_4^2-900M_2^7M_3-192M_2^4M_3^3
-7680M_1^2M_2M_3M_4M_6 \displaybreak[0]\\
& -11520M_1^2M_2M_3M_5M_6-54144M_1M_2M_4M_5M_6
+768M_1M_2M_3^2M_4M_5 \displaybreak[0]\\
& +4224M_1^2M_2^2M_3M_4M_5+192M_1M_2^5M_3^2+603M_1M_2^6M_5
+1024M_2^2M_3^3M_4), \displaybreak[0]\\[0ex plus 2ex]
c_4 ={}& 3(-8667M_2^8-324M_1^4M_5^3-1492992M_5M_6^2
+20736M_1^4M_6^2-444M_1^4M_2^6 \displaybreak[0]\\
& -16236M_1^2M_2^7-2592M_4^2M_5^2-11826M_2^4M_5^2
+13248M_2^5M_3^2-186624M_4^3M_5 \displaybreak[0]\\
& +311040M_2^2M_4^3-5120M_3^4M_5
-8192M_3^4M_4-3981312M_4M_6^2+331776M_2^2M_6^2 \displaybreak[0]\\
& -189216M_2^4M_4^2-10368M_2^5M_6+17820M_2^6M_5+71280M_2^6M_4
+3072M_2^2M_3^4 \displaybreak[0]\\
& -5504M_1^3M_2^3M_3M_5-2187M_5^4+29376M_1^4M_2^2M_4M_5
-18944M_1^3M_2^3M_3M_4 \displaybreak[0]\\
& -48384M_1^4M_2M_4M_6-95040M_1^4M_2M_5M_6
+198144M_1M_2^3M_3M_6-2592M_1M_2^4M_3M_5 \displaybreak[0]\\
& -101952M_1M_2^4M_3M_4-20736M_1^2M_4M_5M_6
-92160M_1^2M_2M_3^2M_6-55296M_1^2M_3^2M_4M_5 \displaybreak[0]\\
& +33792M_1^2M_2^2M_3^2M_4+18816M_1^2M_2^2M_3^2M_5
+219456M_1^2M_2M_4^2M_5-112320M_1^2M_2^2M_5M_6 \displaybreak[0]\\
& +315648M_1M_2^2M_3M_4^2+4096M_1M_2M_3^3M_4
+1728M_1^3M_2M_3M_5^2+138240M_1^3M_3M_5M_6 \displaybreak[0]\\
& +4096M_1M_2M_3^3M_5-9216M_1^3M_2M_3M_4^2
+663552M_2M_4M_5M_6+55296M_1^3M_3M_4M_6 \displaybreak[0]\\
& +1052928M_1^2M_2^2M_4M_6-27216M_2^2M_5^2M_1M_3
+16848M_4M_5^2M_1^2M_2+67392M_4M_5^2M_1M_3 \displaybreak[0]\\
& +48384M_2M_3^2M_4M_5-179712M_1M_3M_4^2M_5
-95904M_1^2M_2^3M_4M_5+50688M_1^3M_2^2M_3M_6 \displaybreak[0]\\
& -5760M_1^2M_2^4M_3^2+15552M_1M_5^3M_3
+120528M_1^2M_2^5M_4-186624M_1^2M_2M_4^3 \displaybreak[0]\\
& -1492992M_1^2M_2M_6^2-175680M_1^2M_2^3M_4^2
-6912M_1^2M_3^2M_5^2-297504M_1^2M_2^4M_6 \displaybreak[0]\\
& -12960M_1^4M_4M_5^2+53136M_2^2M_5^2M_4-324M_1^2M_2M_5^3
-1568M_1^4M_2^4M_4 \displaybreak[0]\\
& -5708M_1^4M_2^4M_5-1728M_1^4M_2^3M_6
-5184M_1^4M_4^2M_5-10368M_2M_3^2M_5^2 \displaybreak[0]\\
& -209952M_6M_1^2M_5^2+4800M_1^3M_2^5M_3-82944M_2^3M_5M_6
+4428M_1^4M_2^2M_5^2 \displaybreak[0]\\
& +1327104M_1M_3M_6^2-155520M_2M_5^2M_6-96768M_2^3M_3^2M_4
-5184M_2^3M_3^2M_5 \displaybreak[0]\\
& +2488320M_2M_4^2M_6-405504M_3^2M_4M_6
-235008M_3^2M_5M_6+119808M_2^2M_3^2M_6 \displaybreak[0]\\
& -112752M_2^4M_4M_5-248832M_1M_3M_4^3+175104M_2M_3^2M_4^2+212544M_2^2M_4^2M_5 \displaybreak[0]\\
& -18432M_1^2M_3^2M_4^2+622080M_1^2M_4^2M_6-3072M_1M_2^3M_3^3
+9648M_1M_2^6M_3 \displaybreak[0]\\
& +36864M_1M_3^3M_6+14400M_1^4M_2^2M_4^2+2268M_2^3M_5^2M_1^2
+11988M_1^2M_2^5M_5 \displaybreak[0]\\
& +373248M_1M_2M_3M_5M_6+27648M_1M_2^2M_3M_4M_5
-866304M_1M_2M_3M_4M_6 \displaybreak[0]\\
& -580608M_2^3M_4M_6-11664M_4M_5^3-559872M_4^4+4860M_2^2M_5^3),\displaybreak[0]\\[0ex plus 2ex]
c_5 ={}& 4(-1024M_3^5+632448M_1M_2^3M_4^2+11988M_1^2M_2^5M_3
-23652M_2^4M_3M_5 \displaybreak[0]\\
& -6912M_2M_3^3M_5+16128M_2M_3^3M_4
-159408M_1^3M_4^2M_5-82944M_2^3M_3M_6 \displaybreak[0]\\
& +69120M_1^3M_3^2M_6
-18432M_1^2M_3^3M_4-4608M_1^2M_3^3M_5+6272M_1^2M_2^2M_3^3 \displaybreak[0]\\
& +1024M_1M_2M_3^4-5184M_1^5M_2M_4^2-1296M_1M_2^4M_3^2
+186624M_1M_2M_4^3 \displaybreak[0]\\
& -89856M_1M_3^2M_4^2+1029024M_1M_2^4M_6
-13365M_2^2M_5^2M_1^3+19440M_2^3M_5^2M_1 \displaybreak[0]\\
& -34992M_4M_5^2M_3
+14580M_2^2M_5^2M_3-5184M_1^4M_3M_4^2+15552M_1^5M_4M_6 \displaybreak[0]\\
& +42768M_1^5M_5M_6-185940M_1^3M_2^4M_4-2752M_1^3M_2^3M_3^2
-5184M_4^2M_5M_3 \displaybreak[0]\\
& +349920M_6M_1M_5^2+5225472M_1M_2M_6^2
+373248M_1M_4^2M_6+576M_1^5M_2^3M_5 \displaybreak[0]\\
& +26352M_1^5M_2^2M_6-8496M_1^5M_2^3M_4-5708M_1^4M_2^4M_3
+384768M_1^3M_2^3M_6 \displaybreak[0]\\
& +375408M_1^3M_2^2M_4^2+39852M_1^3M_5^2M_4-972M_1^4M_5^2M_3
+23328M_1M_5^2M_3^2 \displaybreak[0]\\
& -59616M_1M_2^5M_5-112752M_2^4M_3M_4
+2241M_1^3M_5M_2^4+40176M_1M_2^7 \displaybreak[0]\\
& +212544M_2^2M_3M_4^2+17820M_2^6M_3-419904M_6M_1^2M_3M_5
+4536M_1^2M_2^3M_3M_5 \displaybreak[0]\\
& -972M_1^2M_2M_5^2M_3-279936M_6M_2^2M_1M_5+248832M_6M_1^3M_2M_5
+13824M_1M_2^2M_3^2M_4 \displaybreak[0]\\
& -330480M_1M_2^5M_4-186624M_3M_4^3
-1492992M_3M_6^2-78336M_3^3M_6-15552M_1^3M_4^3 \displaybreak[0]\\
& +22977M_1^3M_2^6+1119744M_1^3M_6^2+243M_1^3M_5^3
-1728M_2^3M_3^3+2240M_1^5M_2^5 \displaybreak[0]\\
& -8748M_5^3M_3+33696M_1^2M_2M_3M_4M_5+29376M_1^4M_2^2M_3M_4
+8856M_1^4M_2^2M_3M_5 \displaybreak[0]\\
& -95040M_1^4M_2M_3M_6-25920M_1^4M_3M_4M_5
-20736M_1^2M_3M_4M_6 \displaybreak[0]\\
& +219456M_1^2M_2M_3M_4^2-311040M_2M_3M_5M_6
-6480M_1^5M_2M_4M_5 \displaybreak[0]\\
& +663552M_2M_3M_4M_6-95904M_1^2M_2^3M_3M_4
-112320M_1^2M_2^2M_3M_6 \displaybreak[0]\\
& +1728M_1^3M_2M_3^2M_5-27216M_1M_2^2M_3^2M_5
+106272M_2^2M_4M_5M_3 \displaybreak[0]\\
& +386208M_2^3M_5M_4M_1
-97200M_4M_5^2M_1M_2+67392M_1M_3^2M_4M_5 \displaybreak[0]\\
& +186624M_1M_2M_3^2M_6-4375296M_1M_2^2M_4M_6
-1451520M_1^3M_2M_4M_6\displaybreak[0]\\
& +746496M_6M_1M_4M_5 -611712M_1M_2M_4^2M_5
+18792M_1^3M_2^2M_4M_5),\displaybreak[0]\\[0ex plus 2ex]
c_6 ={}& 48(64M_1^3M_2M_3^3+1044M_1^6M_2^2M_4+44064M_1^4M_4M_6
+252M_2^3M_3^2M_1^2 \displaybreak[0]\\
& -1008M_2^2M_3^3M_1+6924M_1^4M_4M_2^3
-108M_1^4M_5M_3^2+249M_1^3M_2^4M_3 \displaybreak[0]\\
& +1728M_1M_5M_3^3+81M_1^3M_5^2M_3
-972M_1^4M_2^3M_5+1620M_2^2M_5M_3^2 \displaybreak[0]\\
& +810M_2^2M_5^2M_1^2+2496M_1M_3^3M_4
+3996M_1^2M_2^4M_5+5904M_2^2M_3^2M_4 \displaybreak[0]\\
& -6624M_1M_2^5M_3-84528M_1^2M_2^2M_4^2-23328M_6M_1^2M_3^2
+108M_1^6M_5M_2^2 \displaybreak[0]\\
& -17280M_2M_3^2M_6-3888M_4M_5M_3^2-324M_1^6M_4M_5
-17712M_1^3M_3M_4^2 \displaybreak[0]\\
& +64M_1^5M_2^3M_3-3024M_1^6M_2M_6
-15120M_6M_1^4M_2^2-15552M_6M_1^4M_5 \displaybreak[0]\\
& -142128M_6M_2^3M_1^2+285120M_2^2M_4M_6
+19440M_6M_2^2M_5-46656M_6M_4M_5 \displaybreak[0]\\
& +25920M_4^2M_5M_2+45360M_1^2M_4^2M_5-1440M_1^4M_3^2M_4
-12528M_1^4M_2M_4^2 \displaybreak[0]\\
& +492M_1^4M_2^2M_3^2+4752M_1^5M_3M_6+5832M_2M_4M_5^2
-1296M_2^7+11016M_2^5M_4 \displaybreak[0]\\
& -1314M_2^4M_3^2-15552M_2^3M_4^2
+42912M_1M_2^3M_3M_4-16848M_2^3M_5M_4 \displaybreak[0]\\
& -1296M_2^3M_5^2-2430M_4M_5^2M_1^2
-31104M_4^2M_6-17496M_6M_5^2-373248M_2M_6^2 \displaybreak[0]\\
& -236M_1^6M_2^4-128M_1^2M_3^4-38880M_1^2M_4^3-900M_2^5M_1^4
+2592M_2^5M_5 \displaybreak[0]\\
& -1458M_5^2M_3^2-653184M_1^2M_6^2-64152M_2^4M_6
-192M_2M_3^4 \displaybreak[0]\\
& -25812M_2^2M_4M_5M_1^2+4320M_2^3M_5M_3M_1
-108M_1^2M_5M_3^2M_2-67968M_1M_2M_4^2M_3 \displaybreak[0]\\
& +2088M_1^3M_2^2M_4M_3+77760M_6M_1M_5M_3-11664M_6M_1^2M_5M_2 \displaybreak[0]\\
& -720M_1^5M_2M_3M_4
-31104M_6M_2^2M_1M_3-2970M_1^3M_2^2M_5M_3+27648M_6M_1^3M_2M_3 \displaybreak[0]\\
& +575424M_6M_1^2M_4M_2+82944M_6M_1M_4M_3
+1872M_1^2M_3^2M_4M_2 \displaybreak[0]\\
& +4644M_1^4M_5M_4M_2+8856M_1^3M_4M_5M_3-21600M_1M_3M_4M_5M_2
-288M_3^2M_4^2 \displaybreak[0]\\
& -31104M_2M_4^3-5958M_1^2M_2^6+47538M_1^2M_2^4M_4),\displaybreak[0]\\[0ex plus 2ex]
c_7 ={}& 576(-288M_2^3M_3M_5+1728M_1M_2^2M_4M_5+1296M_2M_3M_4M_5
+60M_2^2M_3^3 \displaybreak[0]\\
& -1200M_1M_2M_3^2M_4-288M_1M_2^4M_5-2868M_1^2M_2^2M_3M_4
+48M_1M_3^4+8M_1^7M_2^3 \displaybreak[0]\\
& -108M_3^3M_5+5M_2^4M_1^5+108M_1^7M_6
-108M_1^5M_4^2+93312M_1M_6^2-4M_1^4M_3^3 \displaybreak[0]\\
& +240M_1M_2^3M_3^2-2304M_1M_2^4M_4-2028M_1^3M_2^3M_4
-1872M_2^3M_3M_4+444M_1^2M_2^4M_3 \displaybreak[0]\\
& +2880M_2M_3M_4^2+2016M_1M_2^2M_4^2
+3744M_1^3M_2M_4^2+5040M_1^2M_3M_4^2 \displaybreak[0]\\
& -3M_1^5M_4M_2^2+972M_6M_1^3M_5-11664M_6M_1^3M_4
-108M_1^4M_2^3M_3+72M_1^3M_2^3M_5 \displaybreak[0]\\
& -9M_1^5M_2^2M_5+12M_1^6M_2^2M_3-4M_1^2M_2M_3^3
+492M_1^3M_3^2M_4+27M_1^5M_5M_4 \displaybreak[0]\\
& -36M_1^7M_2M_4-36M_1^6M_3M_4-1728M_6M_1^4M_3+3780M_6M_1^3M_2^2+216M_6M_1^5M_2 \displaybreak[0]\\
& +15552M_6M_2^3M_1-3888M_3M_6M_5+4320M_6M_1M_3^2
-2592M_1M_4^2M_5-5184M_6M_4M_3 \displaybreak[0]\\
& +2160M_6M_2^2M_3-165M_2^2M_3^2M_1^3+9M_1^3M_5M_3^2
-144M_3^3M_4-324M_1^3M_4M_5M_2 \displaybreak[0]\\
& +180M_2^2M_3M_5M_1^2+516M_1^4M_2M_4M_3-67392M_6M_1M_4M_2
-1296M_6M_1^2M_3M_2 \displaybreak[0]\\
& -540M_1^2M_4M_5M_3+288M_1M_2^6+264M_1^3M_2^5
+288M_2^5M_3+10368M_1M_4^3), \displaybreak[0]\\[0ex plus 2ex]
c_8 ={}& 2304(-96M_1M_2^4M_3-48M_2^3M_3^2-1728M_4^3
+576M_1M_2^2M_3M_4-648M_3^2M_6 \displaybreak[0]\\
& +M_1^3M_3^3-27M_1^6M_6-2M_1^6M_2^3-9M_3^4
+432M_2^2M_4^2+24M_1^3M_2^3M_3-90M_1^2M_3^2M_4 \displaybreak[0]\\
& +30M_1^2M_2^2M_3^2-648M_1^2M_2M_4^2-11664M_6^2
-18M_1^4M_2^2M_4-3M_1^5M_2^2M_3 \displaybreak[0]\\
& +324M_1^3M_3M_6+7776M_2M_4M_6+9M_1^6M_2M_4
+9M_1^5M_3M_4-648M_1^2M_2^2M_6 \displaybreak[0]\\
& -864M_1M_3M_4^2-108M_1^3M_2M_3M_4
-1728M_6M_2^3+27M_1^4M_4^2+3M_1^4M_2^4 \displaybreak[0]\\
& +1944M_6M_1^2M_4
+360M_1^2M_2^3M_4+216M_2M_3^2M_4-48M_1^2M_2^5).
\end{align*}
\end{small}


\end{document}